\newtheorem{theorem}{Theorem}
\newtheorem{lemma}{Lemma}
\newtheorem{proposition}{Proposition}
\newtheorem{corollary}{Corollary}
\newtheorem{remark}{Remark}
\newcommand{\eqn}[1]{(\ref{eqn:#1})}
\newcommand{\rem}[1]{\hyperref[rem:#1]{Remark~\ref*{rem:#1}}}
\newcommand{\thm}[1]{\hyperref[thm:#1]{Theorem~\ref*{thm:#1}}}
\newcommand{\cor}[1]{\hyperref[cor:#1]{Corollary~\ref*{cor:#1}}}
\newcommand{\defn}[1]{\hyperref[defn:#1]{Definition~\ref*{defn:#1}}}
\newcommand{\lem}[1]{\hyperref[lem:#1]{Lemma~\ref*{lem:#1}}}
\newcommand{\prop}[1]{\hyperref[prop:#1]{Proposition~\ref*{prop:#1}}}
\newcommand{\fig}[1]{\hyperref[fig:#1]{Figure~\ref*{fig:#1}}}
\newcommand{\tab}[1]{\hyperref[tab:#1]{Table~\ref*{tab:#1}}}
\newcommand{\algo}[1]{\hyperref[algo:#1]{Algorithm~\ref*{algo:#1}}}
\renewcommand{\sec}[1]{\hyperref[sec:#1]{Section~\ref*{sec:#1}}}
\newcommand{\append}[1]{\hyperref[append:#1]{Appendix~\ref*{append:#1}}}
\newcommand{\fac}[1]{\hyperref[fac:#1]{Fact~\ref*{fac:#1}}}
\newcommand{\lin}[1]{\hyperref[lin:#1]{Line~\ref*{lin:#1}}}
\newcommand{\fnote}[1]{\hyperref[fnote:#1]{Footnote~\ref*{fnote:#1}}}
\newcommand\vecc[1]{\mathbf{#1}}
\def\>{\rangle}
\def\<{\langle}
\def\trans{^{\top}}
\newcommand{\B}{\mathbb{B}}
\newcommand{\R}{\mathbb{R}}
\newcommand{\C}{\mathbb{C}}
\newcommand{\E}{\mathbb{E}}
\newcommand{\A}{X}
\newcommand{\x}{w}
\let\var\relax
\DeclareMathOperator{\clip}{clip}
\DeclareMathOperator{\poly}{poly}
\DeclareMathOperator{\var}{Var}
\DeclareMathOperator{\MEB}{MEB}
\DeclareMathOperator{\SVM}{SVM}
\newcommand{\range}[1]{[#1]}
\newcommand{\hd}[1]{\vspace{4mm} \noindent \textbf{#1}}
\def \eps {\epsilon}
\begin{document}

\title{Sublinear quantum algorithms for training linear and\\kernel-based classifiers}

\author{Tongyang Li\thanks{tongyang@cs.umd.edu}\qquad\quad Shouvanik Chakrabarti\thanks{shouv@cs.umd.edu}\qquad\quad Xiaodi Wu\thanks{xwu@cs.umd.edu} \\
\small{Department of Computer Science, Institute for Advanced Computer Studies, and \\ Joint Center for Quantum Information and Computer Science, University of Maryland}}

\normalsize
\renewcommand\Authsep{  }
\renewcommand\Authands{ }

\date{}

\maketitle
\thispagestyle{empty}

\begin{abstract}
We investigate quantum algorithms for classification, a fundamental problem in machine learning, with provable guarantees. Given $n$ $d$-dimensional data points, the state-of-the-art (and optimal) classical algorithm for training classifiers with constant margin \cite{clarkson2012sublinear} runs in $\tilde{O}(n+d)$\footnote{$\tilde{O}(\cdot)$, $\tilde{\Omega}(\cdot)$, and $\tilde{\Theta}(\cdot)$ notations hide poly-logarithmic factors.} time. We design sublinear quantum algorithms for the same task running in $\tilde{O}(\sqrt{n} +\sqrt{d})$ time, a quadratic improvement in both $n$ and $d$. Moreover, our algorithms use the standard quantization of the classical input and generate the same classical output, suggesting minimal overheads when used as subroutines for end-to-end applications. We also demonstrate a tight lower bound (up to poly-log factors) and discuss the possibility of implementation on near-term quantum machines. As a side result, we also give sublinear quantum algorithms for approximating the equilibria of $n$-dimensional matrix zero-sum games with optimal complexity $\tilde{\Theta}(\sqrt{n})$.
\end{abstract}

%%%%%%%%%%%%%%%%%%%%%%%%%%%%%%%%%%%%%%%%%%%%%%%%%%%%%%%%%%%%%%%%%%%%%%%%%%%%%%

\section{Introduction}\label{sec:intro}
\begin{paragraph}{Motivations.}
Classification is a fundamental problem of supervised learning, which takes a training set of data points of known classes as inputs and aims to training a model for predicting the classes of future data points. It is also ubiquitous due to its broad connections and applications to computer vision, natural language processing, statistics, etc.

A fundamental case of classification is \emph{linear classification}, where we are given $n$ data points $\A_{1},\ldots,\A_{n}$ in $\R^{d}$ and a label vector $y\in\{-1,1\}^{n}$. The goal is to find a separating hyperplane, i.e., a unit vector $\x$ in $\R^{d}$, such that
\begin{align}\label{eqn:linear-classification-0}
y_{i}\cdot \A_{i}\trans \x\geq 0\quad\forall\,i\in\range{n}.
\end{align}
By taking $\A_{i}\leftarrow (-1)^{y_{i}}\A_{i}$, it reduces to a \emph{maximin} problem, i.e., $\max_{\x} \min_{i} \A_{i}\trans \x\geq 0$.
The approximation version of linear classification is to find a unit vector
$\bar{\x}\in\R^{d}$ so that
\begin{align}\label{eqn:classification-maximin}
\A_{i}\trans\bar{\x}\geq \max_{\x\in\R_{d}}\min_{i'\in\range{n}}\A_{i'}\trans \x-\epsilon\quad\forall\,i\in\range{n},
\end{align}
i.e., $\bar{\x}$ approximately solves the maximin problem. More generally, we can regard a (nonlinear) classifier as a \emph{kernel-based} classifier by replacing $\A_{i}$ by $\Psi(\A_{i})$ ($\Psi$ being a kernel function). We will focus on algorithms finding approximate classifiers (in the sense of \eqn{classification-maximin}) with \emph{provable guarantees}.

The Perceptron Algorithm for linear classification is one of the oldest algorithms studied in machine learning \cite{novikoff1963convergence,minsky2017perceptrons}, which runs in time $O(nd/\epsilon^{2})$ for finding an $\bar{\x}\in\R^{d}$ satisfying \eqn{classification-maximin}. The state-of-the-art classical result along this line \cite{clarkson2012sublinear} solves linear classification in time $\tilde{O}((n+d)/\epsilon^{2})$. A careful reader might notice that the input to linear classification is $n$ $d$-dimensional vectors with total size $O(nd)$. Hence, the result of \cite{clarkson2012sublinear} is \emph{sub-linear} in its input size. To make it possible, \cite{clarkson2012sublinear} assumes the following entry-wise input model:

\hd{Input model:} given any $i \in [n]$ and $j\in [d]$, the $j$-th entry of $\A_i$ can be recovered in $O(1)$ time.
\vspace{4mm}

The output of \cite{clarkson2012sublinear} is an \emph{efficient} classical representation of $\bar{\x}$ in the sense that every entry of $\bar{\x}$ can be recovered with $\tilde{O}(1)$ cost. It is no surprise that $\bar{\x}$ per se gives such a representation. However, there could be more \emph{succinct and efficient representations} of $\bar{\x}$, which could be reasonable alternatives of $\bar{\x}$ for sub-linear algorithms that run in time less the dimension of $\bar{\x}$ (as we will see in the quantum case). The complexity of \cite{clarkson2012sublinear} is also optimal (up to poly-logarithmic factors) in the above input/output model as shown by the same paper.

Recent developments in quantum computation, especially in the emerging topic of ``quantum machine learning" (see the surveys \cite{biamonte2017quantum,arunachalam2017guest,schuld2015introduction}), suggest that quantum algorithms might offer significant speed-ups for optimization and machine learning problems. In particular, a quantum counterpart of the Perceptron algorithm has been proposed in \cite{kapoor2016quantum} with improved time complexity from $O(nd/\epsilon^2)$ to $\tilde{O}(\sqrt{n}d/\epsilon^{2})$ (details in related works). Motivated both by the significance of classification and the promise of quantum algorithms,  we investigate the \emph{optimal} quantum algorithm for classification. Specifically, we aim to design a quantum counterpart of \cite{clarkson2012sublinear}.

It is natural to require that quantum algorithms make use of the classical input/output model as much as possible to make the comparison fair. In particular, it is favorable to avoid the use of too powerful input data structure which might render any finding of quantum speedup inconclusive, especially in light of a recent development of quantum-inspired classical machine learning algorithms (e.g., \cite{tang2018quantum}). Our choice of input/output models for quantum algorithms is hence almost the same as the classical one, except we allow \emph{coherent} queries to the entries of $\A_i$:

\hd{Quantum input model:} given any $i \in [n]$ and $j\in [d]$, the $j$-th entry of $\A_i$ can be recovered in $O(1)$ time \emph{coherently}.
\vspace{4mm}

Coherent queries allow the quantum algorithm to query many locations in super-position, which is a \emph{standard} assumption that accounts for many quantum speed-ups (e.g., Grover's algorithm \cite{grover1996fast}). A more precise definition is given in \sec{prelim}.

On the other side, our output is exactly the same as classical algorithms, which guarantees no overhead when using our quantum algorithms as subroutines for any applications.
\end{paragraph}

\begin{paragraph}{Contributions.}
Our main contribution is a \emph{tight} characterization (up to poly-log factors) of quantum algorithms for various classification problems in the aforementioned input/output model.

\begin{theorem}[Main theorem]\label{thm:main}
Given $\epsilon=\Theta(1)$, we have quantum algorithms that return an efficient representation of  $\bar{\x}\in\B_{d}$ for the following problems\footnote{Here $\B_{d}$ is the unit ball in $\R^{d}, i.e., \B_{d}:=\big\{a\in\R^{d}\mid \sum_{i\in\range{d}}|a_{i}|^{2}\leq 1\big\}$.}, respectively, with complexity $\tilde{O}(\sqrt{n}+\sqrt{d})$ and high success probability:
\vspace{-2mm}
\begin{itemize}[leftmargin=*]
\item Linear classification (\sec{perceptron}):
\begin{align}
\min_{i\in\range{n}}\A_{i}\trans\bar{\x}\geq \max_{\x\in\B_{d}}\min_{i\in\range{n}}\A_{i}\trans \x-\epsilon.
\end{align}

\item Kernel-based classification (\sec{kernel}):
\begin{align}
\min_{i\in\range{n}}\<\Psi(\A_{i}),\bar{\x}\>\geq\max_{\x\in\B_{d}}\min_{i\in\range{n}}\<\Psi(\A_{i}),\x\>-\epsilon,
\end{align}
where $k(a,b):=\<\Psi(a),\Psi(b)\>$ can be the polynomial kernel $k_{q}(a,b)=(a\trans b)^{q}$ or the Gaussian kernel $k_{\text{Gauss}}(a,b)=\exp(-\|a-b\|^{2})$.

\item Minimum enclosing ball (\sec{MEB}):
\begin{align}
\max_{i\in\range{n}}\|\bar{\x}-\A_{i}\|^{2}\leq\min_{\x\in\R^{d}}\max_{i\in\range{n}}\|\x-\A_{i}\|^{2}+\epsilon.
\end{align}

\item $\ell^{2}$-margin SVM (\sec{L2-SVM}):
\begin{align}\label{eqn:SVM-goal}
\min_{i\in\range{n}}(\A_{i}\trans\bar{\x})^{2}\geq\max\limits_{\x\in\R^{d}}\min_{i\in\range{n}}2\A_{i}\trans \x-\|\x\|^{2}-\epsilon.
\end{align}
\end{itemize}
On the other hand, we show that it requires $\Omega(\sqrt{n}+\sqrt{d})$ queries to the quantum input model to prepare such $\bar{\x}$ for these classification problems (\sec{lower}).
\end{theorem}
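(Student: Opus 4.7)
The plan is to prove the upper and lower bounds separately, and to handle all four classification tasks through a single unified primal-dual template that extends the classical approach of \cite{clarkson2012sublinear}. Each problem is (or can be rewritten as) a bilinear minimax problem of the form $\max_{w\in\mathcal{K}}\min_{p\in\Delta_{n}}\sum_{i} p_{i}\, f_{i}(w)$, where $\mathcal{K}$ is the unit ball (or a scaled copy), $\Delta_{n}$ is the $n$-simplex, and $f_{i}(w)=\A_{i}\trans w$ in the linear case (respectively $\<\Psi(\A_{i}),w\>$ for kernels, and a squared-distance functional for MEB/$\ell^{2}$-SVM). I will run $T=\tilde O(1/\eps^{2})$ iterations of stochastic gradient ascent on $w$, driven at each step by a single index $i_{t}$ sampled from the current MWU distribution $p_{t}$, paired with a multiplicative weights update on $p$ driven by $\A_{j_{t}}\trans w_{t}$ for a uniformly random $j_{t}\in[n]$. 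The output is an efficient representation of the time-average $\bar w=\frac{1}{T}\sum_{t\le T}w_{t}$ as a sparse linear combination of the sampled $\A_{i_{t}}$'s, so that any coordinate is recoverable in polylogarithmic time.

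The quantum speedups come from three primitives plugged into this template: (i) quantum Gibbs-style sampling to draw $i_{t}\sim p_{t}$ in $\tilde O(\sqrt n)$ time when $p_{t}(i)\propto\exp\bigl(-\eta\sum_{s<t}\A_{i}\trans w_{s}\bigr)$; (ii) quantum amplitude/inner-product estimation to approximate $\A_{j_{t}}\trans w_{t}$ to $O(\eps)$ additive error in $\tilde O(\sqrt d)$ time, using the fact that $w_{t}$ is an $O(T)$-sparse combination of rows accessible through the quantum input oracle; and (iii) quantum minimum-finding to verify and report the final solution. Multiplied by $T=\tilde O(1)$ iterations (for $\eps=\Theta(1)$), this gives $\tilde O(\sqrt n+\sqrt d)$ overall. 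The kernel case fits the same framework by accessing $\Psi(\A_{i})$ implicitly via evaluations $k(\A_{i},\A_{j})$, the inner-product step becoming a quantum estimate of $\sum_{s}\alpha_{s}\,k(\A_{i_{s}},\A_{j_{t}})$, whose per-call cost remains $\tilde O(\sqrt d)$ for both the polynomial and the Gaussian kernel (each being a $d$-sum or $d$-fold product). MEB and $\ell^{2}$-SVM are folded in via the standard augmentation $\A_{i}\mapsto(\A_{i},1)$, which rewrites their quadratic objectives as a linear minimax on an enlarged ball.

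The main obstacle will be executing each iteration \emph{coherently} enough that the quantum primitives actually meet their advertised sublinear cost when composed. To Gibbs-sample $i_{t}\sim p_{t}$ one needs a unitary that, on input $\ket{i}$, prepares an amplitude-encoded estimate of $p_{t}(i)$; yet $p_{t}$ is defined through an $O(T)$-fold accumulation of inner products $\A_{i}\trans w_{s}$, so I must maintain a succinct data structure that answers such oracle queries in polylogarithmic time while surviving the error tolerances of the surrounding amplitude amplification. I expect to address this by placing a median-of-means boost inside each amplitude-estimation call and absorbing the resulting additive noise into a robust regret analysis in the style of Grigoriadis--Khachiyan and \cite{clarkson2012sublinear}, showing that $O(\log(1/\delta))$ repetitions suffice to union-bound the failure probabilities across all $T$ iterations.

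For the matching $\Omega(\sqrt n+\sqrt d)$ lower bound, the plan is two independent reductions to standard quantum query lower bounds; since $\sqrt n+\sqrt d=\Theta(\max\{\sqrt n,\sqrt d\})$, it suffices to exhibit one hard family for each term. The $\Omega(\sqrt n)$ branch reduces from $n$-item quantum search: plant a hidden index $i^{\star}\in[n]$ inside a family of row-sparse inputs $\A_{1},\ldots,\A_{n}$ so that any $\eps$-approximate classifier forces the recovery of $i^{\star}$, and then invoke the BBBV $\Omega(\sqrt n)$ query bound. The $\Omega(\sqrt d)$ branch reduces from Grover search over the $d$ entries of a single row, where the location of the unique nonzero coordinate determines $\bar w$ up to accuracy $\eps$. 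Each bound is immediately a time lower bound, and their maximum is $\Omega(\sqrt n+\sqrt d)$.
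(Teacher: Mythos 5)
There is a genuine gap in the upper bound: the way you compose your two quantum primitives yields $\tilde{O}(\sqrt{nd})$ per iteration, not $\tilde{O}(\sqrt{n}+\sqrt{d})$. Your Gibbs-style sampler for $i_{t}\sim p_{t}$ needs a \emph{coherent} oracle that, on input $i$, returns (an encoding of) $\sum_{s<t}\A_{i}\trans w_{s}$; you propose to realize each inner product $\A_{i}\trans w_{s}$ by quantum amplitude/inner-product estimation at cost $\tilde{O}(\sqrt{d})$. But that estimation runs \emph{inside} the amplitude amplification over $[n]$, so the per-iteration cost multiplies to $\tilde{O}(\sqrt{n}\cdot t\cdot\sqrt{d})$. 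You flag this obstacle yourself ("a succinct data structure that answers such oracle queries in polylogarithmic time"), but median-of-means boosting only controls failure probability, not the per-call cost, and no polylogarithmic-time oracle for the exact inner products exists in this input model. The missing idea — and the crux of the paper's algorithm, inherited from \cite{clarkson2012sublinear} — is to draw, once per iteration and \emph{classically}, a single coordinate $j_{s}\in[d]$ with probability $\x_{s}(j)^{2}/\|\x_{s}\|^{2}$ ($\ell^{2}$ sampling of the dual iterate), and to feed the MWU with the one-entry unbiased estimator $\A_{i}(j_{s})\|\x_{s}\|^{2}/\x_{s}(j_{s})$, clipped at $1/\eta$ to control its variance. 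The coherent weight oracle then needs only $O(t)=\tilde{O}(1)$ single-entry queries to $O_{\A}$ per invocation, which decouples the two axes: state preparation over $[n]$ costs $\tilde{O}(\sqrt{n})$ per iteration, while the $\sqrt{d}$ cost appears only in the separate, non-nested tasks of estimating $\|y_{t}\|$ and preparing $|y_{t}\>$ for the $\ell^{2}$ sample. Relatedly, your reduction of MEB and $\ell^{2}$-SVM to a linear minimax via the augmentation $\A_{i}\mapsto(\A_{i},1)$ does not linearize the $-\|\x\|^{2}$ term; the paper instead keeps the quadratic objective and builds a modified unbiased estimator for $b_{i}+2\A_{i}\x-\|\x\|^{2}$, exploiting strong convexity in the regret bound.

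Your lower-bound plan is essentially the paper's: two reductions to Grover/BBBV search, one hiding a row index for $\Omega(\sqrt{n})$ and one hiding a column index for $\Omega(\sqrt{d})$. The paper realizes both with a single pair of two-column instances and verifies, with explicit constants, that any $\epsilon$-approximate solution reveals the hidden indices; your sketch is at the right level but would need that verification to be a proof.
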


Our matching upper and lower bounds $\sqrt{n}+\sqrt{d}$ give a \emph{quadratic} improvement in both $n$ and $d$ comparing to the classical state-of-the-art results in \cite{clarkson2012sublinear}.

Technically, our result is also inspired by the recent development of quantum semidefinite program (SDP) solvers (e.g., \cite{brandao2017quantum}) which provide quantum speed-ups for approximating zero-sum games for the purpose of solving SDPs. Note that such a connection was leveraged classically in another direction in a follow-up work of \cite{clarkson2012sublinear} for solving SDPs \cite{NIPS2011_Garber_Hazan}. However, our algorithm is even simpler because we only use simple quantum state preparation instead of complicated quantum operations in quantum SDP solvers; this is because quantum state preparation is a direct counterpart of the $\ell^{2}$ sampling used in \cite{clarkson2012sublinear} (see \sec{techniques} for details). In a nutshell, our result is a demonstration of quantum speed-ups for sampling-based classical algorithms.

Moreover,  our algorithms are hybrid classical-quantum algorithms where the quantum part is isolated pieces of state preparation connected by classical processing. In addition, special instances of these state preparation might be physically realizable as suggested by some work-in-progress~\cite{Brandao-Talk}. All of the above suggest the possibility of implementing these algorithms on near-term quantum machines~\cite{Preskill2018NISQ}.

In general, we deem our result as a proposal of one end-to-end quantum application in machine learning, with both provable guarantees and the perspective of implementation (at least in prototype) on near-term quantum machines.
\end{paragraph}

\begin{paragraph}{Application to matrix zero-sum games.}
As a side result, our techniques can be applied to solve matrix zero-sum games. To be more specific, the input of the zero-sum game is a matrix $\A\in\R^{n_{1}\times n_{2}}$ and an $\epsilon>0$, and the goal is to find $a\in\R^{n_{1}}$ and $b\in\R^{n_{2}}$ such that\footnote{Here $\Delta_{n}$ is the set of probability distributions on $\range{n}$, i.e., $\Delta_{n}:=\big\{a\in\R^{n}\mid a_{i}\geq 0\ \ \forall\,i\in\range{n}, \sum_{i\in\range{n}}a_{i}=1\big\}$.}
\begin{align}\label{eqn:zero-sum-game-intro}
a^{\dagger}\A b\geq\max_{p\in\Delta_{n_{1}}}\min_{q\in\Delta_{n_{2}}}p^{\dagger}\A q-\epsilon.
\end{align}
If we are given the quantum input model of $A$, we could output such $a$ and $b$ as classical vectors\footnote{In fact, $x$ and $y$ are classical vectors with succinct representations; see more details at \rem{zero-sum}.} with complexity $\tilde{O}(\sqrt{n_{1}+n_{2}}/\epsilon^{4})$ (see \thm{zero-sum-upper}). When $\epsilon=\Theta(1)$, our quantum algorithm is optimal as we prove an $\Omega(\sqrt{n_{1}+n_{2}})$ quantum lower bound (see \thm{zero-sum-lower}).
\end{paragraph}

\begin{paragraph}{Related works.}
We make the following comparisons with existing literatures in quantum machine learning.
\begin{itemize}[leftmargin=*]
\item The most relevant result is the quantum perceptron models in \cite{kapoor2016quantum}. The classical perceptron method \cite{novikoff1963convergence,minsky2017perceptrons} is a pivotal linear classification algorithm. In each iteration, it checks whether \eqn{linear-classification-0} holds; if not, then it searches for a violated constraint $i_{0}$ (i.e., $y_{i_{0}}\A_{i_{0}}\trans\bar{\x}<0$) and update $\bar{\x}\leftarrow\bar{\x}+\A_{i_{0}}$ (up to normalization). This classical perceptron method has complexity $\tilde{O}(nd/\epsilon^{2})$; the quantum counterpart in \cite{kapoor2016quantum} improved the complexity to $\tilde{O}(\sqrt{n}d/\epsilon^{2})$ by applying Grover search \cite{grover1996fast} to find a violated constraint. In contrast, we quantize the sublinear algorithm for linear classification in \cite{clarkson2012sublinear} with techniques inspired by quantum SDP solvers~\cite{brandao2017quantum}. As a result,  we establish a better quantum complexity $\tilde{O}(\sqrt{n}+\sqrt{d})$.

In addition, \cite{kapoor2016quantum} relies on an unusual input model where a data point in $\R^{d}$ is represented by concatenating the the binary representations of the $d$ floating point numbers; if we were only given standard inputs with entry-wise queries to the coordinates of data points, we need a cost of $\Omega(d)$ to transform the data into their input form, giving the total complexity $\tilde{O}(\sqrt{n}d)$.

The same group of authors also gave a quantum algorithm for nearest-neighbor classification with complexity $\tilde{O}(\sqrt{n})$ \cite{wiebe2015quantum}. This complexity also depends on the sparsity of the input data; in the worst case where every data point has $\Theta(d)$ nonzero entries, the complexity becomes $\tilde{O}(\sqrt{n}d^{2})$.

\item There have been rich developments on quantum algorithms for linear algebraic problems. One prominent example is the quantum algorithm for solving linear systems \cite{harrow2009quantum,childs2015quantum}; in particular, they run in time $\poly(\log d)$ for any sparse $d$-dimensional linear systems.
These linear system solvers are subsequently applied to machine learning applications such as cluster assignment \cite{lloyd2013supervised}, support vector machine (SVM) \cite{rebentrost2014QSVM}, etc.

However, these quantum algorithms have two drawbacks. First, they require the input matrix to be \emph{sparse} with efficient access to nonzero elements, i.e., every row/column of the matrix has at most $\poly(\log d)$ nonzero elements and their indexes can be queried in $\poly(\log d)$ time. Second, the outputs of these algorithms are quantum states instead of classical vectors, and it takes $\Omega(d)$ copies of the quantum state to reveal one entry of the output in the worst case. More caveats are listed in \cite{aaronson2015read}.

In contrast, our quantum algorithms do not have the sparsity constraint and work for arbitrary input data, and the outputs of our quantum algorithms are succinct but efficient classical representations of vectors in $\R^{d}$, which can be directly used for classical applications.

\item There are two lines of quantum machine learning algorithms with different input requirements. One of them is based on quantum principal component analysis \cite{lloyd2013quantum} and requires purely quantum inputs.

Another line is the recent development of quantum-inspired classical poly-logarithmic time algorithms for various machine learning tasks such as recommendation systems \cite{tang2018quantum}, principal component analysis \cite{tang2018quantum2}, solving linear systems \cite{chia2018quantum,gilyen2018quantum}, SDPs \cite{chia2018SDP}, and so on. These algorithms follow a Monte-Carlo approach for low-rank matrix approximation \cite{frieze2004fast} and assume the ability to take samples according to the spectral norms of all rows. In other words, these results enforce additional requirements on their input: the input matrix should not only be low-rank but also be preprocessed as the sampling data structure.

\item There are also a few heuristic quantum machine learning approaches for classification \cite{havlicek2018supervised, farhi2018classification, KL18} without theoretical guarantees. We, however, look forward to further experiments based on their proposals.
\end{itemize}
\end{paragraph}

%%%%%%%%%%%%%%%%%%%%%%%%%%%%%%%%%%%%%%%%%%%%%%%%%%%%%%%%%%%%%%%%%%%%%%%%%%%%%%

\section{Preliminaries}\label{sec:prelim}
\begin{paragraph}{Basic notations in quantum computing.}
Quantum mechanics can be formulated in terms of linear algebra. Given any complex Euclidean space $\C^{d}$, we define its computational basis by $\{\vec{e}_{0},\ldots,\vec{e}_{d-1}\}$, where $\vec{e}_{i}=(0,\ldots,1,\ldots,0)\trans $ with the $(i+1)^{\text{th}}$ entry being 1 and other entries being 0. These basic vectors are usually written by \textbf{Dirac notation}: we write $\vec{e}_{i}$ as $|i\>$ (called a ``ket"), and write $\vec{e}_{i}\trans $ as $\<i|$ (called a ``bra").

\emph{Quantum states with dimension $d$} are represented by unit vectors in $\C^{d}$: i.e., a vector $|v\>=(v_{0},\ldots,v_{d-1})\trans $ is a quantum state if $\sum_{i=0}^{d-1}|v_{i}|^{2}=1$. For each $i$, $v_{i}$ is called the \emph{amplitude} in $|i\>$. If there are at least two non-zero amplitudes, quantum state $|v\>$ is in \emph{superposition} of the computational basis, a fundamental feature in quantum mechanics.

\emph{Tensor product} of quantum states is their Kronecker product: if $|u\>\in\C^{d_{1}}$ and $|v\>\in\C^{d_{2}}$, then $|u\>\otimes|v\>\in\C^{d_{1}}\otimes\C^{d_{2}}$ is
\begin{align}
|u\>\otimes|v\>=(u_{0}v_{0},u_{0}v_{1},\ldots,u_{d_{1}-1}v_{d_{2}-1})\trans.
\end{align}
The basic element in classical computers is one bit; similarly, the basic element in quantum computers is one \emph{qubit}, which is a quantum state in $\C^{2}$. Mathematically, a qubit state can be written as $a|0\>+b|1\>$ for some $a,b\in\C$ such that $|a|^{2}+|b|^{2}=1$. An $n$-qubit state can be written as $|v_{1}\>\otimes\cdots\otimes|v_{n}\>$, where each $|v_{i}\>$ ($i\in\range{n}$) is a qubit state; $n$-qubit states are in a Hilbert space of dimension $2^{n}$.

Operations in quantum computation are \emph{unitary transformations} and can be stated in the circuit model\footnote{Uniform circuits have equivalent computational power as Turing machines; however, they are more convenient to use in quantum computation.} where a \emph{$k$-qubit gate} is a unitary matrix in $\C^{2^{k}}$. It is known that two-qubit gates are \emph{universal}, i.e., every $n$-qubit gate can be written as composition of a sequenece of two-qubit gates. Thus, one usually refers to the \emph{number of two-qubit gates} as the \emph{gate complexity} of quantum algorithms.
\end{paragraph}

\begin{paragraph}{Quantum oracle.}
Quantum access to the input data (referred as quantum oracles) needs to be reversible and allows access to different parts of the input data in \emph{superposition} (the essence of quantum speed-ups). Specifically, to access elements in an $n\times d$ matrix $\A$, we exploit an oracle $O_{\A}$ (a unitary on $\C^{n}\otimes \C^{d}\otimes\C^{d_{\text{acc}}}$) such that
\begin{align}\label{eqn:oracle-defn}
O_{\A}(|i\>\otimes|j\>\otimes |z\>)=|i\>\otimes|j\>\otimes|z\oplus \A_{ij}\>
\end{align}
for any $i\in\range{n}$, $j\in\range{d}$ and $z\in\C^{d_{\text{acc}}}$ such that $\A_{ij}$ can be represented in $\C^{d_{\text{acc}}}$. Intuitively, $O_{\A}$ reads the entry $\A_{ij}$ and stores it in the third register. However, to make $O_{\A}$ reversible (and unitary), $O_{\A}$ applies the XOR operation ($\oplus$) on the third register.
Note that $O_{\A}$ is a natural unitary generalization of classical random access to $\A$, or in cases when any entry of $\A$ can be efficiently read.
However, it is potentially stronger when queries become linear combinations of basis vectors, e.g., $\sum_{k} \alpha_{k} |i_{k}\> \otimes |j_{k}\>$.  This is technically how to make superposition of different queries in quantum.

We summarize the quantum notations as follows.
\begin{table}[H]
\centering
\resizebox{0.65\columnwidth}{!}{%
\begin{tabular}{|c||c|c|}
\hline
 & Classical & Quantum \\ \hline\hline
Ket and bra & $\vec{e}_{i}$ and $\vec{e}_{i}\trans$ & $|i\>$ and $\<i|$ \\ \hline
Basis & $\{\vec{e}_{0},\ldots,\vec{e}_{d-1}\}$ & $\{|0\>,\ldots,|d-1\>\}$ \\ \hline
State & $\vec{v}=(v_{0},\ldots,v_{d-1})\trans$ & $|v\>=\sum_{i=0}^{d-1}v_{i}|i\>$ \\ \hline
Tensor & $\vec{u}\otimes\vec{v}$ & $|u\>\otimes|v\>$ or $|u\>|v\>$ \\ \hline
Oracle & $\x=(\A_{ij})_{i,j=1}^{n}$ & $O_{\A}|i\>|j\>|z\>=|i\>|j\>|z\oplus \A_{ij}\>$ \\ \hline
\end{tabular}
}
\caption{Summary of quantum notations used in this paper.}
\label{tab:notations}
\end{table}
\end{paragraph}

\begin{paragraph}{Quantum complexity measure.}
We assume that a single query to the oracle $O_{\A}$ has a unit cost. \emph{Quantum query complexity} is defined as the total counts of oracle queries, and \emph{quantum gate complexity} is defined as the total counts of oracle queries and two-qubit gates.
\end{paragraph}

\begin{paragraph}{Notations.}
Throughout this paper, we denote $\vecc{1}_{n}$ to be the $n$-dimensional all-one vector, and $\A\in\R^{n\times d}$ to be the matrix whose entry in the intersection of its $i^{\text{th}}$ row and $j^{\text{th}}$ column is $\A_{i}(j)$ for all $i\in\range{n}$, $j\in\range{d}$. Without loss of generality, we assume $\A_{1},\ldots,\A_{n}\in\B_{d}$, i.e., all the $n$ data points (also the $n$ rows of $\A$) are normalized to have $\ell^{2}$-norm at most 1.
\end{paragraph}

%%%%%%%%%%%%%%%%%%%%%%%%%%%%%%%%%%%%%%%%%%%%%%%%%%%%%%%%%%%%%%%%%%%%%%%%%%%%%%

\section{Linear classification}\label{sec:perceptron}
\subsection{Techniques}\label{sec:techniques}
At a high level, our quantum algorithm leverages ideas from both classical and quantum algorithm design. We use a primal-dual approach under the multiplicative weight framework \cite{freund1999adaptive}, in particular its improved version in \cite{clarkson2012sublinear} by sampling the update of weight vectors. An important observation of ours is that such classical algorithms can be accelerated significantly in quantum computation, which relies on a seminal technique in quantum algorithm design: amplitude amplification and estimation \cite{grover1996fast,brassard2002quantum}.

\begin{paragraph}{Multiplicative weight under a primal-dual approach.}
Note that linear classification is essentially a minimax problem (zero-sum game); by strong duality, we have
\begin{align}
\sigma=\max_{\x\in\R_{d}}\min_{p\in\Delta_{n}}p\trans \A\x=\min_{p\in\Delta_{n}}\max_{\x\in\R_{d}}p\trans\A\x.
\end{align}
To find its equilibrium point, we adopt an online primal-dual approach with $T$ rounds; at round $t\in\range{T}$, the primal computes $p_{t}\in\Delta_{n}$ and the dual computes $\x_{t}\in\R_{d}$, both based on $p_{\tau}$ and $\x_{\tau}$ for all $\tau\in\range{t-1}$. After $T$ rounds, the average solution $\bar{\x} = \frac{1}{T}\sum_{t=1}^{T}\x_t$ approximately solves the zero-sum game with high probability, i.e., $\min_{p\in\Delta_{n}}p\trans\A\bar{\x}\geq\sigma-\epsilon$.

For the primal problem, we pick $p_{t}$ by the \emph{multiplicative weight} (MW) method. Given a sequence of vectors $r_{1},\ldots,r_{T}\in\R^{n}$, MW sets $w_{1}:=\vecc{1}_{n}$ and for all $t\in\range{T}$, $p_{t}:=w_{t}/\|w_{t}\|_{1}$ and $w_{t+1}(i):=w_{t}(i)f_{\text{w}}(-\eta r_{t}(i))$ for all $i\in\range{n}$, where $f_{\text{w}}$ is a weight function and $\eta$ is the parameter representing the step size. MW promises an upper bound on $\sum_{t=1}^{T}p_{t}\trans r_{t}$, whose precise form depends on the choice of the weight function $f_{\text{w}}$. The most common update is the \emph{exponential weight update}: $f_{\text{w}}(x)=e^{-x}$ \cite{freund1999adaptive}, but in this paper we use a quadratic weight update suggested by \cite{clarkson2012sublinear}, where $w_{t+1}(i):=w_{t}(i)(1-\eta r_{t}(i)+\eta^{2} r_{t}(i)^{2})$. In our primal problem, we set $r_{t}=\A\x_{t}$ for all $t\in\range{T}$ to find $p_{t}$.

For the dual problem, we pick $\x_{t}$ by the \emph{online gradient descent} method \cite{zinkevich2003online}. Given a set of vectors $q_{1},\ldots,q_{T}\in\R^{d}$ such that $\|q_{i}\|_{2}\leq 1$. Let $\x_{0}:=\vecc{0}_{d}$, and $y_{t+1}:=\x_{t}+\frac{1}{\sqrt{T}}q_{t}$, $\x_{t+1}:=\frac{y_{t+1}}{\max\{1,\|y_{t+1}\|\}}$. Then
\begin{align}\label{eqn:online-gradient-descent}
\max_{\x\in\B_{d}}\sum_{t=1}^{T}q_{t}\trans \x-\sum_{t=1}^{T}q_{t}\trans \x_{t}\leq 2\sqrt{T}.
\end{align}
This can be regarded as a \emph{regret} bound, i.e., $\sum_{t=1}^{T}q_{t}\trans \x_{t}$ has at most a regret of $2\sqrt{T}$ compared to the best possible choice of $\x$. In our dual problem, we set $q_{t}$ as a sample of rows of $\A$ following the distribution $p_{t}$.

This primal-dual approach gives a correct algorithm with only $T=\tilde{O}(1/\epsilon^{2})$ iterations. However, the primal step runs in $\Theta(nd)$ time to compute $\A\x_{t}$. To obtain an algorithm that is \emph{sublinear} in the size of $\A$, a key observation by \cite{clarkson2012sublinear} is to replace the precise computation of $\A\x_{t}$ by an unbiased random variable. This is achieved via $\ell^{2}$ sampling of $\x$: we pick $j_t\in\range{d}$ by $j_t=j$ with probability $\x_t(j)^2/\norm{\x_t}^2$, and for all $i\in\range{n}$ we take $\tilde{v}_t(i)=\A_i(j_t)\norm{\x_t}^2/\x_t(j_t)$. The expectation of the random variable $\tilde{v}_t(i)$ satisfies
\begin{align}
\E[\tilde{v}_t(i)]=\sum_{j=1}^{d}\frac{\x_{t}(j)^{2}}{\|\x_{t}\|^{2}}\frac{\A_{i}(j)\|\x_{t}\|^{2}}{\x_{t}(j)}=\A_{i}\x_{t}.
\end{align}
In a nutshell, the update of weight vectors in each iteration need not to be precisely computed because an $\ell^{2}$ sample from $\x$ suffices to promise the provable guarantee of the framework. This trick improves the running time of MW to $O(n)$ and online gradient descent to $O(d)$; since there are $\tilde{O}(1/\epsilon^{2})$ iterations, the total complexity is $\tilde{O}(\frac{n+d}{\epsilon^{2}})$ as claimed in \cite{clarkson2012sublinear}.
\end{paragraph}

\begin{paragraph}{Amplitude amplification and estimation.}
Consider a search problem where we are given a function $f_{\omega}\colon\range{n}\rightarrow\{-1,1\}$ such that $f_{\omega}(i)=1$ iff $i\neq\omega$. To search for $\omega$, classically we need $\Omega(n)$ queries to $f_{\omega}$ as checking all $n$ positions is the only method.

Quantumly, given a unitary $U_{\omega}$ such that $U_{\omega}|i\>=|i\>$ for all $i\neq\omega$ and $U_{\omega}|\omega\>=-|\omega\>$, Grover's algorithm \cite{grover1996fast} finds $\omega$ with complexity $\tilde{O}(\sqrt{n})$. Denote $|s\>=\frac{1}{\sqrt{n}}\sum_{i\in\range{n}}|i\>$ (the uniform superposition), $|s'\>=\frac{1}{\sqrt{n-1}}\sum_{i\in\range{n}/\{\omega\}}|i\>$, and $U_{s}=2|s\>\<s|-I$, the unitary $U_{\omega}$ reflects a state with respect to $|s'\>$ and the unitary $U_{s}$ reflects a state with respect to $|s\>$. If we start with $|s\>$ and denote $\theta=2\arcsin(1/\sqrt{n})$ (the angle between $U_{\omega}|s\>$ and $|s\>$), then the angle between $U_{\omega}|s\>$ and $U_{s}U_{\omega}|s\>$ is \emph{amplified} to $2\theta$, and in general the angle between $U_{\omega}|s\>$ and $(U_{s}U_{\omega})^{k}|s\>$ is $2k\theta$. To find $\omega$, it suffices to take $k=\Theta(\sqrt{n})$ in this quantum algorithm. See \fig{Grover} for an illustration.

\begin{figure}[htbp]
\centering
\includegraphics[width=1.75in]{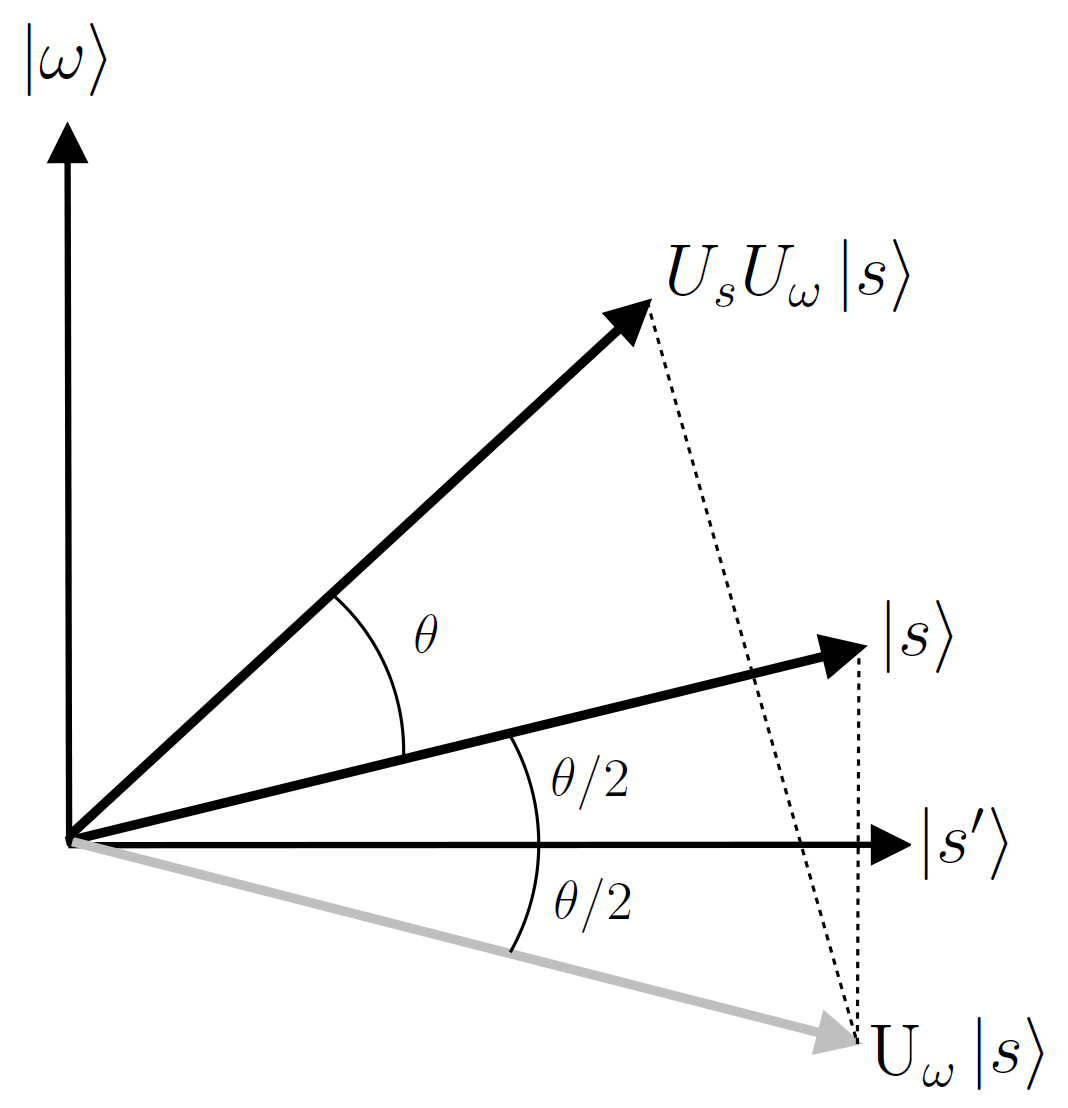}
\caption{Geometric interpretation of Grover's algorithm. This figure is copied from Wikipedia.}
\label{fig:Grover}
\end{figure}
This trick of alternatively applying these two unitaries is called \emph{amplitude amplification}; in general, this provides a quadratic speedup for search problems. For the quantitative version of estimating $\theta$ (not only finding $\omega$), quadratic quantum speedup also holds via an improved version of amplitude amplification called \emph{amplitude estimation} \cite{brassard2002quantum}.

Our \textbf{main technical contribution} is the implementations of amplitude amplification and estimation in the primal-dual approach for solving minimax problems. On the one hand, we achieve quadratic quantum speedup for multiplicative weight update, i.e., we improve the complexity from $\tilde{O}(n)$ to $\tilde{O}(\sqrt{n})$. This is because the $\ell^{2}$ sampling of $\x$ is identical to measuring the quantum state $|\x\>$ in the computational basis; furthermore, we prepare the state $|\x\>$ by amplitude amplification (see \sec{quantum-state-prep}).\footnote{Another common method to prepare quantum states is via quantum random access memory (QRAM). This is incomparable to our approach because preparing the data structure for QRAM takes $\Omega(n)$ cost (though after that one read takes $\tilde{O}(1)$ cost). Here we use amplitude amplification for giving sublinear algorithms. See also \sec{quantum-state-prep}.}

On the other hand, we also achieve quadratic quantum speedup for online gradient descent (improving $\tilde{O}(d)$ to $\tilde{O}(\sqrt{d})$). This is because the main cost of online gradient descent comes from estimating the norms $\|y_{t}\|$, which can be regarded as an amplitude estimation problem; details are given in \sec{quantum-online-gradient}.
\end{paragraph}

\begin{paragraph}{Comparison between classical and quantum results.}
Although our quantum algorithms enjoy quadratic speedups in $n$ and $d$, their executions incur a larger dependence in $\epsilon$: we have worst case $\tilde{O}\big(\frac{\sqrt{n}}{\epsilon^{4}}+\frac{\sqrt{d}}{\epsilon^{8}}\big)$ compared to the classical complexity $\tilde{O}\big(\frac{n}{\epsilon^{2}}+\frac{d}{\epsilon^{2}}\big)$ in \cite{clarkson2012sublinear}.
The main reason of having a larger $\epsilon$-dependence in quantum is because we cannot prepare the weight states in MW via those in previous iterations (i.e., the quantum state $|w_{t}\>$ cannot be prepared by $|w_{t-1}\>$),
and we have to start over every time; this is an intrinsic difficulty due to quantum state preparation.

Therefore, there is a trade-off between \cite{clarkson2012sublinear} and our results for arbitrary $\epsilon$: we provide faster training of the classifiers if we allow a constant error, while the classical algorithms in \cite{clarkson2012sublinear} might work better if we require high-accuracy classifiers.
\end{paragraph}

%=========================================
\subsection{Quantum speedup for multiplicative weights}
First, we give a quantum algorithm for linear classification with complexity $\tilde{O}(\sqrt{n})$:
\begin{theorem}\label{thm:perceptron-quantum}
With success probability at least $2/3$, \algo{perceptron} returns a succinct classical representation of a vector $\bar{\x}\in\R^{d}$ such that
\begin{align}\label{eqn:perceptron-goal}
\A_{i}\bar{\x}\geq \max_{\x\in\B_{d}}\min_{i'\in\range{n}}\A_{i'}\x-\epsilon\quad\forall\,i\in\range{n},
\end{align}
using $\tilde{O}\big(\frac{\sqrt{n}}{\epsilon^{4}}+\frac{d}{\epsilon^{2}}\big)$ quantum gates.
\end{theorem}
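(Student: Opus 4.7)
The plan is to quantize the primal-dual scheme of \cite{clarkson2012sublinear} by keeping the online gradient descent on the dual side classical while replacing the $O(n)$ multiplicative-weight step with quantum state preparation. I would run $T = \tilde{\Theta}(1/\epsilon^2)$ rounds. At round $t$ we maintain $\x_t \in \B_d$ explicitly as a dense classical vector, sample $j_t$ from $\x_t(j)^2/\|\x_t\|^2$ in $O(d)$ classical time, and set $\tilde v_t(i) := \A_i(j_t)\|\x_t\|^2/\x_t(j_t)$, the standard unbiased $\ell^2$-sampling estimator of $\A_i\trans\x_t$. The key move is to avoid explicitly materialising the $n$-vector of MW weights $w_t(i) = \prod_{\tau<t}\bigl(1 - \eta \tilde v_\tau(i) + \eta^2 \tilde v_\tau(i)^2\bigr)$: only the classical history $\{(j_\tau,\|\x_\tau\|^2/\x_\tau(j_\tau))\}_{\tau<t}$ need be stored, from which any single $w_t(i)$ can be computed by $t-1$ coherent queries to $O_{\A}$ and $O(t)$ arithmetic. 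A classical sample $i_t \sim p_t := w_t/\|w_t\|_1$ is then obtained by preparing the quantum state $|p_t\> := \sum_i \sqrt{w_t(i)/\|w_t\|_1}\,|i\>$ and measuring it; we then fetch $q_t := \A_{i_t}$ in $O(d)$ time and perform the classical gradient step $y_{t+1} := \x_t + q_t/\sqrt{T}$, $\x_{t+1} := y_{t+1}/\max\{1,\|y_{t+1}\|\}$.

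To prepare $|p_t\>$, the plan is to start from the uniform superposition $\tfrac{1}{\sqrt n}\sum_i |i\>|0\>|0\>$, coherently compute $w_t(i)$ into the second register, and rotate the flag qubit by $\arcsin\bigl(\sqrt{w_t(i)/M}\bigr)$ for an a priori upper bound $M \ge \max_i w_t(i)$. Post-selecting the flag on $|0\>$ leaves exactly $|p_t\>$, with success probability $\|w_t\|_1/(nM)$, and amplitude amplification boosts this to $\Theta(1)$ using $\tilde O\bigl(\sqrt{nM/\|w_t\|_1}\bigr)$ invocations of the flag circuit. The quadratic MW update of \cite{clarkson2012sublinear} is designed precisely so that the ratio $\max_i w_t(i)/(\|w_t\|_1/n)$ stays $O(1)$ throughout the run, making each $|p_t\>$-preparation cost $\tilde O(\sqrt n \cdot t)$ queries: a factor of $\sqrt n$ from amplification and $t$ from each coherent evaluation of $w_t(i)$. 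Summing over rounds yields $\sum_{t=1}^T \tilde O(\sqrt n \cdot t) = \tilde O(\sqrt n/\epsilon^4)$ on the primal side, while the classical dual contributes $O(d)$ per round for $O(d/\epsilon^2)$ total, giving the claimed complexity.

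Correctness reduces to a standard primal-dual regret argument: the quadratic-MW regret bound for the stochastic losses $\{\tilde v_t\}$ from \cite{clarkson2012sublinear}, combined with \eqn{online-gradient-descent} applied to $\{\A_{i_t}\}$, implies $\min_i \A_i\trans\bar\x \ge \sigma - \epsilon$ with constant probability for $\bar\x := \frac{1}{T}\sum_t \x_t$, provided the two sampling steps are faithful. Since amplitude amplification only produces $|p_t\>$ up to total-variation distance $\epsilon_0$ from the target, I would take $\epsilon_0 = 1/\poly(n,T)$; the extra cost is logarithmic, and a union bound over the $T$ rounds absorbs the discrepancy into constants hidden in $\tilde O(\cdot)$.

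The main obstacle will be justifying the amplitude-amplification step, whose $\tilde O(\sqrt n)$ cost hinges on $\max_i w_t(i) = O(\|w_t\|_1/n)$ holding \emph{throughout} the entire run. Controlling this in the stochastic regime --- where each $\tilde v_t(i)$ has bounded second moment but can be individually large whenever $\x_t$ has a tiny coordinate --- requires a careful inductive argument on both $\|w_t\|_\infty$ and $\|w_t\|_1$ adapted to the filtration of past samples, slightly strengthening the in-expectation bounds of \cite{clarkson2012sublinear} to the high-probability statement that the quantum preparation needs. A secondary technical point is that the coherent arithmetic evaluating $w_t(i)$ must use $O(\log(nT/\epsilon))$ bits of precision so that the amplitude error introduced into $|p_t\>$ does not spoil the MW regret guarantee.
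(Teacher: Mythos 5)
Your overall architecture coincides with the paper's: classical online gradient descent on the dual, MW weights never materialized but recomputed coherently from the stored history $\{(j_\tau,\x_\tau)\}_{\tau<t}$ at $O(t)$ queries per evaluation, state preparation plus measurement to sample $i_t$, the cost accounting $\sum_{t\le T}\tilde{O}(\sqrt{n}\,t)=\tilde{O}(\sqrt{n}/\epsilon^{4})$ plus $O(d/\epsilon^{2})$ classical work, and correctness via the regret lemmas of \cite{clarkson2012sublinear}. The genuine gap is in how you justify the amplitude-amplification cost. You rotate by $\arcsin\bigl(\sqrt{w_t(i)/M}\bigr)$ for an \emph{a priori} bound $M\ge\max_i w_t(i)$, obtain success probability $\|w_t\|_1/(nM)$, and then claim this is $\Omega(1/n)$ because ``the quadratic MW update keeps $\max_i w_t(i)=O(\|w_t\|_1/n)$ throughout the run,'' flagging the proof of that balance condition as your main obstacle. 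That condition is false in general and you should not attempt to prove it: multiplicative weights deliberately concentrates mass on the currently hardest constraints, and with clipped losses $|v_t(i)|\le 1/\eta$ each factor lies in $[3/4,3]$, so after $T=\Theta(\epsilon^{-2}\log n)$ rounds the ratio $\max_i w_t(i)/(\|w_t\|_1/n)$ can reach $n^{\Theta(1/\epsilon)}$ (e.g., when one row is persistently violated). Moreover, if the condition did hold, the unamplified success probability would already be $\Theta(1)$, so it cannot be what makes a $\sqrt{n}$ amplification budget the right answer.

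The missing observation is that no structural assumption on the weights is needed once $M$ is the \emph{actual} maximum rather than an a priori bound. Since the clipped quadratic update keeps every $w_t(i)>0$, one always has $\max_i w_t(i)\le\|w_t\|_1$; hence with $M=\max_i w_t(i)$ the post-selection probability $\|w_t\|_1/(nM)$ is at least $1/n$ unconditionally, and $O(\sqrt{n})$ rounds of amplitude amplification always suffice. The maximum is obtained by D\"urr--H{\o}yer maximum finding \cite{durr1996quantum} with $O(\sqrt{n})$ evaluations of the weight oracle, i.e.\ $\tilde{O}(\sqrt{n}\,t)$ queries per round, which does not change your totals; this is exactly what \algo{state-preparation} (\prop{state-prep}) does, whose success amplitude $\|a\|_2/(\sqrt{n}\,a_{\max})\ge 1/\sqrt{n}$ is the unconditional analogue of the bound you were trying to force. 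With that repair (and your precision remark, which the $\tilde{O}$ absorbs), the rest of your argument matches the paper's proof; a minor residual difference is that you encode $\sqrt{w_t(i)}$ in the amplitudes so the measurement distribution is exactly $w_t/\|w_t\|_1$, whereas \algo{perceptron} encodes $u_t(i)$ itself, and either choice is compatible with the regret analysis.
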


\begin{algorithm}[htbp]
\KwInput{$\eps>0$, a quantum oracle $O_{\A}$ for $\A \in \R^{n \times d}$.}
\KwOutput{$\bar{\x}$ that satisfies \eqn{perceptron-goal}.}
Let $T=23^2 \eps^{-2}\log n$, $y_1=\vecc{0}_{d}$, $\eta=\sqrt{\frac{\log n}{T}}$, $u_{1}=\vecc{1}_{n}$, $|p_1\>=\frac{1}{\sqrt{n}}\sum_{i\in\range{n}}|i\>$\;
    \For{$t=1$ \KwTo $T$}{
        Define\footnotemark\,$\x_t:=\frac{y_t}{\max\{1,\norm{y_t}\}}$\; \label{lin:update-x}
        Measure the state $|p_t\>$ in the computational basis and denote the output as $i_t\in\range{n}$\; \label{lin:update-it}
        Define $y_{t+1}:=y_t + \frac{1}{\sqrt{2T}} \A_{i_t}$\; \label{lin:update-y}
        Choose $j_t\in\range{d}$ by $j_t=j$ with probability $\frac{\x_t(j)^2}{\norm{\x_t}^2}$\; \label{lin:update-j}
        For all $i\in\range{n}$, denote $\tilde{v}_t(i)=\A_i(j_t)\frac{\norm{\x_t}^2}{\x_t(j_t)}$, $v_t(i)=\min\{1/\eta,\max\{-1/\eta,\tilde{v}_t(i)\}\}$, and $u_{t+1}(i)=u_t(i) (1-\eta v_t(i) + \eta^2 v_t(i)^2)$. Implement a quantum oracle $O_{t}$ such that for all $i\in\range{n}$, $O_{t}|i\>|0\>=|i\>|u_{t+1}(i)\>$ by \algo{perceptron-quantum-oracle} in \sec{perceptron-quantum-oracle}\; \label{lin:update-oracle-Ot}
        Prepare $|p_{t+1}\>=\frac{1}{\|u_{t+1}\|_{2}}\sum_{i\in\range{n}}u_{t+1}(i)|i\>$ by applying \algo{state-preparation} to $O_{t}$; \label{lin:prepare-pt}
}
Return $\bar{\x} = \frac{1}{T}\sum_{t=1}^{T}\x_t$\;
\caption{Quantum linear classification algorithm.}
\label{algo:perceptron}
\end{algorithm}
\footnotetext{By defining $\x_{t}$ here, we do not write down the whole vector but we construct any query to its entries in $O(1)$ time. For example, the $i^{\text{th}}$ coordinate of $\x_{t}$ is $\x_{t}(i)=\frac{y_t(i)}{\max\{1,\norm{y_t}\}}$, constructed by one query to $y_{t}(i)$. The $y_{t+1}$ in \lin{update-y} is defined in the same sense.\label{fnote:defn-vector}}

Note that \algo{perceptron} is inspired by the classical sublinear algorithm \cite{clarkson2012sublinear} by using online gradient descent in \lin{update-y} and $\ell^{2}$ sampling in \lin{update-j} and \lin{update-oracle-Ot}. However, to achieve the $\tilde{O}(\sqrt{n})$ quantum complexity we use two quantum building blocks: a state preparation procedure in \lin{update-oracle-Ot}, and an oracle implementation procedure in \lin{prepare-pt}; their details are covered in , respectively. The full proof of \thm{perceptron-quantum} is given in \sec{perceptron-quantum-proof}.

%--------------------------------------------------------------------------
\subsubsection{Quantum state preparation with oracles}\label{sec:quantum-state-prep}
We use the following result for quantum state preparation (see, e.g., \cite{grover2000synthesis}):
\begin{proposition}\label{prop:state-prep}
Assume that $a\in\C^{n}$, and we are given a unitary oracle $O_{a}$ such that $O|i\>|0\>=|i\>|a_{i}\>$ for all $i\in\range{n}$. Then \algo{state-preparation} takes $O(\sqrt{n})$ calls to $O_{a}$ for preparing the quantum state $\frac{1}{\|a\|_{2}}\sum_{i\in\range{n}}a_{i}|i\>$ with success probability $1-O(1/n)$.
\end{proposition}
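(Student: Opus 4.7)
The plan is to reduce the task to Grover-style amplitude amplification applied to a one-query ``naive'' preparation that places the target state inside a flagged subspace of small amplitude. First, I would construct the naive circuit: start from the uniform superposition $\frac{1}{\sqrt{n}}\sum_{i\in\range{n}}|i\>_{A}|0\>_{B}|0\>_{C}$, where $A$ is the index register, $B$ is an ancilla large enough to hold $a_i$, and $C$ is a single-qubit flag; then apply $O_a$ on $(A,B)$; conditioned on $|a_i\>_{B}$, perform a controlled rotation on $C$ realizing $|0\>_{C}\mapsto \sqrt{1-|a_i|^2}\,|0\>_{C}+a_i|1\>_{C}$ (under the standing normalization $|a_i|\le 1$, otherwise rescale by a known upper bound); and finally uncompute $O_a$ on $(A,B)$ to erase $B$. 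The resulting state is $\frac{1}{\sqrt{n}}\sum_{i}|i\>_{A}|0\>_{B}\bigl(\sqrt{1-|a_i|^2}|0\>_{C}+a_i|1\>_{C}\bigr)$, using two queries to $O_a$ and $O(\log n)$ elementary gates for the controlled rotation.

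Next I would identify the ``good'' subspace as $C=|1\>$. Projecting onto this subspace and renormalizing recovers exactly $\frac{1}{\|a\|_2}\sum_{i}a_i|i\>$ tensored with the fixed ancillas, and the amplitude on this good subspace in the naive preparation is $\|a\|_2/\sqrt{n}$. I would then apply amplitude amplification \cite{brassard2002quantum}: alternate the naive preparation/unpreparation unitary with a reflection about the good subspace (a phase flip controlled on $C=1$). After $k=\Theta(\sqrt{n}/\|a\|_2)$ iterations, the good-subspace amplitude is driven to $\Theta(1)$. Under the standing assumption that $\|a\|_2=\Omega(1)$ (which holds throughout the paper's applications, e.g.\ the multiplicative-weight vectors $u_t$ satisfy $\|u_t\|_2\ge 1$), this gives $O(\sqrt{n})$ queries to $O_a$.

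The main obstacle will be upgrading the success probability to $1-O(1/n)$ without knowing $\|a\|_2$ exactly, since even a constant multiplicative error in the Grover iteration count leaves a $\Theta(1)$ failure rate. I would handle this in one of two standard ways: (i) first run amplitude estimation with $O(\sqrt{n})$ queries to obtain a constant-factor approximation of $\|a\|_2/\sqrt{n}$, then set $k$ to the corresponding optimal value and verify the flag register via $O(\log n)$ independent repetitions; or (ii) use a fixed-point variant of amplitude amplification, which at an $O(\log n)$ multiplicative overhead forces the failure probability below $1/n$ regardless of the exact value of $\|a\|_2$. In either case the total query cost stays $\tilde{O}(\sqrt{n})$, and the correctness guarantee is immediate from the standard amplitude-amplification analysis.
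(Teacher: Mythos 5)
Your skeleton (uniform superposition, a flagged controlled rotation, uncomputation, then amplitude amplification on the flag) is the same as the paper's \algo{state-preparation}, but there is a genuine gap in how you choose the rotation scale, and it is precisely the point the paper's proof is built around. You rotate by $a_i$ itself, or by $a_i/M$ for a ``known upper bound'' $M$, so the flagged amplitude is $\|a\|_2/(M\sqrt{n})$ and the amplification count is $\Theta(M\sqrt{n}/\|a\|_2)$; to land at $O(\sqrt{n})$ you then assume $\|a\|_2=\Omega(1)$ with $M=O(1)$. But \prop{state-prep} is stated for arbitrary $a\in\C^{n}$, and in the paper's own application the assumption is not available: the weight entries are updated multiplicatively, $u_{t+1}(i)=u_t(i)(1-\eta v_t(i)+\eta^2 v_t(i)^2)$ with $|v_t(i)|\le 1/\eta$, so each factor lies in $[3/4,3]$; the only a priori bound is $M=3^{t}$ while $\|u_t\|_2$ can be as small as $\sqrt{n}(3/4)^{t}$, making $M/\|u_t\|_2$ grow with $t$ and destroying the $O(\sqrt{n})$ count (your parenthetical claim $\|u_t\|_2\ge 1$ is also unjustified, since every entry can shrink by a factor $3/4$ per round). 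The paper closes this hole by first spending $O(\sqrt{n})$ queries on D\"urr--H{\o}yer maximum finding \cite{durr1996quantum} to compute $a_{\max}=\max_{i\in\range{n}}|a_i|$ and rotating by $a_i/a_{\max}$; since $\|a\|_2\ge a_{\max}$, the flagged amplitude in \eqn{state-controlled-2} is $\|a\|_2/(\sqrt{n}\,a_{\max})\ge 1/\sqrt{n}$ \emph{unconditionally}, so $O(\sqrt{n})$ rounds of amplitude amplification suffice for every input. This maximum-finding normalization is the missing idea in your write-up.

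Note also that your fallback (i) does not repair the problem: amplitude estimation with $O(\sqrt{n})$ queries only resolves amplitudes of order $1/\sqrt{n}$, so when $M\gg\|a\|_2$ (flag amplitude far below $1/\sqrt{n}$) it cannot even return a constant-factor estimate, let alone the right iteration count. Once the $a_{\max}$ normalization is in place, your remaining concerns are legitimate: the paper asserts success probability $1-O(1/n)$ from amplification with amplitude at least $1/\sqrt{n}$ in one line, and your estimate-then-amplify or fixed-point variants (at $O(\log n)$ overhead) are reasonable, arguably more careful, ways to make that boost explicit while keeping the cost $\tilde{O}(\sqrt{n})$.
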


\begin{algorithm}[htbp]
Apply D\"{u}rr-H{\o}yer's algorithm \cite{durr1996quantum} to find $a_{\max}:=\max_{i\in\range{n}}|a_{i}|$ in $O(\sqrt{n})$ time\;
Prepare the uniform superposition $\frac{1}{\sqrt{n}}\sum_{i\in\range{n}}|i\>$\;
Perform the following unitary transformations:
\begin{align}
\frac{1}{\sqrt{n}}\sum_{i\in\range{n}}|i\>
\xmapsto{O_{a}} \frac{1}{\sqrt{n}}\sum_{i\in\range{n}}|i\>|a_{i}\>
&\mapsto \frac{1}{\sqrt{n}}\sum_{i\in\range{n}}|i\>|a_{i}\>\Big(\frac{a_{i}}{a_{\max}}|0\>+\sqrt{1-\frac{|a_{i}|^{2}}{a_{\max}^{2}}}|1\>\Big) \nonumber \\
&\xmapsto{O_{a}^{-1}} \frac{1}{\sqrt{n}}\sum_{i\in\range{n}}|i\>|0\>\Big(\frac{a_{i}}{a_{\max}}|0\>+\sqrt{1-\frac{|a_{i}|^{2}}{a_{\max}^{2}}}|1\>\Big); \label{eqn:state-controlled}
\end{align}
\vspace{-4mm} \BlankLine
Delete the second system in Eq. \eqn{state-controlled}, and rewrite the state as
\begin{align}
\frac{\|a\|_{2}}{\sqrt{n}a_{\max}}\cdot\Big(\frac{1}{\|a\|_{2}}\sum_{i\in\range{n}}a_{i}|i\>\Big)|1\>+|a^{\perp}\>|0\>, \label{eqn:state-controlled-2}
\end{align}
where $|a^{\perp}\>:=\frac{1}{\sqrt{n}}\sum_{i\in\range{n}}\sqrt{1-\frac{|a_{i}|^{2}}{a_{\max}^{2}}}|i\>$ is a garbage state\;
Apply amplitude amplification \cite{brassard2002quantum} for the state in \eqn{state-controlled-2} conditioned on the second system being 1. Return the output\;
\caption{Prepare a pure state given an oracle to its coefficients.}
\label{algo:state-preparation}
\end{algorithm}

Note that the coefficient in \eqn{state-controlled-2} satisfies $\frac{\|a\|_{2}}{\sqrt{n}a_{\max}}\geq\frac{1}{\sqrt{n}}$; therefore, applying amplitude amplification for $O(\sqrt{n})$ times indeed promises that we obtain $|1\>$ on the second system with success probability $1-O(1/n)$, i.e., the state $\frac{1}{\|a\|_{2}}\sum_{i\in\range{n}}a_{i}|i\>$ is prepared in the first system.

\begin{remark}
\algo{state-preparation} is incomparable to state preparation via quantum random access memory (QRAM). QRAM relies on the weak assumption that we start from zero, and every added datum is processed in poly-logarithmic time. In total, this takes at least linear time in the size of the data (see, for instance, \cite{kerenidis2016recommendation}). For the task of \prop{state-prep}, QRAM takes at least $\Omega(n)$ cost.

In this paper, we use the standard model where the input is formulated as an oracle, also widely assumed and used in existing quantum algorithm literatures (e.g., \cite{grover1996fast,harrow2009quantum,childs2015quantum,brandao2017quantum}). Under the standard model, \algo{state-preparation} prepares states with only $O(\sqrt{n})$ cost.

Nevertheless, it is an interesting question to ask whether there is a $\poly(\log(nd))$-time quantum algorithm for linear classification given the existence of a pre-loaded QRAM of $\A$. This would require the ability to take summations of the vectors $\frac{1}{\sqrt{2T}}\A_{i_t}$ in \lin{update-y} of \algo{perceptron} in $\poly(\log(nd))$-time as well as the ability to update the weight state $u_{t+1}$ in \lin{prepare-pt} in $\poly(\log(nd))$-time, both using QRAM. These two tasks are plausible as suggested by classical poly-log time sample-based algorithms for matrix arithmetics under multiplicative weight frameworks \cite{chia2018SDP}, which can potentially be combined with the analysis of QRAM data structures in \cite{kerenidis2016recommendation}; we leave this possibility as an open question.
\end{remark}

%--------------------------------------------------------------------------
\subsubsection{Implementation of the quantum oracle for updating the weight vectors}\label{sec:perceptron-quantum-oracle}
The quantum oracle $O_{t}$ in \lin{update-oracle-Ot} of \algo{perceptron} is implemented by \algo{perceptron-quantum-oracle}. For convenience, we denote $\clip(v,1/\eta):=\min\{1/\eta,\max\{-1/\eta,v\}\}$ for all $v\in\R$.

\begin{algorithm}
\KwInput{$\x_{1},\ldots,\x_{t}\in\R^{d}$, $j_{1},\ldots,j_{t}\in\range{d}$.}
\KwOutput{An oracle $O_{t}$ such that $O_{t}|i\>|0\>=|i\>|u_{t+1}(i)\>$ for all $i\in\range{n}$.}
Define three classical oracles: $O_{s,j}(0)=j_{s}$, $O_{s,\x}(j_{s})=\frac{\|\x_{s}\|^{2}}{\x_{s}(j_{s})}$, and $O_{\text{clip}}(a,b,c)=c\cdot\big(1-\eta\clip(ab,1/\eta)+\eta^{2}\clip(ab,1/\eta)^{2}\big)$\; \label{lin:three-classical-oracles}
\For{$s=1$ \KwTo $t$}{
	Perform the following maps:
	\begin{align}
	|i\>|0\>|0\>|0\>|u_{s}(i)\>&\xmapsto{O_{s,j}}|i\>|j_{s}\>|0\>|0\>|u_{s}(i)\> \label{eqn:perceptron-quantum-oracle-1} \\
		&\xmapsto{O_{\A}}|i\>|j_{s}\>|\A_{i}(j_{s})\>|0\>|u_{s}(i)\> \label{eqn:perceptron-quantum-oracle-2} \\
		&\xmapsto{O_{s,\x}}|i\>|j_{s}\>|\A_{i}(j_{s})\>\Big|\frac{\|\x_{s}\|^{2}}{\x_{s}(j_{s})}\Big\>|u_{s}(i)\> \label{eqn:perceptron-quantum-oracle-3} \\
		&\xmapsto{O_{\text{clip}}}|i\>|j_{s}\>|\A_{i}(j_{s})\>\Big|\frac{\|\x_{s}\|^{2}}{\x_{s}(j_{s})}\Big\>|u_{s+1}(i)\> \label{eqn:perceptron-quantum-oracle-4} \\
		&\xmapsto{O_{s,\x}^{-1}}|i\>|j_{s}\>|\A_{i}(j_{s})\>|0\>|u_{s+1}(i)\> \label{eqn:perceptron-quantum-oracle-5} \\
		&\xmapsto{O_{\A}^{-1}}|i\>|j_{s}\>|0\>|0\>|u_{s+1}(i)\> \label{eqn:perceptron-quantum-oracle-6} \\
		&\xmapsto{O_{s,j}^{-1}}|i\>|0\>|0\>|0\>|u_{s+1}(i)\>. \label{eqn:perceptron-quantum-oracle-7}
	\end{align}
}
\caption{Quantum oracle for updating the weight state.}
\label{algo:perceptron-quantum-oracle}
\end{algorithm}

Because we have stored $\x_{s}$ and $j_{s}$, we could construct classical oracles $O_{s,j}(0)=j_{s}$, $O_{s,\x}(j_{s})=\frac{\|\x_{s}\|^{2}}{\x_{s}(j_{s})}$ with $O(1)$ complexity. In the algorithm, we first call $O_{s,j}$ to compute $j_{s}$ and store it into the second register in \eqn{perceptron-quantum-oracle-1}. In \eqn{perceptron-quantum-oracle-2}, we call the quantum oracle $O_{\A}$ for the value $\A_{i}(j_{s})$, which is stored into the third register. In \eqn{perceptron-quantum-oracle-3}, we call $O_{s,\x}$ to compute $\frac{\|\x_{s}\|^{2}}{\x_{s}(j_{s})}$ and store it into the fourth register. In \eqn{perceptron-quantum-oracle-4}, because we have $\A_{i}(j_{s})$ and $\frac{\|\x_{s}\|^{2}}{\x_{s}(j_{s})}$ at hand, we could use $\tilde{O}(1)$ arithmetic computations to compute $\tilde{v}_s(i)=\A_i(j_s)\norm{\x_s}^2/\x_t(j_s)$ and
\begin{align}
u_{s+1}(i)=u_s(i)\big(1-\eta \clip(\tilde{v}_s(i), 1/\eta) + \eta^2 \clip(\tilde{v}_s(i), 1/\eta)^2\big).
\end{align}
We then store $u_{s+1}(i)$ into the fifth register. In \eqn{perceptron-quantum-oracle-5}, \eqn{perceptron-quantum-oracle-6}, and \eqn{perceptron-quantum-oracle-7}, we uncompute the steps in \eqn{perceptron-quantum-oracle-3}, \eqn{perceptron-quantum-oracle-2}, and \eqn{perceptron-quantum-oracle-1}, respectively (we need these steps in \algo{perceptron-quantum-oracle} to keep its unitarity).

In total, between \eqn{perceptron-quantum-oracle-1}-\eqn{perceptron-quantum-oracle-7} we use 2 queries to $O_{\A}$ and $\tilde{O}(1)$ additional arithmetic computations. Because $s$ goes from 1 to $t$, in total we use $2t$ queries to $O_{\A}$ and $\tilde{O}(t)$ additional arithmetic computations.

%--------------------------------------------------------------------------
\subsubsection{Proof of \thm{perceptron-quantum}}\label{sec:perceptron-quantum-proof}
To prove \thm{perceptron-quantum}, we use the following five lemmas proved in \cite{clarkson2012sublinear} for analyzing the online gradient gradient descent and $\ell^{2}$ sampling outcomes:

\begin{lemma}[Lemma A.2 of \cite{clarkson2012sublinear}]\label{lem:A-2}
The updates of $\x$ in \lin{update-x} and $y$ in \lin{update-y} satisfy
\begin{align}
\max_{\x\in\B_{n}}\sum_{t\in\range{T}}\A_{i_{t}}\x\leq\sum_{t\in\range{T}}\A_{i_{t}}\x_{t}+2\sqrt{2T}.
\end{align}
\end{lemma}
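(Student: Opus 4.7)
The plan is to recognize the coupled updates in \lin{update-x} and \lin{update-y} as Follow-the-Regularized-Leader (FTRL) over $\B_{d}$ with quadratic regularizer $R(\x):=\tfrac{1}{2\alpha}\|\x\|^{2}$ at step size $\alpha:=1/\sqrt{2T}$. Unrolling from $y_{1}=\vecc{0}_{d}$ gives $y_{t}=\alpha\sum_{s<t}\A_{i_{s}}$, and since $\Pi_{\B_{d}}(y)=\arg\min_{\x\in\B_{d}}\tfrac{1}{2}\|\x-y\|^{2}$, the iterate $\x_{t}=y_{t}/\max\{1,\|y_{t}\|\}=\Pi_{\B_{d}}(y_{t})$ is exactly the FTRL play $\arg\min_{\x\in\B_{d}}\{R(\x)-\sum_{s<t}\A_{i_{s}}\trans\x\}$; this equivalence follows from the KKT conditions for projecting onto the unit ball (note: the statement's $\B_{n}$ should read $\B_{d}$, since $\x\in\R^{d}$).

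With this identification in hand, I would first invoke the Be-The-Leader (BTL) lemma on the augmented loss sequence $R,\,-\A_{i_{1}}\trans\cdot,\,\dots,\,-\A_{i_{T}}\trans\cdot$. Since $\vecc{0}_{d}$ is the unconstrained minimizer of $R$ (so $\x_{1}=\vecc{0}_{d}$ and $R(\x_{1})=0$), BTL yields, for every comparator $\x\in\B_{d}$,
\begin{equation*}
\sum_{t=1}^{T}\A_{i_{t}}\trans\x \;-\; \sum_{t=1}^{T}\A_{i_{t}}\trans\x_{t+1} \;\leq\; R(\x)-R(\x_{1}) \;\leq\; \frac{1}{2\alpha}.
\end{equation*}
To swap $\x_{t+1}$ for $\x_{t}$, I would use non-expansiveness of projection onto a convex set, giving $\|\x_{t+1}-\x_{t}\|\leq\|y_{t+1}-y_{t}\|=\alpha\|\A_{i_{t}}\|\leq\alpha$ (using $\|\A_{i}\|\leq 1$ from \sec{prelim}). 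Cauchy-Schwarz then bounds $\A_{i_{t}}\trans(\x_{t+1}-\x_{t})\leq\alpha$, so $\sum_{t}\A_{i_{t}}\trans(\x_{t+1}-\x_{t})\leq\alpha T$. Adding this to the BTL inequality and substituting $\alpha=1/\sqrt{2T}$,
\begin{equation*}
\max_{\x\in\B_{d}}\sum_{t=1}^{T}\A_{i_{t}}\trans\x \;-\; \sum_{t=1}^{T}\A_{i_{t}}\trans\x_{t} \;\leq\; \frac{1}{2\alpha}+\alpha T \;=\; \sqrt{2T} \;\leq\; 2\sqrt{2T}.
\end{equation*}

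The main obstacle, to my mind, is cleanly justifying the FTRL identification, since the algorithm states its updates operationally rather than variationally. If the KKT check feels unwieldy, one can sidestep FTRL altogether by taking $\Phi_{t}(\x):=R(\x)-\sum_{s<t}\A_{i_{s}}\trans\x$ as a potential: $(1/\alpha)$-strong convexity of $\Phi_{t}$ combined with $\x_{t}=\arg\min_{\B_{d}}\Phi_{t}$ gives $\Phi_{t+1}(\x_{t+1})-\Phi_{t}(\x_{t})+\A_{i_{t}}\trans\x_{t+1}\geq 0$, and telescoping against $\Phi_{T+1}(\x)$ recovers the same initial inequality without explicit appeal to BTL. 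Alternatively, one can work with the Moreau envelope $F(y):=\max_{\x\in\B_{d}}\{y\trans\x-\tfrac{1}{2}\|\x\|^{2}\}$, which is $1$-smooth with $\nabla F=\Pi_{\B_{d}}$, and sandwich $F(y_{T+1})$ between the smoothness descent inequality and its variational definition to obtain an even tighter regret bound. The remaining steps (projection non-expansiveness, Cauchy-Schwarz, and optimizing $\alpha$) are routine.
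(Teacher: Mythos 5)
Your proof is correct. The paper itself does not prove this lemma---it imports it verbatim as Lemma A.2 of \cite{clarkson2012sublinear}---so there is no in-paper argument to compare against; what you supply is the standard regret analysis for lazy projected gradient ascent (dual averaging), which is exactly the analysis the cited lemma rests on. Your key identifications all check out: $\x_t=y_t/\max\{1,\|y_t\\|\}$ is indeed $\Pi_{\B_d}(y_t)$, and completing the square shows this equals the FTRL play $\arg\min_{\x\in\B_d}\{\tfrac{1}{2\alpha}\|\x\|^2-\sum_{s<t}\A_{i_s}\trans\x\}$ with $\alpha=1/\sqrt{2T}$; the Be-The-Leader step, the non-expansiveness swap of $\x_{t+1}$ for $\x_t$, and the arithmetic $\tfrac{1}{2\alpha}+\alpha T=\sqrt{2T}$ are all sound, and in fact you obtain the bound with constant $\sqrt{2T}$ rather than the stated $2\sqrt{2T}$. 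You are also right that $\B_n$ in the lemma statement is a typo for $\B_d$. One small caution: the greedy-projection potential argument ($\|\x_t-\x^*\|^2$ telescoping) that one might reflexively reach for does \emph{not} directly apply here because the update accumulates into $y_t$ rather than $\x_t$, so your choice of the FTRL/BTL route (or the equivalent potential $\Phi_t$ you sketch) is the appropriate one for these updates.
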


\begin{lemma}[Lemma 2.3 of \cite{clarkson2012sublinear}]\label{lem:2-3}
For any $t\in\range{T}$, denote $p_{t}$ to be the unit vector in $\R^{n}$ such that $(p_{t})_{i}=|\<i|p_{t}\>|^{2}$ for all $i\in\range{n}$. Then the update for $p_{t+1}$ in \lin{prepare-pt} satisfies
\begin{align}
\sum_{t\in\range{T}}p_{t}\trans v_{t}\leq\min_{i\in\range{n}}\sum_{t\in\range{T}}v_{t}(i)+\eta\sum_{t\in\range{T}}p_{t}\trans v_{t}^{2}+\frac{\log n}{\eta},
\end{align}
where $v_{t}^{2}$ is defined as $(v_{t}^{2})_{i}:=(v_{t})_{i}^{2}$ for all $i\in\range{n}$.
\end{lemma}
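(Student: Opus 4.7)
The plan is to invoke the standard potential-function argument for multiplicative weights, adapted to the quadratic update $u_{t+1}(i) = u_t(i)(1 - \eta v_t(i) + \eta^2 v_t(i)^2)$. Define the potential $\Phi_t := \sum_{i\in[n]} u_t(i)$ with $\Phi_1 = n$, and work with the normalization $p_t(i) = u_t(i)/\Phi_t$ (the linear multiplicative-weights distribution corresponding to the quadratic update). Summing the update over $i$ factors out $\Phi_t$ and yields the one-step recursion
\begin{align*}
\Phi_{t+1} \;=\; \Phi_t \cdot \bigl(1 - \eta\, p_t^\top v_t + \eta^2\, p_t^\top v_t^2\bigr).
\end{align*}
Applying $1 + x \leq e^x$ and telescoping across $t = 1,\dots,T$ then gives the upper bound
\begin{align*}
\Phi_{T+1} \;\leq\; n\,\exp\!\Bigl(-\eta \textstyle\sum_{t} p_t^\top v_t + \eta^2 \sum_{t} p_t^\top v_t^2\Bigr).
\end{align*}

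For the matching lower bound, I would use the trivial inequality $\Phi_{T+1} \geq u_{T+1}(i^*)$ for any fixed index $i^*$, and expand the product
$u_{T+1}(i^*) = \prod_{t}(1 - \eta v_t(i^*) + \eta^2 v_t(i^*)^2)$. The clipping operation in line \lin{update-oracle-Ot} of \algo{perceptron} guarantees $|v_t(i)| \leq 1/\eta$, so the argument $\eta v_t(i^*)$ lies in $[-1, 1]$ at every step. On this range the elementary inequality $1 - x + x^2 \geq e^{-x}$ holds, yielding $u_{T+1}(i^*) \geq \exp\!\bigl(-\eta \sum_t v_t(i^*)\bigr)$. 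Choosing $i^* = \arg\min_i \sum_t v_t(i)$, combining the two bounds on $\Phi_{T+1}$, taking logarithms, and dividing through by $\eta$ rearranges exactly to the claimed regret bound.

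The one non-routine step is the verification of $1 - x + x^2 \geq e^{-x}$ on $[-1,1]$, and this is where I expect the only genuine work. I would set $h(x) := \log(1 - x + x^2) + x$ and compute $h'(x) = x(1+x)/(1 - x + x^2)$ after clearing the common factor, noting that the denominator $1 - x + x^2 = (x - \tfrac12)^2 + \tfrac34$ is strictly positive. Since $h'(x)$ has the same sign as $x(1+x)$, which is nonpositive on $[-1,0]$ and nonnegative on $[0,1]$, $h$ attains its minimum value $h(0) = 0$ on $[-1,1]$, proving $h \geq 0$ and hence the desired inequality. Without the clipping, the update factor could become negative or the $e^{-x}$ bound could fail, so the clipping in \lin{update-oracle-Ot} is essential to the argument — this is the one subtle point to flag. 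All remaining steps (the telescoping, the $1+x \leq e^x$ bound, the final algebraic rearrangement) are mechanical.
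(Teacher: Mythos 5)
Your proof is correct and is exactly the standard potential-function argument by which Lemma~2.3 of \cite{clarkson2012sublinear} is proved (the paper itself gives no proof, only the citation): the upper bound $\Phi_{T+1}\leq n\exp\bigl(-\eta\sum_{t}p_{t}\trans v_{t}+\eta^{2}\sum_{t}p_{t}\trans v_{t}^{2}\bigr)$ via $1+x\leq e^{x}$, the lower bound $\Phi_{T+1}\geq u_{T+1}(i^{*})\geq\exp\bigl(-\eta\sum_{t}v_{t}(i^{*})\bigr)$ via $1-x+x^{2}\geq e^{-x}$ on $[-1,1]$ (which is where the clipping $|v_{t}(i)|\leq 1/\eta$ enters), and a rearrangement after taking logarithms. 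The only caveat is one of notation rather than substance: your $p_{t}(i)=u_{t}(i)/\|u_{t}\|_{1}$ is the distribution the cited lemma actually concerns, whereas the paper's literal definition $(p_{t})_{i}=|\<i|p_{t}\>|^{2}$ applied to the state $|p_{t}\>=\sum_{i}u_{t}(i)|i\>/\|u_{t}\|_{2}$ yields $u_{t}(i)^{2}/\|u_{t}\|_{2}^{2}$; this mismatch is an inconsistency in the paper's Algorithm~1 (the amplitudes should be $\sqrt{u_{t}(i)}$, as is done correctly in Algorithm~4 for zero-sum games), not a gap in your argument.
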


\begin{lemma}[Lemma 2.4 of \cite{clarkson2012sublinear}]\label{lem:2-4}
With probability at least $1-O(1/n)$,
\begin{align}
\max_{i\in\range{n}}\sum_{t\in\range{T}}\big[v_{t}(i)-\A_{i}\x_{t}\big]\leq 4\eta T.
\end{align}
\end{lemma}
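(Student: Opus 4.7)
The plan is to fix $i\in\range{n}$, decompose each summand into a conditional-mean-zero piece plus a small bias, bound the two pieces with standard tools, and union bound over $i$ at the end. Let $\mathcal{F}_{t-1}$ be the $\sigma$-algebra generated by $(i_1,j_1,\dots,i_{t-1},j_{t-1})$; then $\x_t$ is $\mathcal{F}_{t-1}$-measurable (it is built from $y_t=\frac{1}{\sqrt{2T}}\sum_{s<t}\A_{i_s}$), and $j_t$ is drawn with probability $\x_t(j)^2/\|\x_t\|^2$ conditional on $\mathcal{F}_{t-1}$. Write $v_t(i)-\A_i\x_t = X_t^{(i)}+Y_t^{(i)}$ where
\begin{align*}
X_t^{(i)} \;:=\; v_t(i)-\E[v_t(i)\mid\mathcal{F}_{t-1}], \qquad Y_t^{(i)} \;:=\; \E[v_t(i)\mid\mathcal{F}_{t-1}]-\A_i\x_t.
\end{align*}
This reduces the claim to a deterministic per-step bound on $Y_t^{(i)}$ and a martingale concentration bound on $\sum_t X_t^{(i)}$.

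For the bias I would use the unbiasedness $\E[\tilde v_t(i)\mid\mathcal{F}_{t-1}]=\A_i\x_t$ already computed in the excerpt, together with the second-moment bound $\E[\tilde v_t(i)^2\mid\mathcal{F}_{t-1}]=\|\A_i\|^2\|\x_t\|^2\le 1$, which is immediate from the $\ell^2$-sampling distribution of $j_t$ and the normalizations $\|\A_i\|\le 1$, $\|\x_t\|\le 1$. A case check on the clip gives $v_t(i)-\tilde v_t(i)\le |\tilde v_t(i)|\,\mathbf{1}\{|\tilde v_t(i)|>1/\eta\}\le \eta\,\tilde v_t(i)^2$, so taking conditional expectations yields $Y_t^{(i)}\le \eta$ and thus $\sum_t Y_t^{(i)}\le \eta T$. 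For the martingale part, the increments satisfy $|X_t^{(i)}|\le 2/\eta$ (since $|v_t|\le 1/\eta$) and conditional variance $\le 1$, giving total variance proxy at most $T$. Freedman's (or the Bernstein-type) martingale inequality then yields $\Pr[\sum_t X_t^{(i)}\ge 3\eta T]\le \exp(-\Omega(\eta^2 T))=n^{-\Omega(1)}$, where I use the algorithmic choice $\eta=\sqrt{\log n/T}$ so that $\eta^2 T=\log n$. A union bound over $i\in\range{n}$, together with the bias bound, gives $\sum_t[v_t(i)-\A_i\x_t]\le 4\eta T$ simultaneously for all $i$ with probability $1-O(1/n)$.

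The only part requiring real care is lining up the Freedman constants: the deviation $3\eta T$ must beat the variance $T$ and range $2/\eta$ by enough that the resulting exponent dominates both the intrinsic $\log n$ coming from $\eta^2 T$ and the additional $\log n$ from the union bound. Because $T=23^2\eps^{-2}\log n$ is already oversized by a large universal constant, there is room to absorb whatever slack Freedman's inequality demands; if the bare constants pinch, I would simply enlarge the $23^2$ to bring the tail probability down to $O(1/n)$. No other step looks subtle: the bias estimate is a one-line Chebyshev-style truncation, and the martingale structure is textbook once the filtration $\mathcal{F}_{t-1}$ is fixed.
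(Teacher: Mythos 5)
Your decomposition into a clipping bias plus a martingale part is sound: the bias estimate $v_t(i)-\tilde v_t(i)\le\eta\,\tilde v_t(i)^2$, the second-moment bound $\E[\tilde v_t(i)^2\mid\mathcal{F}_{t-1}]=\|\A_i\|^2\|\x_t\|^2\le 1$, and hence $\sum_t Y_t^{(i)}\le\eta T$ are all correct. The gap is in the concentration step: with your stated parameters (deviation $\lambda=3\eta T$, variance proxy $v=T$, range $b=2/\eta$), Freedman's exponent is $\lambda^2/\big(2(v+b\lambda/3)\big)=9\eta^2T^2/(6T)=\tfrac{3}{2}\eta^2T=\tfrac{3}{2}\log n$, so each fixed $i$ fails with probability $n^{-3/2}$ and the union bound over $i\in\range{n}$ leaves success probability $1-O(n^{-1/2})$, not the claimed $1-O(1/n)$. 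Moreover, your fallback of enlarging the constant $23^2$ in $T$ cannot repair this: since $\eta=\sqrt{\log n/T}$, the product $\eta^2T$ equals $\log n$ identically, and the exponent above depends only on $\eta^2T$ together with the fixed ratios $\lambda/(\eta T)=3$ and $b\eta=2$, so it is invariant under rescaling $T$. The deviation multiple is also pinned at $3$ because the bias already consumes $\eta T$ of the $4\eta T$ budget. As written, the argument therefore proves a strictly weaker statement than the lemma.

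The route is salvageable, but note first that the paper itself gives no proof of this lemma; it imports it from \cite{clarkson2012sublinear}, where the argument is an exponential-moment bound rather than a bias-plus-Freedman split. Two fixes: (i) for the upper tail only the upper bound on the increments matters, and $X_t^{(i)}=v_t(i)-\E[v_t(i)\mid\mathcal{F}_{t-1}]\le 1/\eta+1+\eta$ because $v_t(i)\le 1/\eta$ while $\E[v_t(i)\mid\mathcal{F}_{t-1}]\ge\A_i\x_t-\eta\ge-1-\eta$ (the clipping bias is two-sided); with $b\approx 1/\eta$ the exponent becomes about $\tfrac{9}{4}\log n$, giving $O(n^{-5/4})$ after the union bound, which suffices. (ii) The cleaner, original argument avoids Freedman entirely: since $\eta v_t(i)\le 1$, the inequality $e^x\le 1+x+x^2$ for $x\le 1$ gives $\E[e^{\eta v_t(i)}\mid\mathcal{F}_{t-1}]\le 1+\eta(\A_i\x_t+\eta)+\eta^2\le e^{\eta\A_i\x_t+2\eta^2}$, so $\exp\big(\eta\sum_{s\le t}(v_s(i)-\A_i\x_s)-2\eta^2t\big)$ is a supermartingale; Markov at level $n^2$ yields $\sum_t(v_t(i)-\A_i\x_t)\le 2\eta T+2\log n/\eta=4\eta T$ except with probability $n^{-2}$, and the union over $i$ gives exactly the stated $1-O(1/n)$ with the constant $4$. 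This MGF computation is what the clipping threshold $1/\eta$ is designed for, and it is both tighter and shorter than the Freedman split.
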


\begin{lemma}[Lemma 2.5 of \cite{clarkson2012sublinear}]\label{lem:2-5}
With probability at least $1-O(1/n)$,
\begin{align}
\Big|\sum_{t\in\range{T}}\A_{i_{t}}\x_{t}-\sum_{t\in\range{T}}p_{t}\trans v_{t}\Big|\leq 10\eta T.
\end{align}
\end{lemma}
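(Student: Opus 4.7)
The plan is to prove this concentration bound by martingale arguments on the two independent sources of randomness inside each iteration: the sampling $i_t \sim p_t$ and the $\ell^2$-sample $j_t$ of $\x_t$. Let $\mathcal{F}_{t-1}$ denote the $\sigma$-algebra generated by all randomness through iteration $t-1$, so that $p_t$ and $\x_t$ are $\mathcal{F}_{t-1}$-measurable. The key telescoping is
\[
\A_{i_t}\x_t - p_t\trans v_t = \bigl(\A_{i_t}\x_t - p_t\trans \A\x_t\bigr) + \bigl(p_t\trans \A\x_t - p_t\trans v_t\bigr),
\]
and I would bound the two sums separately.

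For the first piece, conditional on $\mathcal{F}_{t-1}$ the index $i_t$ is drawn from $p_t$, so the summand has conditional mean zero, and $|\A_{i_t}\x_t|, |p_t\trans\A\x_t|\le 1$ since $\|\A_i\|,\|\x_t\|\le 1$. Azuma--Hoeffding then gives $\bigl|\sum_t (\A_{i_t}\x_t - p_t\trans\A\x_t)\bigr| = O(\sqrt{T\log n})$ except with probability $O(1/n)$.

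For the second piece I would further split into a clipping-bias term and a martingale from the $j_t$-randomness:
\[
p_t\trans \A\x_t - p_t\trans v_t = p_t\trans\E[\tilde v_t - v_t \mid \mathcal{F}_{t-1}] + p_t\trans\bigl(\E[v_t\mid\mathcal{F}_{t-1}] - v_t\bigr),
\]
using $\E[\tilde v_t(i)\mid\mathcal{F}_{t-1}] = \A_i\x_t$ from the definition of the $\ell^2$-sample. The second-moment computation gives $\E[\tilde v_t(i)^2\mid\mathcal{F}_{t-1}] = \|\x_t\|^2\|\A_i\|^2 \le 1$, so by Markov the clipping bias obeys $|\E[\tilde v_t(i)-v_t(i)\mid\mathcal{F}_{t-1}]|\le\E[|\tilde v_t(i)|\mathbf{1}(|\tilde v_t(i)|>1/\eta)\mid\mathcal{F}_{t-1}] \le \eta\,\E[\tilde v_t(i)^2\mid\mathcal{F}_{t-1}]\le \eta$, and averaging against the distribution $p_t$ keeps the per-step bias at most $\eta$, for a cumulative bias of at most $\eta T$.

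The main obstacle is the remaining martingale $\sum_t p_t\trans(\E[v_t\mid\mathcal{F}_{t-1}] - v_t)$, where per-step magnitudes are only bounded by $2/\eta$, which is too loose for plain Azuma. I would apply Freedman's (or Bernstein's) inequality, using the conditional variance bound
\[
\E[(p_t\trans v_t)^2 \mid \mathcal{F}_{t-1}] \le \sum_i p_t(i)\,\E[v_t(i)^2\mid\mathcal{F}_{t-1}] \le \sum_i p_t(i)\,\E[\tilde v_t(i)^2\mid\mathcal{F}_{t-1}] \le 1,
\]
where convexity handles the first inequality and the fact that clipping never increases magnitude handles the second. With total predictable variance at most $T$ and per-step bound $2/\eta$, Freedman yields a deviation of order $\sqrt{T\log n}+\eta^{-1}\log n$ with probability $1-O(1/n)$. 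Substituting $T=23^2\log n/\epsilon^2$ and $\eta=\sqrt{\log n/T}$ makes $\sqrt{T\log n}=\eta T$ and $\eta^{-1}\log n=\eta T$, so all three bounds are of order $\eta T$; tracking constants and taking a union bound over the three high-probability events gives $\bigl|\sum_{t}\A_{i_t}\x_t - \sum_t p_t\trans v_t\bigr|\le 10\eta T$ with probability $1-O(1/n)$, as desired.
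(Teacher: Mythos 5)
Your proof is correct, and it reconstructs essentially the argument behind the cited source: the paper itself imports this lemma from \cite{clarkson2012sublinear} without proof, and the original proof there likewise splits the deviation into the $i_t$-sampling martingale, the clipping bias (controlled via the second moment $\E[\tilde v_t(i)^2\mid\mathcal{F}_{t-1}]\le\|\A_i\|^2\|\x_t\|^2\le 1$), and a variance-bounded martingale from the $j_t$-sampling handled by a Bernstein/Freedman-type inequality. All three of your per-piece bounds are valid, and with $\eta=\sqrt{\log n/T}$ each deviation is indeed $O(\eta T)$ with room to spare inside the constant $10$.
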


\begin{lemma}[Lemma 2.6 of \cite{clarkson2012sublinear}]\label{lem:2-6}
With probability at least $3/4$,
\begin{align}
\sum_{t\in\range{T}}p_{t}\trans v_{t}^{2}\leq 8T.
\end{align}
\end{lemma}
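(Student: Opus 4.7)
The plan is to bound $\E\bigl[\sum_{t\in\range{T}} p_t^{\top} v_t^2\bigr]\leq T$ and then invoke Markov's inequality. Two simple observations drive the argument: (i) clipping never increases magnitude, so $v_t(i)^2\leq \tilde{v}_t(i)^2$ pointwise in $i$, and (ii) the $\ell^2$ sampling of $j_t$ in \lin{update-j} is designed precisely so that the second moment of $\tilde{v}_t(i)$ admits a clean closed form. No quantum ingredient is used in this lemma; once the outcomes $i_t$ of measuring $|p_t\>$ are fixed, everything reduces to the classical calculation.

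Formally, I would set up the filtration $\mathcal{F}_{t-1}$ generated by $(i_1,j_1,\ldots,i_{t-1},j_{t-1})$. Both $\x_t$ (defined in \lin{update-x} from $y_t$, which depends only on $i_1,\ldots,i_{t-1}$) and $p_t$ (prepared in \lin{prepare-pt} from $u_t$, which depends only on $j_1,\ldots,j_{t-1}$) are $\mathcal{F}_{t-1}$-measurable, while $j_t$ is sampled with probability $\x_t(j)^2/\|\x_t\|^2$. Then
$$\E\bigl[\tilde{v}_t(i)^2\,\big|\,\mathcal{F}_{t-1}\bigr]\;=\;\sum_{j=1}^{d}\frac{\x_t(j)^2}{\|\x_t\|^2}\cdot\frac{\A_i(j)^2\,\|\x_t\|^4}{\x_t(j)^2}\;=\;\|\x_t\|^2\,\|\A_i\|^2\;\leq\;1,$$
where the final inequality uses $\|\x_t\|\leq 1$ (because $\x_t=y_t/\max\{1,\|y_t\|\}$) together with $\|\A_i\|\leq 1$ (the input normalization in \sec{prelim}). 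Combining with $v_t(i)^2\leq\tilde{v}_t(i)^2$ and summing against the $\mathcal{F}_{t-1}$-measurable weights $p_t(i)$ yields
$$\E\bigl[p_t^{\top} v_t^2\,\big|\,\mathcal{F}_{t-1}\bigr]\;\leq\;\sum_{i=1}^{n}p_t(i)\,\|\x_t\|^2\,\|\A_i\|^2\;\leq\;1.$$

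Iterating over $t$ via the tower property gives $\E\bigl[\sum_{t\in\range{T}}p_t^{\top}v_t^2\bigr]\leq T$. Since the sum is non-negative, Markov's inequality yields
$$\Pr\!\Bigl[\sum_{t\in\range{T}}p_t^{\top}v_t^2\geq 8T\Bigr]\leq 1/8<1/4,$$
which is exactly the $3/4$ success probability claimed. The only delicate point is the filtration bookkeeping; the main obstacle (if one can call it that) is verifying that $\x_t$ and $p_t$ are already determined before $j_t$ is drawn, which is automatic from the order of operations in \algo{perceptron}. Note that the clipping threshold $1/\eta$ plays no role here—it only matters in \lem{2-4} and \lem{2-5}, where it controls the deviations rather than the raw second moment.
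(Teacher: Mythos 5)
Your proof is correct and is essentially the standard argument: the paper itself does not reprove this lemma (it imports it verbatim from \cite{clarkson2012sublinear}), and the proof there is exactly your computation that the conditional second moment of the $\ell^2$-sampling estimator is $\|\x_t\|^{2}\|\A_i\|^{2}\leq 1$, that clipping only decreases it, and that Markov's inequality then gives failure probability at most $1/8\leq 1/4$. Your filtration bookkeeping (that $\x_t$ and $p_t$ are determined before $j_t$ is drawn) is also the right point to check and is handled correctly.
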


\begin{proof}
We first prove the correctness of \algo{perceptron}. By \lem{A-2}, we have
\begin{align}\label{eqn:perceptron-1}
\sum_{t\in\range{T}}\A_{i_{t}}\x_{t}\geq\max_{\x\in\B_{n}}\sum_{t\in\range{T}}\A_{i_{t}}\x-2\sqrt{2T}\geq T\sigma-2\sqrt{2T}.
\end{align}
On the other hand, \lem{2-4} implies that for any $i\in\range{n}$,
\begin{align}
\sum_{t\in\range{T}}\A_{i}\x_{t}\geq\sum_{t\in\range{T}}v_{t}(i)-4\eta T.
\end{align}
Together with \lem{2-3}, we have
\begin{align}\label{eqn:perceptron-2}
\sum_{t\in\range{T}}p_{t}\trans v_{t}\leq\min_{i\in\range{n}}\sum_{t\in\range{T}}\A_{i}\x_{t}+\eta\sum_{t\in\range{T}}p_{t}\trans v_{t}^{2}+\frac{\log n}{\eta}+4\eta T.
\end{align}
Plugging \lem{2-5}, \lem{2-6}, and \eqn{perceptron-1} into \eqn{perceptron-2}, with probability at least $\frac{3}{4}-2\cdot O(\frac{1}{n})\geq\frac{2}{3}$,
\begin{align}
\min_{i\in\range{n}}\sum_{t\in\range{T}}\A_{i}\x_{t}\geq -\frac{\log n}{\eta}-8\eta T-4\eta T+T\sigma-2\sqrt{2T}-10\eta T\geq T\sigma-22\eta T-\frac{\log n}{\eta}.
\end{align}
Since $T=23^2 \eps^{-2}\log n$ and $\eta=\sqrt{\frac{\log n}{T}}$, we have
\begin{align}
\min_{i\in\range{n}}\A_{i}\bar{\x}=\frac{1}{T}\min_{i\in\range{n}}\sum_{t=1}^{T}\A_{i}\x_{t}\geq\sigma-23\sqrt{\frac{\log n}{T}}\geq\sigma-\epsilon
\end{align}
with probability at least $2/3$, which is exactly \eqn{perceptron-goal}.
\\\\
Now we analyze the gate complexity of \algo{perceptron}. To run \lin{update-x} and \lin{update-y}, we need $d$ time and space to compute and store $\x_{t}$ and $y_{t+1}$; for all $t\in\range{T}$, this takes total complexity $O(dT)=\tilde{O}(\frac{d}{\epsilon^{2}})$. It takes another $O(dT)=\tilde{O}(\frac{d}{\epsilon^{2}})$ cost to compute $j_{t}$ for all $t\in\range{T}$ in \lin{update-j}.

The quantum part of \algo{perceptron} mainly happens at \lin{update-oracle-Ot} and \lin{prepare-pt}, where we prepare the quantum state $|p_{t+1}\>$ instead of computing the coefficients $u_{t+1}(i)$ one by one for all $i\in\range{n}$. To be more specific, we construct an oracle $O_{t}$ such that $O_{t}|i\>|0\>=|i\>|u_{t+1}(i)\>$ for all $i\in\range{n}$. This is achieved iteratively, i.e., at iteration $s$ we map $|i\>|u_{s}(i)\>$ to $|i\>|u_{s+1}(i)\>$. The full details are given in \algo{perceptron-quantum-oracle} in \sec{perceptron-quantum-oracle}; in total, one query to $O_{t}$ is implemented by $2t$ queries to $O_{\A}$ and $\tilde{O}(t)$ additional arithmetic computations.

Finally, we prepare the state $|p_{t+1}\>=\frac{1}{\|u_{t+1}\|_{2}}\cdot\sum_{i\in\range{n}}u_{t+1}(i)|i\>$ in \lin{prepare-pt} using $O(\sqrt{n})$ calls to $O_{t}$, which are equivalent to $O(\sqrt{n}t)$ calls to $O_{\A}$ by \lin{update-oracle-Ot} and $\tilde{O}(\sqrt{n}t)$ additional arithmetic computations. Therefore, the total complexity of \lin{prepare-pt} for all $t\in\range{T}$ is
\begin{align}
\sum_{t=1}^{T}\tilde{O}(\sqrt{n}t)=\tilde{O}(\sqrt{n}T^{2})=\tilde{O}\Big(\frac{\sqrt{n}}{\epsilon^{4}}\Big).
\end{align}
In all, the total complexity of \algo{perceptron} is $\tilde{O}\big(\frac{\sqrt{n}}{\epsilon^{4}}+\frac{d}{\epsilon^{2}}\big)$, establishing our statement.

Finally, the output $\bar{\x}$ has a succinct classical representation with space complexity $O(\log n/\epsilon^{2})$. To achieve this, we save $2T=O(\log n/\epsilon^{2})$ values in \algo{perceptron}: $i_{1},\ldots,i_{T}$ and $\|y_{1}\|,\ldots,\|y_{T}\|$; it then only takes $O(\log n/\epsilon^{2})$ cost to recover any coordinate of $\bar{\x}$ by \lin{update-x} and \lin{update-y}.
\end{proof}

\begin{remark}\label{rem:perceptron-PAC}
\thm{perceptron-quantum} could also be applied to the PAC model. For the case where there exists a hyperplane classifying all data points correctly with margin $\sigma$, and assume that the margin is not small in the sense that $\frac{1}{\sigma^{2}}<d$, PAC learning theory implies that the number of examples needed for training a classifier of error $\delta$ is $O(1/\sigma^{2}\delta)$. As a result, we have a quantum algorithm that computes a $\sigma/2$-approximation to the best classifier with cost
\begin{align}
\tilde{O}\Big(\frac{\sqrt{1/\sigma^{2}\delta}}{\sigma^{4}}+\frac{d}{\sigma^{2}}\Big)=\tilde{O}\Big(\frac{1}{\sigma^{5}\sqrt{\delta}}+\frac{d}{\sigma^{2}}\Big).
\end{align}
This is better than the classical complexity $O(\frac{1}{\sigma^{4}\delta}+\frac{d}{\sigma^{2}})$ in \cite{clarkson2012sublinear} as long as $\delta\leq\sigma^{2}$, which is plausible under the assumption that the margin $\sigma$ is large.
\end{remark}

%=========================================
\subsection{Quantum speedup for online gradient descent}\label{sec:quantum-online-gradient}
\paragraph{Norm estimation by amplitude estimation.}
We further improve the dependence in $d$ to $\tilde{O}(\sqrt{d})$. To achieve this, we cannot update $\x_{t}$ and $y_{t}$ in \lin{update-x} and \lin{update-y} by each coordinate because storing $\x_{t}$ or $y_{t}$ would already take cost at least $d$. We solve this issue by not updating $\x_{t}$ and $y_{t}$ explicitly and instead only computing $\|y_{t}\|$ for all $i\in\range{T}$. This norm estimation is achieved by the following lemma:
\begin{lemma}\label{lem:norm-estimation}
Assume that $F\colon\range{d}\to[0,1]$ with a quantum oracle $O_{F}|i\>|0\>=|i\>|F(i)\>$ for all $i\in\range{d}$. Denote $m=\frac{1}{d}\sum_{i=1}^{d}F(i)$. Then for any $\delta>0$, there is a quantum algorithm that uses $O(\sqrt{d}/\delta)$ queries to $O_{F}$ and returns an $\tilde{m}$ such that $|\tilde{m}-m|\leq\delta m$ with probability at least 2/3.
\end{lemma}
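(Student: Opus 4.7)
The plan is to encode $m$ as the measurement probability of an easily prepared quantum state and then invoke the amplitude-estimation primitive of Brassard, H\o yer, Mosca and Tapp. Concretely, I would first build a unitary $U$ that, starting from the all-zero register, produces
\begin{align*}
|\Phi\> = \frac{1}{\sqrt{d}} \sum_{i=1}^{d} |i\>\otimes\Big(\sqrt{F(i)}\,|1\> + \sqrt{1-F(i)}\,|0\>\Big).
\end{align*}
This is standard: prepare the uniform superposition $\frac{1}{\sqrt{d}}\sum_i |i\>$ by Hadamards; call $O_F$ to coherently write $F(i)$ into a scratch register; perform a rotation on a fresh qubit conditioned on the bit representation of $F(i)$; and uncompute $F(i)$ with $O_F^{-1}$. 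Each invocation of $U$ costs two queries to $O_F$ and $\poly(\log d)$ elementary gates, and the probability of observing $|1\>$ in the flagged ancilla of $|\Phi\>$ is exactly $\frac{1}{d}\sum_i F(i) = m$.

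Next I would apply amplitude estimation to $(U, U^{-1})$ with $M = C\sqrt{d}/\delta$ rounds for a sufficiently large constant $C$. The standard guarantee of \cite{brassard2002quantum} yields, with constant probability, an estimate $\tilde m$ satisfying
\begin{align*}
|\tilde m - m| \leq \frac{2\pi\sqrt{m(1-m)}}{M} + \frac{\pi^2}{M^2}.
\end{align*}
Plugging $M = C\sqrt{d}/\delta$ into this bound gives a first term of at most $2\pi\sqrt{m}\,\delta/(C\sqrt{d})$ and a second term of $\pi^2\delta^2/(C^2 d)$. Either $F$ is identically zero on $\range{d}$, in which case $U$ always flags $|0\>$ and $\tilde m = 0 = m$ trivially, or some $F(i)>0$ and thus $m \geq F(i)/d \geq 1/d$ at the working precision; in the latter case both error terms are bounded by $\delta m/2$ once $C$ is chosen large enough, giving $|\tilde m - m| \leq \delta m$ as required.

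Finally, I would boost the native success probability $8/\pi^2$ of amplitude estimation to $2/3$ by running $O(1)$ independent copies and returning the median, which inflates the query count by only a constant factor. The total number of queries to $O_F$ is then $O(M) = O(\sqrt{d}/\delta)$, matching the claim.

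The one delicate point is converting the additive error that amplitude estimation natively provides into the multiplicative relative error asked for in the lemma. As sketched above, this conversion is automatic whenever $m = \Omega(1/d)$, which holds in any nontrivial instance under the standard finite-precision convention on $F$; the degenerate all-zero case is handled by the fact that the procedure then outputs $0$ deterministically. Beyond this calibration of $M$, no further ingredients are required.
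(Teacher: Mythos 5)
Your proof is correct, but it takes a genuinely different route from the paper's. The paper never forms a single amplitude encoding $m$; instead it decomposes $F$ into its $l=O(1)$ bit-planes $F_{k}$, runs the relative-error quantum counting theorem (Theorem 15 of Brassard--H\o yer--Mosca--Tapp) separately on each Boolean function $F_{k}$ to estimate the Hamming weights $n_{k}=\sum_{i}F_{k}(i)$, median-amplifies each estimate, and recombines via $\tilde{m}=\frac{1}{d}\sum_{k}\hat{n}_{k}/2^{k}$. That route gets the multiplicative guarantee ``for free'' from the counting theorem (whose complexity $O(\frac{1}{\epsilon}\sqrt{d/t})$ already scales with the unknown count $t\geq 1$), at the cost of an $l\log l$ overhead and a slightly more fiddly recombination. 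Your route --- load $\sqrt{F(i)}$ into an ancilla amplitude by a controlled rotation and run amplitude estimation once on the flag qubit --- is more direct and avoids the bit-plane decomposition, but it forces you to convert the native additive error $2\pi\sqrt{m(1-m)}/M+\pi^{2}/M^{2}$ into a relative one by hand, which is exactly where your one small slip sits: from $F(i)>0$ you write $m\geq F(i)/d\geq 1/d$, but $F(i)\leq 1$, so the correct lower bound under $l$-bit precision is $F(i)\geq 2^{-l}$ and hence $m\geq 2^{-l}/d$. Since $l=O(1)$ (the same convention the paper invokes), this is still $\Omega(1/d)$ and your calibration of $C$ goes through after absorbing the factor $2^{l/2}$, so the argument stands; just state the bound as $m=\Omega(1/d)$ rather than $m\geq 1/d$. (Also, the native success probability $8/\pi^{2}\approx 0.81$ already exceeds $2/3$, so the final median step is harmless but unnecessary.)
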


Our proof of \lem{norm-estimation} is based on amplitude estimation:
\begin{theorem}[Theorem 15 of \cite{brassard2002quantum}]\label{thm:quantum-counting}
For any $0<\epsilon<1$ and Boolean function $f\colon\range{d}\rightarrow\{0,1\}$ with quantum oracle $O_{f}|i\>|0\>=|i\>|f(i)\>$ for all $i\in\range{d}$, there is a quantum algorithm that outputs an estimate $\hat{t}$ to $t=|f^{-1}(1)|$ such that
\begin{align}
|\hat{t}-t|\leq\epsilon t
\end{align}
with probability at least $8/\pi^{2}$, using $O(\frac{1}{\epsilon}\sqrt{\frac{d}{t}})$ evaluations of $O_{f}$. If $t=0$, the algorithm outputs $\hat{t}=0$ with certainty and $O_{f}$ is evaluated $O(\sqrt{d})$ times.
\end{theorem}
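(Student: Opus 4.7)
The plan is to reduce quantum counting to amplitude estimation via quantum phase estimation on a Grover-type iterate. First I would construct the preparation unitary $A$ that uses a single query to $O_{f}$ (together with $O(\log d)$ Hadamards) to map $|0\rangle$ to
\begin{align*}
|\psi\rangle = \frac{1}{\sqrt{d}}\sum_{i\in\range{d}}|i\rangle|f(i)\rangle = \sin(\theta)\,|\psi_{\mathrm{good}}\rangle + \cos(\theta)\,|\psi_{\mathrm{bad}}\rangle,
\end{align*}
where $\sin^{2}\theta = t/d$ and the good/bad subspaces are distinguished by the value of the last qubit. Since $t = d \sin^{2}\theta$, it suffices to estimate $\theta$ to additive error of order $\epsilon \sin\theta$, which is a quantum amplitude-estimation task for the unitary $A$ and the ``last qubit equals $1$'' projector.

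Next I would introduce the Grover iterate $Q := -A S_{0} A^{-1} S_{\chi}$, where $S_{\chi}$ puts a $-1$ on states with last qubit $|1\rangle$ and $S_{0} = I - 2|0\rangle\langle 0|$. Each application of $Q$ costs two queries to $O_{f}$. A short two-dimensional geometric argument shows that $Q$ preserves the span of $|\psi_{\mathrm{good}}\rangle$ and $|\psi_{\mathrm{bad}}\rangle$, acting there as a rotation by angle $2\theta$, so its restricted spectrum is $\{e^{\pm 2 i \theta}\}$, and $|\psi\rangle$ is an equal-weight superposition of the two eigenvectors. Running quantum phase estimation on $Q$ with $M$ controlled applications and input $|\psi\rangle$ therefore returns an integer $y\in\{0,\dots,M-1\}$ whose distribution is a squared-sinc peaked at the integer nearest $M\theta/\pi$; the standard QPE output analysis yields, with probability at least $8/\pi^{2}$,
\begin{align*}
\bigl|\sin(\pi y/M) - \sin\theta\bigr| \leq \frac{2\pi \sin\theta\cos\theta}{M} + \frac{\pi^{2}}{M^{2}}.
\end{align*}
Setting $\hat t := d\sin^{2}(\pi y/M)$ and expanding $|\hat t - t| = d\,|\sin(\pi y/M)-\sin\theta|\cdot|\sin(\pi y/M)+\sin\theta|$, one checks that choosing $M = \lceil C\epsilon^{-1}\sqrt{d/t}\,\rceil$ for an absolute constant $C$ suffices to force $|\hat t - t|\leq \epsilon t$, for a query cost of $O(\epsilon^{-1}\sqrt{d/t})$.

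The remaining wrinkle is that the target $M$ depends on the unknown $t$. I would handle this with an exponential-search loop over $M = 2^{0}, 2^{1}, \dots, 2^{\lceil \log_{2}\sqrt{d}\rceil}$, stopping at the first scale on which successive runs agree to within a constant factor; a geometric-series argument confirms that the total number of queries is still $O(\epsilon^{-1}\sqrt{d/t})$. The edge case $t=0$ is carved out separately: a preliminary Grover search for any $i$ with $f(i)=1$ uses $O(\sqrt{d})$ queries and reports $\hat t = 0$ with certainty when no marked index exists. The main technical obstacle is the QPE output analysis near $\theta = 0$, where the additive bound degrades as $\sin\theta \to 0$; this is precisely what forces both the separate $t=0$ branch and the exponential-search wrapper, and why the basic bound above gives a multiplicative guarantee on $t$ rather than merely an additive one on $\sin\theta$.
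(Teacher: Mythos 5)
This statement is quoted verbatim as Theorem 15 of \cite{brassard2002quantum}; the paper imports it by citation and gives no proof of its own, so there is nothing internal to compare against. Your reconstruction---amplitude estimation via phase estimation on the Grover iterate $Q=-AS_{0}A^{-1}S_{\chi}$ with $a=\sin^{2}\theta=t/d$, an exponential search over $M$ to cope with the unknown $t$, and a separate branch that outputs $0$ with certainty when $t=0$---is precisely the argument in the cited source, and it is sound. Two small points of hygiene: the displayed inequality is the Brassard--H{\o}yer--Mosca--Tapp bound on $\abs{\sin^{2}(\pi y/M)-\sin^{2}\theta}$ (i.e., on the amplitude $a$), not on $\abs{\sin(\pi y/M)-\sin\theta}$, so as written it carries a spurious factor of $\sin\theta$; since your very next step factors the difference of squares, the final estimate $\abs{\hat t-t}\leq\epsilon t$ is unaffected, but the intermediate claim should be restated for $\sin^{2}$. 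Second, the exponential-search wrapper accumulates failure probability across stages; to preserve the stated $8/\pi^{2}$ guarantee one should either use the original stopping rule (halt at the first nonzero output of the counting subroutine, whose correctness at that scale is covered by the single-run guarantee) or boost each preliminary stage by repetition, rather than relying on two consecutive unboosted runs agreeing---two garbage runs can agree by chance.
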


\begin{proof}
Assume that $F(i)$ has $l$ bits for precision for all $i\in\range{d}$ (in our paper, we take $l=O(1)$, say $l=64$ for double float precision), and for all $k\in\range{l}$ denote $F_{k}(i)$ as the $k^{\text{th}}$ bit of $F(i)$; denote $n_{k}=\sum_{i\in\range{d}}F_{k}(i)$.

We apply \thm{quantum-counting} to all the $l$ bits of $n_{k}$ using $O(\sqrt{d}/\delta)$ queries (taking $\epsilon=\delta/2$), which gives an approximation $\hat{n}_{k}$ of $n_{k}$ such that with probability at least $8/\pi^{2}$ we have $|n_{k}-\hat{n}_{k}|\leq\delta n_{k}/2$ if $n_{k}\geq 1$, and $\hat{n}_{k}=0$ if $n_{k}=0$. Running this procedure for $\Theta(\log l)$ times and take the median of all returned $\hat{n}_{k}$, and do this for all $k\in\range{l}$, Chernoff's bound promises that with probability $2/3$ we have
\begin{align}
|n_{k}-\hat{n}_{k}|\leq \delta n_{k}\quad\forall\,k\in\range{l}.
\end{align}
As a result, if we take $\tilde{m}=\frac{1}{d}\sum_{k\in\range{l}}\frac{\hat{n}_{k}}{2^{k}}$, and observe that $m=\frac{1}{d}\sum_{k\in\range{l}}\frac{n_{k}}{2^{k}}$, with probability at least $2/3$ we have
\begin{align}
|\tilde{m}-m|\leq \frac{1}{d}\sum_{k\in\range{l}}\Big|\frac{\hat{n}_{k}}{2^{k}}-\frac{n_{k}}{2^{k}}\Big|\leq \frac{1}{d}\sum_{k\in\range{l}}\frac{\delta n_{k}}{2^{k}}=\delta m.
\end{align}
The total quantum query complexity is $O(l\log l\cdot\sqrt{d}/\delta)=O(\sqrt{d}/\delta)$.
\end{proof}

\paragraph{Quantum algorithm with $\tilde{O}(\sqrt{d})$ cost.}
Instead of updating $y_{t}$ explicitly in \lin{update-y} of \algo{perceptron}, we save the $i_{t}$ for all $t\in\range{T}$ in \lin{update-it}, which only takes $\tilde{O}(1/\epsilon^{2})$ cost in total but we can directly generate $y_{t}$ given $i_{1},\ldots,i_{t}$. Furthermore, notice that the probabilities in the $\ell^{2}$ sampling in \lin{update-j} do not change because $\x_t(j)^2/\norm{\x_t}^2=y_t(j)^2/\norm{y_t}^2$; it suffices to replace $\tilde{v}_t(i)=\A_i(j_t)\norm{\x_t}^2/\x_t(j_t)$ by $\tilde{v}_t(i)=\A_i(j_t)\norm{y_t}^2/(y_t(j_t)\max\{1,\|y_{t}\|\})$ in \lin{update-oracle-Ot}. These observations result in \algo{perceptron-d} with the following result:

\begin{theorem}\label{thm:perceptron-quantum-d}
With success probability at least $2/3$, there is a quantum algorithm that returns a succinct classical representation of a vector $\bar{\x}\in\R^{d}$ such that
\begin{align}\label{eqn:perceptron-goal-d}
\A_{i}\bar{\x}\geq \max_{\x\in\B_{d}}\min_{i'\in\range{n}}\A_{i'}\x-\epsilon\quad\forall\,i\in\range{n},
\end{align}
using $\tilde{O}\big(\frac{\sqrt{n}}{\epsilon^{4}}+\frac{\sqrt{d}}{\epsilon^{8}}\big)$ quantum gates.
\end{theorem}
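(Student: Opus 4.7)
\textbf{Proof proposal for \thm{perceptron-quantum-d}.}

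The plan is to modify \algo{perceptron} so as to avoid ever storing or updating the $d$-dimensional vectors $y_t$ and $x_t$, which contributed the $\tilde O(d/\epsilon^2)$ term to the previous complexity. The key observation is that since $y_1 = 0$ and $y_{t+1} = y_t + (1/\sqrt{2T})A_{i_t}$, the vector $y_t$ is fully determined by the indices $(i_1, \ldots, i_{t-1})$; together with precomputed scalars $\|y_1\|, \ldots, \|y_T\|$ and $y_s(j_s)$, this forms a succinct $\tilde O(1/\epsilon^2)$-size classical state from which any coordinate $y_t(j) = (1/\sqrt{2T})\sum_{s<t} A_{i_s}(j)$ is recovered in $\tilde O(t)$ queries to $O_A$. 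I will package this access into a coherent oracle $O_{y_t}|j\>|0\> = |j\>|y_t(j)\>$ using the same compute/uncompute pattern as in \algo{perceptron-quantum-oracle}.

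Given $O_{y_t}$, two quantum subroutines replace the $\Theta(d)$-time classical operations at each iteration. First, the $\ell^2$-sample $j_t$ of \lin{update-j} is produced by preparing the state $|y_t\> \propto \sum_j y_t(j)|j\>$ via \prop{state-prep} and measuring in the computational basis; since $x_t = y_t/\max\{1,\|y_t\|\}$, the measurement distribution coincides with $x_t(j)^2/\|x_t\|^2$ exactly, at cost $O(\sqrt{d})$ calls to $O_{y_t}$, i.e., $\tilde O(\sqrt{d}\cdot t)$ queries to $O_A$. Second, I estimate $\|y_t\|^2$ by applying \lem{norm-estimation} to the function $F(j) := y_t(j)^2/M$ for an a priori upper bound $M = \tilde O(1/\epsilon^2)$ on $\|y_t\|_\infty^2$, obtaining $\widetilde{\|y_t\|^2}$ with multiplicative error $\delta$ at cost $\tilde O(\sqrt{d}\cdot t/\delta)$ queries to $O_A$. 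These two estimates replace the scaling factor in the definition of $\tilde v_t(i)$ inside the weight oracle $O_t$ of \algo{perceptron-quantum-oracle}. The state-preparation cost for $|p_{t+1}\>$ in \lin{prepare-pt} remains $\tilde O(\sqrt{n}/\epsilon^4)$ as in the proof of \thm{perceptron-quantum}, while summing the new norm-estimation cost over $t \in [T]$ gives $\tilde O(\sqrt{d}T^2/\delta)$, which equals the claimed $\tilde O(\sqrt{d}/\epsilon^8)$ for the choice $\delta = \Theta(\epsilon^4)$.

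For correctness I would trace the noise from the approximate norm through the analysis of \thm{perceptron-quantum}: a multiplicative $(1\pm O(\delta))$ perturbation of $\tilde v_t(i)$, after clipping to $[-1/\eta, 1/\eta]$, becomes an additive perturbation of each entry of $v_t$ bounded by $O(\delta/\eta)$. Feeding this into \lem{2-3} adds an $O(\delta T/\eta)$ slack to $\sum_t p_t^\top v_t$ and an $O(\delta T/\eta^2)$ slack to the quadratic term $\eta\sum_t p_t^\top v_t^2$, while the concentration bounds \lem{2-4}, \lem{2-5} still hold because the perturbed $\hat v_t$ remains bounded in $[-1/\eta, 1/\eta]$ and its conditional expectation deviates from $A_i x_t$ by at most $O(\delta/\eta)$. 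With $\eta = \Theta(\epsilon)$ and $T = \Theta(\epsilon^{-2})$, the choice $\delta = \Theta(\epsilon^4)$ keeps these slacks within $O(\epsilon T)$; dividing by $T$ in the averaging step preserves the final guarantee $\min_i A_i \bar x \geq \sigma - \epsilon$.

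The hard part will be making this combined error analysis rigorous: the amplitude-estimation noise in each $\widetilde{\|y_t\|}$, the randomness from measuring $|y_t\>$, and the recursive dependence of the weight state $|p_{t+1}\>$ on all previous $\tilde v_s$ are intertwined, so one must argue that the perturbed variables $\hat v_t$ still form an adapted sequence suitable for the martingale-type arguments underlying \lem{2-4}--\lem{2-6}, and that the $O(1/n)$-failure probabilities from \prop{state-prep} and \lem{norm-estimation} can be boosted to $O(1/(nT))$ at only a logarithmic overhead and then union-bounded across all $T$ iterations to maintain the final constant success probability.
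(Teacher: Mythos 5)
Your proposal follows essentially the same route as the paper's own proof: keep only $i_1,\dots,i_T$ and reconstruct coordinates of $y_t$ on demand through a coherent oracle costing $\tilde O(t)$ queries, $\ell^2$-sample by preparing $|y_t\rangle$ with \prop{state-prep} and measuring (using $\x_t(j)^2/\norm{\x_t}^2=y_t(j)^2/\norm{y_t}^2$), estimate $\|y_t\|^2$ by \lem{norm-estimation}, and absorb the multiplicative norm error as an additive $O(\delta/\eta)$ perturbation of the clipped $v_t$ fed into \lem{2-3}--\lem{2-6}, yielding the same $\tilde O(\sqrt{n}/\epsilon^4+\sqrt{d}/\epsilon^8)$ bound. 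The only cosmetic differences are your precision choice $\delta=\Theta(\epsilon^4)$ versus the paper's $\delta=\eta^2=\Theta(\epsilon^2)$ (the two bookkeepings land on the same $\sqrt{d}/\epsilon^8$), and the paper handles the failure probabilities exactly as you anticipate, by taking medians of $O(\log T)$ norm estimates and union-bounding over the $T$ iterations.
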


\begin{algorithm}
\KwInput{$\eps>0$, a quantum oracle $O_{\A}$ for $\A \in \R^{n \times d}$.}
\KwOutput{$\bar{\x}$ that satisfies \eqn{perceptron-goal}.}
Let $T=27^2 \eps^{-2}\log n$, $y_1=\vecc{0}_{d}$, $\eta=\sqrt{\frac{\log n}{T}}$, $u_{1}=\vecc{1}_{n}$, $|p_1\>=\frac{1}{\sqrt{n}}\sum_{i\in\range{n}}|i\>$\;
    \For{$t=1$ \KwTo $T$}{
        Measure the state $|p_t\>$ in the computational basis and denote the output as $i_t\in\range{n}$\; \label{lin:update-it-d}
        Define\footnotemark\,$y_{t+1}:=y_t + \frac{1}{\sqrt{2T}} \A_{i_t}$\; \label{lin:update-y-d}
        Apply \lem{norm-estimation} for $2\lceil\log T\rceil$ times to estimate $\|y_{t}\|^{2}$ with precision $\delta=\eta^{2}$, and take the median of all the $2\lceil\log T\rceil$ outputs, denoted $\widetilde{\|y_{t}\|}^{2}$\; \label{lin:estimate-norm-d}
        Choose $j_t\in\range{d}$ by $j_t=j$ with probability $y_t(j)^2/\norm{y_t}^2$, which is achieved by applying \algo{state-preparation} to prepare the quantum state $|y_{t}\>$ and measure in the computational basis\; \label{lin:update-j-d}
        For all $i\in\range{n}$, denote $\tilde{v}_t(i)=\A_i(j_t)\widetilde{\|y_{t}\|}^2/\big(y_t(j_t)\max\{1,\widetilde{\|y_{t}\|}\}\big)$, $v_t(i)=\clip(\tilde{v}_t(i), 1/\eta)$, and $u_{t+1}(i)=u_t(i) (1-\eta v_t(i) + \eta^2 v_t(i)^2)$. Apply \algo{perceptron-quantum-oracle} to prepare an oracle $O_{t}$ such that $O_{t}|i\>|0\>=|i\>|u_{t+1}(i)\>$ for all $i\in\range{n}$, using $2t$ queries to $O_{\A}$ and $\tilde{O}(t)$ additional arithmetic computations\; \label{lin:update-oracle-Ot-d}
        Prepare the state $|p_{t+1}\>=\frac{1}{\|u_{t+1}\|_{2}}\sum_{i\in\range{n}}u_{t+1}(i)|i\>$ using \algo{state-preparation} and $O_{t}$; \label{lin:prepare-pt-d}
}
Return $\bar{\x} = \frac{1}{T}\sum_{t=1}^{T}\frac{y_t}{\max\{1,\widetilde{\norm{y_t}}\}}$\;
\caption{Quantum linear classification algorithm with $\tilde{O}(\sqrt{d})$ cost.}
\label{algo:perceptron-d}
\end{algorithm}
\footnotetext{The meaning of the definition here is the same as \fnote{defn-vector} in \algo{perceptron}.}

\begin{proof}
For clarification, we denote
\begin{align}
\tilde{v}_{t,\text{approx}}(i)=\frac{\A_i(j_t)\widetilde{\|y_{t}\|}^2}{y_t(j_t)\max\{1,\widetilde{\|y_{t}\|}\}}, \qquad \tilde{v}_{t,\text{true}}(i)=\frac{\A_i(j_t)\|y_{t}\|^2}{y_t(j_t)\max\{1,\|y_{t}\|\}}\quad \forall\,i\in\range{n}.
\end{align}
In other words, the $\tilde{v}_{t}$ in \lin{update-oracle-Ot-d} of \algo{perceptron-d} is $\tilde{v}_{t,\text{approx}}$, an approximation of $\tilde{v}_{t,\text{true}}$. We prove:
\begin{align}\label{eqn:perceptron-d-1}
|\tilde{v}_{t,\text{approx}}(i)-\tilde{v}_{t,\text{true}}(i)|\leq\eta\quad \forall\,i\in\range{n}.
\end{align}
Without loss generality, we can assume that $\tilde{v}_{t,\text{true}}(i),\tilde{v}_{t,\text{approx}}(i)\leq 1/\eta$; otherwise, they are both truncated to $1/\eta$ by the clip function in \lin{update-oracle-Ot-d} and no error occurs. For convenience, we denote $m=\|y_{t}\|^2$ and $\tilde{m}=\widetilde{\|y_{t}\|}^2$. Then
\begin{align}\label{eqn:perceptron-d-2}
|\tilde{v}_{t,\text{approx}}(i)-\tilde{v}_{t,\text{true}}(i)|=\tilde{v}_{t,\text{true}}(i)\cdot\Big|\frac{\tilde{v}_{t,\text{approx}}(i)}{\tilde{v}_{t,\text{true}}(i)}-1\Big|\leq\frac{1}{\eta}\cdot\Big|\frac{\tilde{v}_{t,\text{approx}}(i)}{\tilde{v}_{t,\text{true}}(i)}-1\Big|.
\end{align}
When $\|y_{t}\|\geq 1$ we have $\frac{\tilde{v}_{t,\text{approx}}(i)}{\tilde{v}_{t,\text{true}}(i)}=\frac{\tilde{m}}{m}$; when $\|y_{t}\|\leq 1$ we have $\frac{\tilde{v}_{t,\text{approx}}(i)}{\tilde{v}_{t,\text{true}}(i)}=\sqrt{\frac{\tilde{m}}{m}}$. Because in \lin{estimate-norm-d} $\widetilde{\|y_{t}\|}^{2}$ is the median of $2\lceil\log T\rceil$ executions of \lem{norm-estimation}, with failure probability at most $1-(2/3)^{2\log T}=O(1/T^{2})$ we have $|\frac{\tilde{m}}{m}-1|\leq\delta$; given there are $T$ iterations in total, the probability that \lin{estimate-norm-d} always succeeds is at least $1-T\cdot O(1/T^{2})=1-o(1)$, and we have
\begin{align}
\Big|\frac{\tilde{m}}{m}-1\Big|,\ \Big|\sqrt{\frac{\tilde{m}}{m}}-1\Big|\leq\delta.
\end{align}
Plugging this into \eqn{perceptron-d-2}, we have
\begin{align}\label{eqn:perceptron-d-3}
|\tilde{v}_{t,\text{approx}}(i)-\tilde{v}_{t,\text{true}}(i)|\leq\frac{\delta}{\eta}=\eta,
\end{align}
which proves \eqn{perceptron-d-1}.

Now we prove the correctness of \algo{perceptron-d}. By \eqn{perceptron-d-1} and \lem{2-4}, with probability at least $1-O(1/n)$ we have
\begin{align}
\max_{i\in\range{n}}\sum_{t\in\range{T}}\big[v_{t}(i)-\A_{i}\x_{t}\big]\leq 4\eta T+\eta T=5\eta T,
\end{align}
where $\x_{t}=\frac{y_t}{\max\{1,\widetilde{\norm{y_t}}\}}$ for all $t\in\range{T}$. By \eqn{perceptron-d-1} and \lem{2-5}, with probability at least $1-O(1/n)$ we have
\begin{align}
\Big|\sum_{t\in\range{T}}\A_{i_{t}}\x_{t}-\sum_{t\in\range{T}}p_{t}\trans v_{t}\Big|\leq 10\eta T+\eta T=11\eta T;
\end{align}
by \eqn{perceptron-d-1} and \lem{2-6}, with probability at least $3/4$ we have
\begin{align}
\sum_{t\in\range{T}}p_{t}\trans v_{t}^{2}\leq 8T+2T=10T.
\end{align}
As a result, similar to the proof of \thm{perceptron-quantum}, we have
\begin{align}
\min_{i\in\range{n}}\sum_{t\in\range{T}}\A_{i}\x_{t}\geq -\frac{\log n}{\eta}-10\eta T-5\eta T+T\sigma-2\sqrt{2T}-11\eta T\geq T\sigma-26\eta T-\frac{\log n}{\eta}.
\end{align}
Since $T=27^2 \eps^{-2}\log n$ and $\eta=\sqrt{\frac{\log n}{T}}$, we have
\begin{align}
\min_{i\in\range{n}}\A_{i}\bar{\x}=\frac{1}{T}\min_{i\in\range{n}}\sum_{t=1}^{T}\A_{i}\x_{t}\geq\sigma-27\sqrt{\frac{\log n}{T}}\geq\sigma-\epsilon
\end{align}
with probability at least $2/3$, which is exactly \eqn{perceptron-goal-d}.

It remains to analyze the time complexity. Same as the proof of \thm{perceptron-quantum}, the complexity in $n$ is $\tilde{O}(\frac{\sqrt{n}}{\epsilon^4})$. It remains to show that the complexity in $d$ is $\tilde{O}(\frac{\sqrt{n}}{\epsilon^8})$. The cost in $d$ in \algo{perceptron} and \algo{perceptron-d} differs at \lin{estimate-norm-d} and \lin{update-j-d}. We first look at \lin{estimate-norm-d}; because
\begin{align}
y_{t}=\frac{1}{\sqrt{2T}}\sum_{\tau=1}^{T}\A_{i_{\tau}},
\end{align}
one query to a coefficient of $y_{t}$ takes $t=\tilde{O}(1/\epsilon^{2})$ queries to $O_{\A}$. Next, since $\A_{i}\in\B_{n}$ for all $i\in\range{n}$, we know that $\A_{ij}\in[-1,1]$ for all $i\in\range{n}$, $j\in\range{d}$; to apply \lem{norm-estimation} ($F$ should have image domain in $[0,1]$) we need to renormalize $y_{t}$ by a factor of $t=\tilde{O}(1/\epsilon^{2})$. In addition, notice that $\delta=\eta^{2}=\Theta(\epsilon^{2})$; as a result, the query complexity of executing \lem{norm-estimation} is $\tilde{O}(\sqrt{d}/\epsilon^{2})$. Finally, there are in total $T=\tilde{O}(1/\epsilon^{2})$ iterations. Therefore, the total complexity in \lin{estimate-norm-d} is
\begin{align}\label{eqn:estimate-norm-d-Line5}
\tilde{O}\Big(\frac{1}{\epsilon^{2}}\Big)\cdot \tilde{O}\Big(\frac{1}{\epsilon^{2}}\Big)\cdot \tilde{O}\Big(\frac{\sqrt{d}}{\epsilon^{2}}\Big)\cdot \tilde{O}\Big(\frac{1}{\epsilon^{2}}\Big)=\tilde{O}\Big(\frac{\sqrt{d}}{\epsilon^{8}}\Big).
\end{align}

Regarding the complexity in $d$ in \lin{update-j-d}, the cost is to prepare the pure state $|y_{t}\>$ whose coefficient is proportional to $y_{t}$. To achieve this, we need $t=\tilde{O}(1/\epsilon^{2})$ queries to $O_{\A}$ (for summing up the rows $\A_{i_{1}},\ldots,\A_{i_{t}}$) such that we have an oracle $O_{y_{t}}$ satisfying $O_{y_{t}}|j\>|0\>=|j\>|y_{t}(j)\>$ for all $j\in\range{d}$. By \algo{state-preparation}, the query complexity of preparing $|y_{t}\>$ using $O_{y_{t}}$ is $O(\sqrt{d})$. Because there are in total $T=\tilde{O}(1/\epsilon^{2})$ iterations, the total complexity in \lin{update-j-d} is
\begin{align}\label{eqn:estimate-norm-d-Line6}
\tilde{O}\Big(\frac{1}{\epsilon^{2}}\Big)\cdot O(\sqrt{d})\cdot \tilde{O}\Big(\frac{1}{\epsilon^{2}}\Big)=\tilde{O}\Big(\frac{\sqrt{d}}{\epsilon^{4}}\Big).
\end{align}

In all, the total complexity in $d$ is $\tilde{O}(\sqrt{d}/\epsilon^{8})$ as dominated by \eqn{estimate-norm-d-Line5}. Finally, $\bar{\x}$ has a succinct classical representation: using $i_{1},\ldots,i_{T}$ obtained from \lin{update-it-d} and $\widetilde{\|y_{1}\|}^{2},\ldots,\widetilde{\|y_{T}\|}^{2}$ obtained from \lin{estimate-norm-d}, we could restore a coordinate of $\bar{\x}$ in time $T=\tilde{O}(1/\epsilon^{2})$.
\end{proof}

\begin{remark}
For practical applications of linear classification, typically the number of data points $n$ is larger than the dimension $d$, so in practice \thm{perceptron-quantum} might perform better than \thm{perceptron-quantum-d}. Nevertheless, the $\tilde{O}(\sqrt{d})$ complexity in \thm{perceptron-quantum-d} matches our quantum lower bound (see \thm{perceptron-quantum-lower}).
\end{remark}

%%%%%%%%%%%%%%%%%%%%%%%%%%%%%%%%%%%%%%%%%%%%%%%%%%%%%%%%%%%%%%%%%%%%%%%%%%%%%%

\section{Applications}\label{sec:applications}
As introduced in \sec{techniques}, the $\ell^{2}$ sampling of $\x$ picks $j_t\in\range{d}$ by $j_t =j$ with probability $\x(j)^2/\norm{\x}^2$, and the expectation of the random variable $\A_i(j_t)\norm{\x}^2/\x(j_t)$ is $\A_{i}\x$. Here, if we consider some alternate random variables, we could give unbiased estimators of nonlinear functions of $\A$. We first look at the general case of applying kernel functions \cite{scholkopf2002learning} in \sec{kernel}. We then look at the special case of quadratic problems in \sec{quadratic-problems} as they enjoy simple forms that can be applied to finding minimum enclosing balls \cite{saha2011new} and $\ell^{2}$-margin support vector machines \cite{suykens1999least}. Finally, we follow this methodology to give a sublinear quantum algorithm for solving matrix zero-sum games in \sec{zero-sum}.

%=========================================
\subsection{Kernel methods}\label{sec:kernel}
Having quantum algorithms for solving linear classification at hand, it is natural to consider linear classification under kernels. Let $\Psi\colon\R^{d}\mapsto\mathcal{H}$ be a mapping into a reproducing kernel Hilbert space (RKHS), and the problem is to find the classifier $h\in\mathcal{H}$ that solves the maximin problem
\begin{align}\label{eqn:classification-kernel}
\sigma=\max_{h\in\mathcal{H}}\min_{i\in\range{n}}\<h,\Psi(\A_{i})\>,
\end{align}
where the kernel is defined as $k(a,b):=\<\Psi(a),\Psi(b)\>$ for all $a,b\in\R^{d}$.

Classically, \cite{clarkson2012sublinear} gave the following result for classification under efficiently-computable kernels, following the linear classification algorithm therein:
\begin{theorem}[Lemma 5.3 of \cite{clarkson2012sublinear}]\label{thm:kernel-CHW}
Denote $T_{k}$ as the time cost for computing $k(\A_{i},\A_{j})$ for some $i,j\in\range{n}$, and denote $L_{k}$ as the time cost for computing a random variable $\tilde{k}(\A_{i},\A_{j})$ for some $i,j\in\range{n}$ such that $\E[\tilde{k}(\A_{i},\A_{j})]=k(\A_{i},\A_{j})$ and $\var[k(\A_{i},\A_{j})]\leq 1$. Then there is a classical algorithm that runs in time
\begin{align}\label{eqn:classification-kernel-classical-time}
\tilde{O}\Big(\frac{L_{k}n+d}{\epsilon^{2}}+\min\Big\{\frac{T_{k}}{\epsilon^{4}},\frac{L_{k}}{\epsilon^{6}}\Big\}\Big)
\end{align}
and returns a vector $\bar{h}\in\mathcal{H}$ such that with high success probability $\<\bar{h},\Psi(\A_{i})\>\geq\sigma-\epsilon$ for all $i\in\range{n}$.
\end{theorem}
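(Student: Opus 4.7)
The plan is to lift the primal-dual framework of \thm{perceptron-quantum} to the RKHS setting while keeping the algorithm sub-linear via sampling. Since $\bar{h} \in \mathcal{H}$ may be infinite-dimensional, I would not store it coordinate-wise; instead, maintain it implicitly via the online gradient descent recursion $y_{t+1} = y_t + \frac{1}{\sqrt{T}}\Psi(\A_{i_t})$, $h_{t+1} = y_{t+1}/\max\{1,\|y_{t+1}\|\}$, storing only the sampled indices $i_1,\ldots,i_T \in \range{n}$ together with the scalar projection factors. Then $\bar{h} = \frac{1}{T}\sum_t h_t$ has a succinct representation and any inner product $\<\bar{h},\Psi(\A_i)\>$ can be reconstructed on demand as a weighted sum of kernel values $\sum_\tau c_\tau k(\A_{i_\tau},\A_i)$.

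The central technical step is the multiplicative weights update, which needs an estimator of $r_t(i) := \<h_t,\Psi(\A_i)\>$ for every $i \in \range{n}$. Computing each exactly would cost $\Omega(T_k \cdot t)$ per coordinate, so I would apply an $\ell^2$-style trick on the past-index expansion of $h_t$: sample a past index $\tau$ proportional to $|c_\tau|^2$, normalise, and replace the exact $k$ by its unbiased proxy $\tilde{k}(\A_{i_\tau},\A_i)$, obtaining an unbiased estimator of $r_t(i)$ of $O(1)$ variance at cost $L_k$ per coordinate. Summing over $n$ coordinates and $T = \tilde{O}(1/\epsilon^2)$ iterations contributes $\tilde{O}(L_k n / \epsilon^2)$, while the $d/\epsilon^2$ term covers the preparatory reads of $\A$ whenever the kernel is evaluated on raw data.

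The $\min\{T_k/\epsilon^4,\,L_k/\epsilon^6\}$ term should account for estimating the projection scalings $\|y_t\|$ (analogous to \lin{estimate-norm-d} in the quantum algorithm). Expanding $\|y_t\|^2 = \frac{1}{T}\sum_{\tau,\tau'} k(\A_{i_\tau},\A_{i_{\tau'}})$ reveals two options: either use $\tilde{O}(1/\epsilon^2)$ exact kernel evaluations of cost $T_k$ each and amortise over $T$ iterations for $\tilde{O}(T_k/\epsilon^4)$ total, or rely only on the variance-$1$ estimator $\tilde{k}$, which needs $\tilde{O}(1/\epsilon^2)$ repetitions to reach $\epsilon$-accuracy and hence $\tilde{O}(L_k/\epsilon^6)$ total. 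The algorithm picks whichever bound is smaller.

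Correctness follows the pattern of \thm{perceptron-quantum}: the online gradient descent bound of \lem{A-2} gives $\sum_t r_t(i_t) \geq T\sigma - 2\sqrt{T}$ in $\mathcal{H}$, and the MW regret bound of \lem{2-3} still applies to the estimator vector. The main obstacle I anticipate is showing that the extra layers of randomness (past-index sampling plus the kernel estimator $\tilde{k}$) compose cleanly so that the variance bound $\sum_t p_t\trans v_t^2 = O(T)$ of \lem{2-6} and the martingale concentration of \lem{2-5} continue to hold with only constant-factor degradation; once this is established, averaging $\bar{h}$ and invoking the same calculation as in \thm{perceptron-quantum} delivers $\<\bar{h},\Psi(\A_i)\>\geq\sigma-\epsilon$ for all $i$ with high probability.
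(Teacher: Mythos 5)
First, a point of reference: the paper does not prove this statement at all --- it is quoted verbatim as Lemma~5.3 of \cite{clarkson2012sublinear} and used as a black box to read off the quantum complexity \eqn{classification-kernel-quantum-time}. So your proposal has to be judged as a reconstruction of the classical argument of \cite{clarkson2012sublinear}, not against anything proved here. Your overall architecture is the right one: the implicit support-expansion representation of $h_t$ and $\bar h$, unbiased kernel estimates feeding the multiplicative-weights update (accounting for the $L_k n/\epsilon^2$ term), and the two alternative ways of tracking $\|y_t\|^2=\frac{1}{2T}\sum_{\tau,\tau'}k(\A_{i_\tau},\A_{i_{\tau'}})$ producing the $\min\{T_k/\epsilon^4,\,L_k/\epsilon^6\}$ term. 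That bookkeeping is consistent with the claimed bound.

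There is, however, a genuine gap at exactly the step you flag as ``the main obstacle,'' and the mechanism you propose does not close it. Sampling a single past index $\tau$ with probability $c_\tau^2/\|c\|_2^2$ and outputting $\frac{\|c\|_2^2}{c_\tau}\tilde{k}(\A_{i_\tau},\A_i)$ is indeed unbiased, but its second moment is $\|c\|_2^2\sum_{\tau\le t}\E[\tilde{k}(\A_{i_\tau},\A_i)^2]$, which with $c_\tau=\Theta(1/\sqrt{T})$ and $t$ terms is $\Theta(t^2/T)$ --- i.e.\ $\Theta(T)=\Theta(1/\epsilon^2)$ in the late iterations, not $O(1)$. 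The normalization by $\max\{1,\|y_t\|\}$ does not rescue this: if half of the selected feature vectors equal $u$ and half equal $-u$, then $\|y_t\|\le 1$ and $\<h_t,\Psi(\A_i)\>=0$, yet each $|k(\A_{i_\tau},\A_i)|$ can be $\Theta(1)$, so the estimator takes values of magnitude $\Theta(\sqrt{T})$. With second moment $\Theta(T)$, the regret term $\eta\sum_t p_t\trans v_t^2$ in \lem{2-3}/\lem{2-6} becomes $\Theta(\eta T^2)=\Theta(T^{3/2})$ and the final error is $\Theta(\sqrt{T\log n})\gg\epsilon$; the clipping bias controlled by \lem{2-4} also breaks. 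The correct resolution (mirrored in \algo{perceptron-kernel}, where \lin{update-j-kernel} samples a \emph{feature-space coordinate} $j_t$ of $y_t$ and \lin{update-oracle-Ot-kernel} uses $\Psi(\A_i)(j_t)\|y_t\|^2/(y_t(j_t)\max\{1,\|y_t\|\})$) is to do the $\ell^2$ sampling in $\mathcal{H}$ itself, so that the second moment is bounded by $\|y_t\|^2\|\Psi(\A_i)\|^2/\max\{1,\|y_t\|\}^2\le 1$ via Cauchy--Schwarz exactly as in the linear case; the primitive $\tilde{k}$ and the cost $L_k$ exist precisely to make that feature-space sampling implementable (e.g., a product of $q$ independent coordinate samples for the polynomial kernel), not to be averaged over the support expansion. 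Until the $O(1)$ variance is restored in this way, the concentration lemmas you invoke do not apply and the proof does not go through.
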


Quantumly, we give an algorithm for classification under kernels based on \algo{perceptron-d}:
\begin{algorithm}
\KwInput{$\eps>0$, a quantum oracle $O_{\A}$ for $\A \in \R^{n \times d}$.}
\KwOutput{$\bar{\x}$ that satisfies \eqn{perceptron-goal}.}
Let $T=27^2 \eps^{-2}\log n$, $y_1=\vecc{0}_{d}$, $\eta=\sqrt{\frac{\log n}{T}}$, $u_{1}=\vecc{1}_{n}$, $|p_1\>=\frac{1}{\sqrt{n}}\sum_{i\in\range{n}}|i\>$\;
    \For{$t=1$ \KwTo $T$}{
        Measure the state $|p_t\>$ in the computational basis and denote the output as $i_t\in\range{n}$\; \label{lin:update-it-kernel}
        Define $y_{t+1}:=y_t + \frac{1}{\sqrt{2T}} \Psi(\A_{i_t})$\; \label{lin:update-y-kernel}
        Apply \lem{norm-estimation} for $2\lceil\log T\rceil$ times to estimate $\|y_{t}\|^{2}$ with precision $\delta=\eta^{2}$, and take the median of all the $2\lceil\log T\rceil$ outputs, denoted $\widetilde{\|y_{t}\|}^{2}$\; \label{lin:estimate-norm-kernel}
        Choose $j_t\in\range{d}$ by $j_t=j$ with probability $y_t(j)^2/\norm{y_t}^2$, which is achieved by applying \algo{state-preparation} to prepare the quantum state $|y_{t}\>$ and measure in the computational basis\; \label{lin:update-j-kernel}
        For all $i\in\range{n}$, denote $\tilde{v}_t(i)=\Psi(\A_i)(j_t)\widetilde{\|y_{t}\|}^2/\big(y_t(j_t)\max\{1,\widetilde{\|y_{t}\|}\}\big)$, $v_t(i)=\clip(\tilde{v}_t(i), 1/\eta)$, and $u_{t+1}(i)=u_t(i) (1-\eta v_t(i) + \eta^2 v_t(i)^2)$. Apply \algo{perceptron-quantum-oracle} to prepare an oracle $O_{t}$ such that $O_{t}|i\>|0\>=|i\>|u_{t+1}(i)\>$ for all $i\in\range{n}$, using $2t$ queries to $O_{\A}$ and $\tilde{O}(t)$ additional arithmetic computations\; \label{lin:update-oracle-Ot-kernel}
        Prepare the state $|p_{t+1}\>=\frac{1}{\|u_{t+1}\|_{2}}\sum_{i\in\range{n}}u_{t+1}(i)|i\>$ using \algo{state-preparation} and $O_{t}$; \label{lin:prepare-pt-kernel}
}
Return $\bar{\x} = \frac{1}{T}\sum_{t=1}^{T}\frac{y_t}{\max\{1,\widetilde{\norm{y_t}}\}}$\;
\caption{Quantum algorithm for kernel-based classification.}
\label{algo:perceptron-kernel}
\end{algorithm}

\thm{perceptron-quantum-d} and \thm{kernel-CHW} imply that our quantum kernel-based classifier has time complexity
\begin{align}\label{eqn:classification-kernel-quantum-time}
\tilde{O}\Big(\frac{L_{k}\sqrt{n}}{\epsilon^{4}}+\frac{\sqrt{d}}{\epsilon^{8}}+\min\Big\{\frac{T_{k}}{\epsilon^{4}},\frac{L_{k}}{\epsilon^{6}}\Big\}\Big).
\end{align}

For polynomial kernels of degree $q$, i.e., $k_{q}(x,y)=(x\trans y)^{q}$, we have $L_{k_{q}}=q$ by taking the product of $q$ independent $\ell^{2}$ samples (this is an unbiased estimator of $(x\trans y)^{q}$ and the variance of each sample is at most 1). As a result of \eqn{classification-kernel-quantum-time},
\begin{corollary}\label{cor:kernel-polynomial}
For the polynomial kernel of degree $q$, there is a quantum algorithm that solves the classification task within precision $\epsilon$ with gate complexity $\tilde{O}\big(\frac{q\sqrt{n}}{\epsilon^{4}}+\frac{q\sqrt{d}}{\epsilon^{8}}\big)$.
\end{corollary}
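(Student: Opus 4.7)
The plan is to specialize the generic kernel complexity bound \eqn{classification-kernel-quantum-time} to the case $k_q(x,y) = (x^\top y)^q$; the only new quantity to verify is the sampling-cost parameter $L_{k_q}$, after which the corollary follows by direct substitution into the general formula.

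The natural estimator is obtained by drawing $q$ independent $\ell^2$ samples $j_1,\ldots,j_q\in\range{d}$ from the vector $x$, i.e.\ $\Pr[j_s = j] = x(j)^2/\|x\|^2$, and outputting
\begin{equation*}
\tilde{k}_q(x,y) \;:=\; \prod_{s=1}^q \frac{y(j_s)\|x\|^2}{x(j_s)}.
\end{equation*}
Unbiasedness is immediate from independence, since each factor has expectation $x^\top y$ and the product therefore has expectation $(x^\top y)^q = k_q(x,y)$. A one-line second-moment calculation gives $\E[(y(j)\|x\|^2/x(j))^2] = \|x\|^2\|y\|^2$, whence $\E[\tilde{k}_q(x,y)^2] = (\|x\|^2\|y\|^2)^q \le 1$ under the normalization $\|x\|,\|y\| \le 1$ assumed in the Preliminaries (the data rows lie in $\B_d$, and so do the normalized iterates $\x_t$ by construction in \lin{update-x} of \algo{perceptron}). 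Since each $\ell^2$ draw takes $O(1)$ time once the appropriate sampling access has been set up, and the estimator is just a chain of $q$ multiplications, this yields $L_{k_q} = O(q)$.

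Plugging $L_k = q$ into \eqn{classification-kernel-quantum-time} turns the first summand into $q\sqrt{n}/\epsilon^4$; the residual $\min\{T_{k_q}/\epsilon^4, L_{k_q}/\epsilon^6\}$ is at most $q/\epsilon^6$ and is absorbed into the dominant terms. The extra $q$ multiplier on the $\sqrt{d}/\epsilon^8$ summand in the statement reflects the per-query overhead that arises when the coordinate accesses to $y_t$ inside \lin{estimate-norm-kernel} and \lin{update-j-kernel} of \algo{perceptron-kernel} are performed via the kernel-based estimator rather than direct lookups in $\R^d$: each such access costs $L_{k_q} = O(q)$ when resolved by $\tilde{k}_q$, producing the claimed bound $\tilde{O}(q\sqrt{n}/\epsilon^4 + q\sqrt{d}/\epsilon^8)$.

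The main technical point to check is that replacing exact kernel evaluations by the noisy estimator $\tilde{k}_q$ does not disrupt the regret analysis that \thm{perceptron-quantum-d} inherits from \lem{2-4}, \lem{2-5}, and \lem{2-6}. Because $\var(\tilde{k}_q) \le 1$ uniformly in $q$, a standard median-of-$\Theta(\log T)$ amplification — identical to the one already invoked in \lin{estimate-norm-kernel} for the norm estimation — controls the cumulative error, so no additional rounds beyond the $T = \tilde{\Theta}(1/\epsilon^2)$ of the linear case are required.
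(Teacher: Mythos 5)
Your proposal is correct and follows essentially the same route as the paper: you construct the same estimator (a product of $q$ independent $\ell^{2}$ samples), verify unbiasedness and the variance bound to get $L_{k_q}=O(q)$, and then substitute into \eqn{classification-kernel-quantum-time}, just as the paper does. Your additional remarks on the second moment and on where the extra factor of $q$ on the $\sqrt{d}/\epsilon^{8}$ term comes from merely spell out details the paper leaves implicit.
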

\noindent
Compared to the classical complexity $\tilde{O}\big(\frac{q(n+d)}{\epsilon^{2}}+\min\big\{\frac{d\log q}{\epsilon^{4}},\frac{q}{\epsilon^{6}}\big\}\big)$ in Corollary 5.4 of \cite{clarkson2012sublinear}, our quantum algorithm gives quadratic speedups in $n$ and $d$.

For Gaussian kernels, i.e., $k_{\text{Gauss}}(x,y)=\exp(-\|x-y\|^{2})$, Corollary 5.5 of \cite{clarkson2012sublinear} proved that $L_{k_{\text{Gauss}}}=1/s^{4}$ if the Gaussian has standard deviation $s$. As a result,
\begin{corollary}\label{cor:kernel-Gaussian}
For the polynomial kernel of degree $q$, there is a quantum algorithm that solves the classification task within precision $\epsilon$ with gate complexity $\tilde{O}\big(\frac{\sqrt{n}}{s^{4}\epsilon^{4}}+\frac{\sqrt{d}}{s^{4}\epsilon^{8}}\big)$.
\end{corollary}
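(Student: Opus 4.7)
The corollary as typeset appears to contain a copy-paste artifact: it opens with ``For the polynomial kernel of degree $q$'' while the label \texttt{cor:kernel-Gaussian}, the bandwidth factor $s^{4}$ in the bound, and the preceding paragraph all indicate the intended claim is for the Gaussian kernel $k_{\text{Gauss}}(x,y)=\exp(-\|x-y\|^{2}/(2s^{2}))$ of standard deviation $s$, whose sampling cost $L_{k_{\text{Gauss}}}=1/s^{4}$ was just recalled. My plan is therefore to prove the evidently intended Gaussian bound; the proof is structurally identical to \cor{kernel-polynomial}, swapping the polynomial-kernel estimator for its Gaussian counterpart, so the same substitution would work if one insists on reading the opening phrase literally (one would simply have a nonsensical mismatch of parameters).

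The plan is a direct substitution into the general kernel complexity \eqn{classification-kernel-quantum-time} that was derived from running \algo{perceptron-kernel} with the kernelized updates. First, I would import the unbiased estimator of $k_{\text{Gauss}}$ from Corollary~5.5 of \cite{clarkson2012sublinear}: Taylor-expanding the exponential and applying degree-wise $\ell^{2}$ sampling gives an unbiased estimate of variance at most $1$ with per-sample cost $L_{k_{\text{Gauss}}}=O(1/s^{4})$; the deterministic evaluation cost $T_{k_{\text{Gauss}}}=O(d)$ comes from one exact inner product. Plugging $L_k=1/s^{4}$ into the $L_k\sqrt{n}/\epsilon^{4}$ term of \eqn{classification-kernel-quantum-time} yields $\sqrt{n}/(s^{4}\epsilon^{4})$, the $\min\{T_k/\epsilon^{4},\,L_k/\epsilon^{6}\}$ correction is dominated by $1/(s^{4}\epsilon^{6})$ and absorbed into the leading $\sqrt{n}/(s^{4}\epsilon^{4})$ for $n\gtrsim 1/\epsilon^{4}$, and the $\sqrt{d}/\epsilon^{8}$ term picks up an $s^{-4}$ factor coming from the cost of accessing one Taylor-truncated coordinate of $\Psi(\A_i)$ coherently in \lin{update-oracle-Ot-kernel}.

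The main obstacle is coherently lifting the Gaussian estimator into the oracle $O_t$ of \lin{update-oracle-Ot-kernel}: the classical estimator is built by independent random samples of monomials in entries of $\A_{i_t}$ and $\A_{i_s}$, and the quantum version must perform these samples reversibly so that the amplitude-amplification step in \algo{state-preparation} still succeeds with the advertised $O(\sqrt{n})$ cost. I would handle this exactly as in \algo{perceptron-quantum-oracle}: the random indices defining each sample are drawn classically at iteration time and stored, so the oracle for iteration $s$ is a deterministic unitary computed from $\A$ and the stored seeds, with controlled uncomputation of intermediate registers to preserve unitarity. Once this reversibility is confirmed, the claimed bound follows immediately from \thm{perceptron-quantum-d} applied to \algo{perceptron-kernel} with the Gaussian estimator plugged in.
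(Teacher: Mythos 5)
Your proposal is correct and follows essentially the same route as the paper, which obtains the corollary by the direct substitution $L_{k_{\text{Gauss}}}=1/s^{4}$ (from Corollary~5.5 of \cite{clarkson2012sublinear}) into the general kernel complexity bound \eqn{classification-kernel-quantum-time}; you rightly read the statement as concerning the Gaussian kernel despite the ``polynomial kernel of degree $q$'' typo, and your added detail on the reversible implementation of the randomized estimator only fills in what the paper leaves implicit. (One minor note: the $\min\{T_k/\epsilon^4, 1/(s^4\epsilon^6)\}$ term is unconditionally absorbed by $\sqrt{d}/(s^4\epsilon^{8})$, so no assumption like $n\gtrsim 1/\epsilon^4$ is needed.)
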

\noindent
This still gives quadratic speedups in $n$ and $d$ compared to the classical complexity $\tilde{O}\big(\frac{n+d}{s^{4}\epsilon^{2}}+\min\big\{\frac{d}{\epsilon^{4}},\frac{1}{s^{4}\epsilon^{6}}\big\}\big)$ in Corollary 5.5 of \cite{clarkson2012sublinear}.

%=========================================
\subsection{Quadratic machine learning problems}\label{sec:quadratic-problems}
We consider the maximin problem of a quadratic function:
\begin{align}\label{eqn:quadratic-maximin}
\max_{\x\in\R^{d}}\min_{p\in\Delta_{n}}p\trans (b+2\A\x-\vecc{1}_{n}\|\x\|^{2})=\max_{\x\in\R^{d}}\min_{i\in\range{n}}b_{i}+2\A_{i}\x-\|\x\|^{2},
\end{align}
where $b\in\R^{n}$ and $\A\in\R^{n\times d}$. Note that the function $b_{i}+2\A_{i}\x-\|\x\|^{2}$ in Eq. \eqn{quadratic-maximin} is 2-strongly convex; as a result, the regret of the online gradient descent after $T$ rounds can be improved to $O(\log T)$ by \cite{sra2012optimization} instead of $O(\sqrt{T})$ as in Eq. \eqn{online-gradient-descent}. In addition, $\ell^{2}$ sampling of the $\x$ in \algo{perceptron} and \algo{perceptron-d} still works: consider the random variable $\x=b_{i}+\frac{2\A_{i}(j)\|\x\|^{2}}{\x(j)}-\|\x\|^{2}$ where $j=k$ with probability $\frac{\x(k)^{2}}{\|\x\|^{2}}$. Then the expectation of $\x$ is
\begin{align}
\E[X]=\sum_{j=1}^{d}\frac{\x(j)^{2}}{\|\x\|^{2}}\Big(b_{i}+\frac{2\A_{i}(j)\|\x\|^{2}}{\x(j)}-\|\x\|^{2}\Big)=b_{i}+2\A_{i}\x-\|\x\|^{2},
\end{align}
i.e., $\x$ is an unbiased estimator of the quadratic form in \eqn{quadratic-maximin}. As a result, given the quantum oracle $O_{\A}$ in \eqn{oracle-defn}, we could give sublinear quantum algorithms for such problems; these include two important problems: minimum enclosing balls (MEB) and $\ell^{2}$-margin supper vector machines (SVM).

%--------------------------------------------------------------------------
\subsubsection{Minimum enclosing ball}\label{sec:MEB}
In the minimum enclosing ball (MEB) problem we have $b_{i}=-\|\A_{i}\|^{2}$ for all $i\in\range{n}$; Eq. \eqn{quadratic-maximin} then becomes $\max_{\x\in\R^{d}}\min_{i\in\range{n}}-\|\A_{i}\|^{2}+2\A_{i}\x-\|\x\|^{2}=-\min_{\x\in\R^{d}}\max_{i\in\range{n}}\|\x-\A_{i}\|^{2}$, which is the smallest radius of the balls that contain all the $n$ data points $\A_{1},\ldots,\A_{n}$.

Denote $\sigma_{\MEB}=\min_{\x\in\R^{d}}\max_{i\in\range{n}}\|\x-\A_{i}\|^{2}$, we have:

\begin{theorem}\label{thm:MEB-quantum}
There is a quantum algorithm that returns a vector $\bar{\x}\in\R^{d}$ such that with probability at least $2/3$,
\begin{align}\label{eqn:MEB-goal}
\max_{i\in\range{n}}\|\bar{\x}-\A_{i}\|^{2}\leq\sigma_{\MEB}+\epsilon,
\end{align}
using $\tilde{O}\big(\frac{\sqrt{n}}{\epsilon^{4}}+\frac{d}{\epsilon}\big)$ quantum gates; the quantum gate complexity can also be improved to $\tilde{O}\big(\frac{\sqrt{n}}{\epsilon^{4}}+\frac{\sqrt{d}}{\epsilon^{7}}\big)$.
\end{theorem}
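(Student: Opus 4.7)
I would reduce MEB to the quadratic maximin of \eqn{quadratic-maximin} by setting $b_i := -\|\A_i\|^2$, which gives $b_i + 2\A_i\trans\x - \|\x\|^2 = -\|\x-\A_i\|^2$, and adapt \algo{perceptron-d} with two changes. First, the unbiased linear sample $\tilde v_t(i) = \A_i(j_t)\|\x_t\|^2/\x_t(j_t)$ is replaced by the unbiased quadratic sample from \sec{quadratic-problems}, $\tilde v_t(i) = b_i + 2\A_i(j_t)\|\x_t\|^2/\x_t(j_t) - \|\x_t\|^2$, keeping the clipping at $1/\eta$ and the quantum oracle construction of \algo{perceptron-quantum-oracle} essentially unchanged (one extra register stores $b_i$ and $\|\x_t\|^2$, and a single classical arithmetic step combines them). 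Second, the dual update $y_{t+1} = y_t + \frac{1}{\sqrt{2T}}\A_{i_t}$ is replaced by the step-size-$1/(2t)$ schedule $\x_{t+1} = (1-1/t)\x_t + (1/t)\A_{i_t}$, whose regret against $2$-strongly-concave losses is $O(\log T)$ instead of the $O(\sqrt T)$ of \lem{A-2}.

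The analysis then plugs these into the primal-dual template used in the proof of \thm{perceptron-quantum}. The multiplicative weight inequalities (\lem{2-3}, \lem{2-4}, \lem{2-5}, \lem{2-6}) carry over with unchanged constants provided the quadratic estimator has variance $O(1)$; this follows from $\|\A_i\|,\|\x_t\|\leq 1$, since the MEB center must lie in the convex hull of the data. Combining $\sqrt{T\log n}$ MW regret with the improved $\log T$ dual regret shows $T = \tilde O(1/\epsilon^2)$ iterations suffice, producing the $\tilde O(\sqrt n/\epsilon^4)$ primal cost via \prop{state-prep} and \algo{perceptron-quantum-oracle} exactly as before. On the dual side, strong concavity operates on a coarser time-scale: only $T_{\mathrm{gd}} = \tilde O(1/\epsilon)$ fresh $\R^d$-updates of $\x_t$ are needed to push the dual regret below $\epsilon T$, while the remaining MW iterations can reuse the current $\x_t$. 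Each fresh update costs $O(d)$, yielding the first claimed bound $\tilde O(\sqrt n/\epsilon^4 + d/\epsilon)$. For the second bound $\tilde O(\sqrt n/\epsilon^4 + \sqrt d/\epsilon^7)$, I would avoid storing $\x_t$ explicitly and instead sample $j_t$ via \algo{state-preparation} applied to an oracle for $\x_t$ and estimate $\|\x_t\|^2$ via \lem{norm-estimation}, mirroring the transition from \algo{perceptron} to \algo{perceptron-d}; the strong-concavity savings trade one factor of $1/\epsilon$ against the $\sqrt d/\epsilon^8$ cost of \thm{perceptron-quantum-d}.

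The main obstacle will be tracking the variance of the quadratic estimator through the clipping: the term $2\A_i(j_t)\|\x_t\|^2/\x_t(j_t)$ is pointwise unbounded, so I must re-verify that the clip at $1/\eta$ distorts $\tilde v_t(i)$ only on an event of small probability and that the averaged $p_t\trans v_t^2$ remains $O(1)$ as in \lem{2-6}. A secondary technicality is correctly composing the strongly-concave gradient-descent error (with its time-decaying step size) against the $\ell^2$-sampling noise and the multiplicative norm-estimation error from \lem{norm-estimation}, so that the $\widetilde{\|\x_t\|}$-level approximation does not erode the improved $\log T$ dual regret into the $\sqrt T$ regime and thereby forfeit the factor-of-$\epsilon$ savings in the $d$-dependent cost.
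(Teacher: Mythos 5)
Your proposal matches the paper's (omitted) proof essentially step for step: the paper likewise obtains \thm{MEB-quantum} by specializing the quadratic maximin of \sec{quadratic-problems} with $b_{i}=-\|\A_{i}\|^{2}$, invoking the $O(\log T)$ regret for the $2$-strongly-convex losses so that only $\tilde{O}(1/\epsilon)$ fresh $d$-dimensional updates of $y_{t}$ occur (giving $d/\epsilon$, and saving exactly one factor of $\epsilon$ in the $\sqrt{d}$ term via the norm-estimation route of \thm{perceptron-quantum-d}), while reusing the multiplicative-weight and state-preparation machinery of \thm{perceptron-quantum} unchanged. The technicalities you flag --- the clipped variance of the quadratic estimator and the composition of the strongly-convex regret with the sampling and norm-estimation errors --- are precisely the points the paper delegates to Theorem 3.1 of \cite{clarkson2012sublinear}.
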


We omit the proof of \thm{MEB-quantum} because it directly follows from \thm{perceptron-quantum} (see also Theorem 3.1 in \cite{clarkson2012sublinear}) and \thm{perceptron-quantum-d}. For the $\tilde{O}(\sqrt{d})$ complexity result, the same idea of \algo{perceptron-d} is applied to estimate the norm $\|y_{t}\|$ by amplitude estimation; the error dependence becomes $1/\epsilon^{7}$ because with high probability, the number of iterations that we obtain a new $y_{t}$ in \lin{update-y-d} is $O(\alpha T)=\tilde{O}(1/\epsilon)$, and the other overheads in $\epsilon$ is still $\tilde{O}(\sqrt{d}/\epsilon^{6})$ (see Eq. \eqn{estimate-norm-d-Line5}).

%--------------------------------------------------------------------------
\subsubsection{$\ell^{2}$-margin SVM}\label{sec:L2-SVM}
To estimate the margin of a support vector machine (SVM) in $\ell^{2}$-norm, we take $b_{i}=0$ for all $i\in\range{n}$; Eq. \eqn{quadratic-maximin} then becomes solving $\sigma_{\SVM}:=\max\limits_{\x\in\R^{d}}\min\limits_{i\in\range{n}}2\A_{i}\x-\|\x\|^{2}$.

Notice that $\sigma_{\SVM}\geq 0$ because $2\A_{i}\x-\|\x\|^{2}=0$ for all $i\in\range{n}$ when $\x=0$. For the case $\sigma_{\SVM}>0$ and taking $0<\epsilon<\sigma_{\SVM}$, similar to \thm{MEB-quantum} we have:
\begin{corollary}\label{cor:SVM-quantum}
There is a quantum algorithm that returns a vector $\bar{\x}\in\R^{d}$ such that with probability at least $2/3$,
\begin{align}\label{eqn:SVM-goal}
\min_{i\in\range{n}}2\A_{i}\bar{\x}-\|\bar{\x}\|^{2}\geq\sigma_{\SVM}-\epsilon>0,
\end{align}
using $\tilde{O}\big(\frac{\sqrt{n}}{\epsilon^{4}}+\frac{d}{\epsilon}\big)$ quantum gates; the quantum gate complexity can also be improved to $\tilde{O}\big(\frac{\sqrt{n}}{\epsilon^{4}}+\frac{\sqrt{d}}{\epsilon^{7}}\big)$.
\end{corollary}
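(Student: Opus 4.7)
The plan is to instantiate the quadratic-maximin template of Section \ref{sec:quadratic-problems} with $b_i=0$, so that the proof structurally mirrors that of \thm{MEB-quantum}, which itself is a direct adaptation of \thm{perceptron-quantum} and \thm{perceptron-quantum-d}. First I would replace the linear $\ell^2$-sample estimator $\tilde v_t(i)=\A_i(j_t)\|\x_t\|^2/\x_t(j_t)$ in \algo{perceptron} by its quadratic analogue $\tilde v_t(i)=2\A_i(j_t)\|\x_t\|^2/\x_t(j_t)-\|\x_t\|^2$, which by the calculation in \sec{quadratic-problems} satisfies $\E[\tilde v_t(i)]=2\A_i\x_t-\|\x_t\|^2$. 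Because $-\|\x_t\|^2$ is a data-independent shift, the MW update, clipping, and the oracle construction of \algo{perceptron-quantum-oracle} go through unchanged aside from one extra $\tilde O(1)$ arithmetic step to subtract $\|\x_s\|^2$ inside $O_{\mathrm{clip}}$, so the state-preparation cost of \prop{state-prep} is unaffected.

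Next I would rerun the analysis of \sec{perceptron-quantum-proof}. Lemma \ref{lem:2-3} depends only on the clipping range $[-1/\eta,1/\eta]$ and therefore applies verbatim; the sampling/variance bounds \lem{2-4}--\lem{2-6} carry over since the added $-\|\x_t\|^2$ is bounded and contributes no extra variance. The substantive change is that the dual problem is now $2$-strongly concave in $\x$, so \lem{A-2} is replaced by the logarithmic regret bound for strongly convex online gradient descent (cf.~\cite{sra2012optimization} as cited in \sec{quadratic-problems}), which implies that with high probability only $\tilde O(1/\epsilon)$ of the $T=\tilde O(1/\epsilon^2)$ rounds actually update $y_t$. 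Combining these with the chain of inequalities in the proof of \thm{perceptron-quantum} and optimizing $T$ and $\eta$ as there yields $\min_i\,2\A_i\bar\x-\|\bar\x\|^2\ge\sigma_{\SVM}-\epsilon$ with probability at least $2/3$, which together with the hypothesis $\epsilon<\sigma_{\SVM}$ gives the strict positivity required by \eqn{SVM-goal}.

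For the gate complexity, the quantum primal cost is identical to \thm{perceptron-quantum}: $T$ rounds of weight-state preparation via \algo{state-preparation} applied to $O_t$ incur $\tilde O(\sqrt n\,T^2)=\tilde O(\sqrt n/\epsilon^4)$ queries to $O_{\A}$. The classical dual cost of maintaining $\x_t$ and $y_t$ coordinate by coordinate is $O(d)$ per \emph{nontrivial} update, and strong convexity limits the number of such updates to $\tilde O(1/\epsilon)$, giving $\tilde O(d/\epsilon)$ and hence the first bound $\tilde O(\sqrt n/\epsilon^4+d/\epsilon)$. For the improved bound, I mirror \algo{perceptron-d}: store only $i_1,\ldots,i_T$ plus norm estimates $\widetilde{\|y_t\|}^2$ obtained from \lem{norm-estimation} with precision $\delta=\eta^2$, and replace the $\ell^2$ sample $j_t$ by a computational-basis measurement of $|y_t\>$ prepared via \algo{state-preparation}. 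The propagation-of-error argument around \eqn{perceptron-d-1}--\eqn{perceptron-d-3} is unchanged, and the complexity accounting of \eqn{estimate-norm-d-Line5} gives $\tilde O(\sqrt d/\epsilon^6)$ per nontrivial iteration; multiplying by the $\tilde O(1/\epsilon)$ nontrivial iterations yields $\tilde O(\sqrt d/\epsilon^7)$.

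The main obstacle I anticipate is not the SVM-specific algebra but the high-probability bound on the number of nontrivial dual updates: one must show that the $2$-strongly-concave online gradient descent, combined with the clipped stochastic subgradient $v_t$, actually moves $y_t$ in only $\tilde O(1/\epsilon)$ rounds, a ``sparse-update'' statement that is standard in the strongly-convex setting but needs to be restated carefully to line up with the multiplicative-weights concentration lemmas that assume all $T$ samples are independent. Once that bookkeeping matches the MEB proof, the remaining steps are routine and the corollary follows.
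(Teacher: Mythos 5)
Your overall route---instantiate the quadratic maximin template of \sec{quadratic-problems} with $b_i=0$, keep the multiplicative-weights and state-preparation machinery of \algo{perceptron} and \algo{perceptron-d} with the unbiased quadratic estimator, and replace \lem{A-2} by the strongly-convex regret bound---is the same path the paper takes (the paper simply defers to \thm{MEB-quantum}, i.e., to Theorem 3.1 of \cite{clarkson2012sublinear}). The gap is in how you obtain the $\tilde{O}(1/\epsilon)$ bound on the number of dual updates, which is precisely what turns $d/\epsilon^{2}$ into $d/\epsilon$ and $\sqrt{d}/\epsilon^{8}$ into $\sqrt{d}/\epsilon^{7}$. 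You assert that strong convexity ``limits the number of such updates to $\tilde{O}(1/\epsilon)$'' and propose to prove that the clipped stochastic gradient dynamics ``actually move $y_t$ in only $\tilde{O}(1/\epsilon)$ rounds.'' With the update rule as in \lin{update-y-d}, $y_{t+1}=y_t+\frac{1}{\sqrt{2T}}\A_{i_t}$ (or its follow-the-leader analogue for the strongly convex dual), $y_t$ changes in essentially every one of the $T=\tilde{\Theta}(1/\epsilon^{2})$ rounds, since $\A_{i_t}$ is generically nonzero; no sparsity statement of the kind you want holds for the unmodified dynamics, and your accounting would then only give $\tilde{O}(\sqrt{n}/\epsilon^{4}+d/\epsilon^{2})$ and $\tilde{O}(\sqrt{n}/\epsilon^{4}+\sqrt{d}/\epsilon^{8})$.

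The sparse-update property is an algorithmic modification, not an emergent one: following Theorem 3.1 of \cite{clarkson2012sublinear} (this is the parameter $\alpha$ the paper alludes to right after \thm{MEB-quantum}), the dual vector is updated only in rounds chosen independently with probability $\alpha=\tilde{\Theta}(\epsilon)$ (equivalently, the sampled row is incorporated only with probability $\alpha$), so that with high probability only $O(\alpha T)=\tilde{O}(1/\epsilon)$ rounds touch $y_t$; the $2$-strong convexity of $2\A_{i}\x-\|\x\|^{2}$ is then what lets the regret and concentration analysis absorb the extra variance introduced by this subsampling while keeping $T=\tilde{O}(1/\epsilon^{2})$. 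So your proof must change the algorithm (lazy, probability-$\alpha$ dual updates with the step sizes appropriate to the strongly convex setting) and redo the corresponding concentration step, rather than establish a sparsity property of the unmodified online gradient descent. Once that is in place, the remainder of your plan---the estimator $2\A_i(j_t)\|\x_t\|^{2}/\x_t(j_t)-\|\x_t\|^{2}$, clipping, \lem{2-3}--\lem{2-6}, norm estimation via \lem{norm-estimation} at cost $\tilde{O}(\sqrt{d}/\epsilon^{6})$ per update round, and positivity from the hypothesis $\epsilon<\sigma_{\SVM}$---matches the paper's intended argument.
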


Note that \eqn{SVM-goal} implies that $\A_{i}\bar{\x}>0$ for all $i\in\range{n}$; furthermore, by the AM-GM inequality we have $\frac{(\A_{i}\bar{\x})^{2}}{\|\bar{\x}\|^{2}}+\|\bar{\x}\|^{2}\geq 2\A_{i}\bar{\x}$, and hence
\begin{align}
\min_{i\in\range{n}}\Big(\frac{\A_{i}\bar{\x}}{\|\bar{\x}\|}\Big)^{2}\geq\min_{i\in\range{n}}2\A_{i}\bar{\x}-\|\bar{\x}\|^{2}\geq\sigma_{\SVM}-\epsilon.
\end{align}
If we denote $\hat{\x}=\bar{\x}/\|\bar{\x}\|$, then $\A_{i}\hat{\x}\geq\sqrt{\sigma_{\SVM}-\epsilon}>0$ for all $i\in\range{n}$. Consequently, if the data $\A$ is from an SVM, we obtain a normalized direction $\hat{\x}$ (in $\ell^{2}$-norm) such that all data points have a margin of at least $\sqrt{\sigma_{\SVM}-\epsilon}$. Classically, this task takes time $\tilde{O}(n+d)$ for constant $\sigma_{\SVM}$ by \cite{clarkson2012sublinear}, but our quantum algorithm only takes time $\tilde{O}(\sqrt{n}+\sqrt{d})$.

%=========================================
\subsection{Matrix zero-sum games}\label{sec:zero-sum}
Our $\ell^{2}$-sampling technique can also be adapted to solve matrix zero-sum games as an application. To be more specific, the input of a zero-sum game is a matrix $\A\in\R^{n_{1}\times n_{2}}$, and the goal is to find $a\in\R^{n_{1}}$ and $b\in\R^{n_{2}}$ such that
\begin{align}\label{eqn:zero-sum-game}
a^{\dagger}\A b\geq\max_{p\in\Delta_{n_{1}}}\min_{q\in\Delta_{n_{2}}}p^{\dagger}\A q - \epsilon
\end{align}
for some $\epsilon>0$; such $(a,b)$ is called an \emph{$\epsilon$-optimal strategy}. It is shown in \cite[Proposition 1]{grigoriadis1995sublinear} that for $0<\epsilon<0.1$, an $\epsilon$-optimal strategy for the $(n_{1}+n_{2}+1)$-dimensional anti-symmetric matrix
\begin{align}
\A'=\left( \begin{array}{ccc}
0 & \A & -\vecc{1}_{n_{1}} \\
-\A^{\dagger} & 0 & \vecc{1}_{n_{2}} \\
\vecc{1}_{n_{1}} & -\vecc{1}_{n_{2}} & 0
\end{array} \right)
\end{align}
implies an $18\epsilon$-optimal strategy for $\A$. Therefore, without loss of generality, we could assume that $\A$ is an $n$-dimensional anti-symmetric matrix (by taking $n=n_{1}+n_{2}+1$). In this case, the game value $\max_{p\in\Delta_{n}}\min_{q\in\Delta_{n}}p^{\dagger}\A q$ in \eqn{zero-sum-game} equals to 0, and due to symmetry finding an $\epsilon$-optimal strategy reduces to find an $\x\in\Delta_{n}$ such that
\begin{align}\label{eqn:zero-sum-game-x}
\A\x\leq\epsilon\cdot\vecc{1}_{n},
\end{align}
where $\leq$ applies to each coordinate. As a normalization, we assume that $\max_{i,j\in\range{n}}|\A_{i,j}|\leq 1$.

Classically, one query to $\A$ is to ask for one entry in the matrix, whereas quantumly we assume the oracle in \eqn{oracle-defn}. Inspired by Ref. \cite[Theorem 1]{grigoriadis1995sublinear}, we give the following result for solving the zero-sum game:
\begin{theorem}\label{thm:zero-sum-upper}
With success probability at least $2/3$, \algo{zero-sum-game} returns a vector $\bar{\x}\in\R^{n}$ such that $\A\bar{\x}\leq\epsilon\cdot\vecc{1}_{n}$, using $\tilde{O}\big(\frac{\sqrt{n}}{\epsilon^{4}}\big)$ quantum gates.
\end{theorem}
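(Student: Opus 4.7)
The plan is to quantize the sublinear zero-sum game algorithm of Grigoriadis and Khachiyan \cite{grigoriadis1995sublinear} in the same way that \algo{perceptron} quantizes the linear classification algorithm of \cite{clarkson2012sublinear}. Since $\A$ is antisymmetric and has game value $0$, by symmetry it suffices to maintain a single multiplicative-weight vector $u_t\in\R^n$ with normalized distribution $p_t=u_t/\|u_t\|_1$; the output is the empirical distribution $\bar{\x}=\frac{1}{T}\sum_{t=1}^T e_{j_t}\in\Delta_n$ over sampled pure strategies, which is succinctly representable by the $T$ indices $j_1,\ldots,j_T$.

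At iteration $t\in\range{T}$ with $T=\tilde{\Theta}(\log n/\epsilon^2)$ and $\eta=\sqrt{\log n/T}$, I would (i) prepare the state $|p_t\>=\sum_i\sqrt{(p_t)_i}\,|i\>$ using \algo{state-preparation} applied to an oracle $O_t$ that outputs $u_t(i)$ on input $|i\>|0\>$; (ii) measure $|p_t\>$ to obtain a sample $j_t\sim p_t$ and use the column $\A_{\cdot,j_t}$ as an unbiased estimator of $\A p_t$; and (iii) perform the clipped quadratic MW update
\begin{align*}
u_{t+1}(i)=u_t(i)\bigl(1-\eta\,\clip(\A_{i,j_t},1/\eta)+\eta^2\,\clip(\A_{i,j_t},1/\eta)^2\bigr).
\end{align*}
Exactly as in \algo{perceptron-quantum-oracle}, the next oracle $O_{t+1}$ is built by composing $O_t$ with a single query to $O_{\A}$ that applies this update coherently, so one call to $O_{t+1}$ costs $\tilde{O}(t)$ calls to $O_{\A}$.

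The correctness analysis parallels that of \thm{perceptron-quantum}: the quadratic MW regret inequality in the style of \lem{2-3}, together with variance/concentration bounds for the $\ell^1$-sampled estimator $\A_{\cdot,j_t}$ in the style of \lem{2-4}--\lem{2-6} (using $|\A_{ij}|\leq 1$ and the clipping at $1/\eta$), combined with the antisymmetry $\A\trans=-\A$, imply that $\bar{\x}$ satisfies $\A\bar{\x}\leq\epsilon\,\vecc{1}_n$ with probability at least $2/3$ whenever $T\gtrsim\log n/\epsilon^2$ and $\eta=\Theta(\epsilon/\sqrt{\log n})$. The one-sided game-value guarantee is the only real departure from the two-sided maximin analysis in \sec{perceptron} and is handled by the standard Grigoriadis--Khachiyan argument: the MW regret bounds $\min_i\sum_t v_t(i)$ from below by $\sum_t p_t\trans v_t-O(\log n/\eta+\eta T)$, while antisymmetry forces $\sum_t p_t\trans \A\cdot p_t=0$ in expectation, so dividing through by $T$ and concentrating gives the one-sided bound.

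For the gate complexity, each call to $O_{t+1}$ costs $\tilde{O}(t)$ queries to $O_{\A}$, and by \prop{state-prep} preparing $|p_{t+1}\>$ requires $O(\sqrt{n})$ queries to $O_{t+1}$; hence the total cost is $\sum_{t=1}^T\tilde{O}(\sqrt{n}\,t)=\tilde{O}(\sqrt{n}\,T^2)=\tilde{O}(\sqrt{n}/\epsilon^4)$, as claimed. The main obstacle I anticipate is the bookkeeping for converting the two-sided maximin guarantee of \sec{perceptron} into the one-sided coordinatewise bound $\A\bar{\x}\leq\epsilon\,\vecc{1}_n$; the quantum side of the argument is essentially a plug-in of \algo{state-preparation}, and the $1/\epsilon^4$ overhead is inherited from the same intrinsic need to re-prepare weight states from scratch each iteration that was noted for \algo{perceptron}.
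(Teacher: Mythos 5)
Your architecture matches the paper's \algo{zero-sum-game} exactly: sample an index $j_t$ by preparing $|p_t\>$ with \algo{state-preparation} and measuring, rebuild the weight oracle from scratch each round at a cost of $\tilde{O}(t)$ queries per call so the total is $\sum_{t=1}^{T}\tilde{O}(\sqrt{n}\,t)=\tilde{O}(\sqrt{n}T^{2})=\tilde{O}(\sqrt{n}/\epsilon^{4})$, and output the empirical distribution of the sampled indices. The gap is in the correctness argument: your multiplicative-weight update has the wrong sign. You update $u_{t+1}(i)=u_t(i)\bigl(1-\eta\A_{i,j_t}+\eta^{2}\A_{i,j_t}^{2}\bigr)$ (the clipping is vacuous since $|\A_{ij}|\leq 1\leq 1/\eta$), which \emph{down}weights rows with large cumulative payoff $\sum_{\tau}\A_{i,j_\tau}$, and the regret inequality you then invoke --- $\min_i\sum_t v_t(i)\geq\sum_t p_t\trans v_t-O(\log n/\eta+\eta T)$ with $v_t=\A \vec{e}_{j_t}$ and $\E[p_t\trans\A \vec{e}_{j_t}\mid p_t]=p_t\trans\A p_t=0$ --- yields $\min_i(\A\bar{\x})_i\geq-\epsilon$. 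That is not the required guarantee $\max_i(\A\bar{\x})_i\leq\epsilon$, and antisymmetry does not convert one into the other: it maps the condition $\A\x\geq-\epsilon\vecc{1}_{n}$ back to itself. Concretely, for $\A=\bigl(\begin{smallmatrix}0&1\\-1&0\end{smallmatrix}\bigr)$ your dynamics downweight row $1$ whenever index $2$ is sampled and upweight row $2$ whenever index $1$ is sampled, so the samples concentrate on index $2$ and $\bar{\x}\to\vec{e}_{2}$, which indeed satisfies $\A\bar{\x}\geq-\epsilon\vecc{1}_{n}$ but has $\max_i(\A\bar{\x})_i=1$.

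The fix is to upweight, as Grigoriadis--Khachiyan do: the paper's \algo{zero-sum-game} prepares amplitudes proportional to $\exp[+\epsilon\sum_{\tau}\A_{i,k_\tau}/4]$, i.e.\ probabilities $P_i(t)\propto\exp[+\epsilon\sum_{\tau}\A_{i,k_\tau}/2]$. Equivalently, you could run your quadratic MW with losses $v_t(i)=-\A_{i,j_t}$, so the regret bound upper-bounds $\max_i\sum_t\A_{i,j_t}$ by $\sum_t p_t\trans\A \vec{e}_{j_t}+O(\sqrt{T\log n})$, and then control the martingale $\sum_t p_t\trans\A \vec{e}_{j_t}$ by Azuma. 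The paper instead argues directly on the potential $\Phi(t)=\sum_i P_i(t)$, using $\exp[\epsilon\A_{ik}/2]\leq 1-\epsilon\A_{ik}/2+\epsilon^{2}/6$ and antisymmetry to get $\E[\Phi(T)]\leq n^{5/3}$ and concluding by Markov; this avoids clipping entirely, and \lem{2-4}--\lem{2-6} play no role here because $\A \vec{e}_{j_t}$ is an exact column rather than an $\ell^{2}$-sampled estimator. Your complexity accounting and the succinct representation of $\bar{\x}$ are correct and match the paper.
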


\begin{algorithm}
\KwInput{$\eps>0$, a quantum oracle $O_{\A}$ for $\A \in \R^{n \times n}$.}
\KwOutput{$\bar{\x}\in\Delta_{n}$ that satisfies \eqn{zero-sum-game-x}.}
Let $T \gets 4\eps^{-2}\log n$, $A \gets \vecc{0}_{n}$, $|p_1\> \gets \frac{1}{\sqrt{n}}\sum_{i\in\range{n}}|i\>$\;
    \For{$t=1$ \KwTo $T$}{
        Measure the state $|p_t\>$ in the computational basis and denote the output as $k_t\in\range{n}$\; \label{lin:zero-sum-game-k}
        Update the $k_t$-\text{th} coordinate of $A$: $A_{k_{t}} \gets A_{k_{t}}+1$\; \label{lin:zero-sum-game-X}
        Prepare the state
        \begin{align}\label{eqn:zero-sum-game-p}
        |p_{t+1}\>=\frac{\sum_{i\in\range{n}}\exp[\epsilon\sum_{\tau=1}^{t}\A_{i,k_{\tau}}/4]|i\>}{\sqrt{\sum_{j\in\range{n}}\exp[\epsilon\sum_{\tau=1}^{t}\A_{j,k_{\tau}}/2]}}
        \end{align}
        using \algo{state-preparation}\; \label{lin:zero-sum-game-p}
}
Return $\bar{\x} = A/T$\;
\caption{Sublinear quantum algorithm for solving zero-sum games.}
\label{algo:zero-sum-game}
\end{algorithm}

\begin{proof}
We first prove the correctness of \algo{zero-sum-game}. We denote $P_{i}(t):=\exp[\epsilon\sum_{\tau=1}^{t}\A_{i,k_{\tau}}/2]$ and $p_{i}(t)=P_{i}(t)/\sum_{j=1}^{n}P_{j}(t)$ for all $i\in\range{n}$ and $t\in\range{T}$. Then $|p_{t+1}\>=\sum_{i=1}^{n}\sqrt{p_{i}(t)}|i\>$. We also denote the potential function $\Phi(t)=\sum_{i=1}^{n}P_{i}(t)$. It satisfies
\begin{align}\label{eqn:zero-sum-game-1}
\Phi(t)=\sum_{i=1}^{n}P_{i}(t)=\sum_{i=1}^{n}P_{i}(t-1)\exp[\epsilon \A_{i,k_{t}}/2]=\Phi(t-1)\sum_{i=1}^{n}p_{i}(t-1)\exp[\epsilon \A_{i,k_{t}}/2].
\end{align}
Since \lin{zero-sum-game-k} selects $k_{t}$ with probability $p_{k_{t}}(t-1)$, Eq. \eqn{zero-sum-game-1} implies
\begin{align}\label{eqn:zero-sum-game-2}
\E[\Phi(t)]=\Phi(t-1)\sum_{i,k=1}^{n}p_{i}(t-1)p_{k}(t-1)\exp[\epsilon \A_{i,k}/2].
\end{align}
Because $|\A_{i,k}|\leq 1$, we have
\begin{align}\label{eqn:zero-sum-game-3}
\exp[\epsilon \A_{i,k}/2]\leq 1-\frac{\epsilon \A_{i,k}}{2}+\frac{\epsilon^{2}}{6}.
\end{align}
Also because $\A$ is skew-symmetric, we have $\sum_{i,k=1}^{n}p_{i}(t-1)p_{k}(t-1)\A_{i,k}=0$. Plugging this and \eqn{zero-sum-game-3} into \eqn{zero-sum-game-2}, we have $\E[\Phi(t)]\leq \E[\Phi(t-1)]\big(1+\frac{\epsilon^{2}}{6}\big)$. As a result of induction,
\begin{align}\label{eqn:zero-sum-game-5}
\E[\Phi(T)]\leq \Phi(0)\Big(1+\frac{\epsilon^{2}}{6}\Big)^{T}\leq n\exp[T\epsilon^{2}/6]\leq n^{5/3}.
\end{align}
By Markov's inequality, we have $\Phi(T)\leq 3n^{5/3}\leq n^{2}$ with probability at least $2/3$. Notice that $\Phi(T)\leq n^{2}$ implies $P_{i}(T)\leq n^{2}$ for all $i\in\range{n}$, i.e., $\epsilon\sum_{\tau=1}^{T}\A_{i,k_{\tau}}/2\leq 2\ln n$ for all $i\in\range{n}$. The $i$-th coordinate of $\A\bar{\x}$ satisfies
\begin{align}
(\A\bar{\x})_{i}=\frac{1}{T}(\A A)_{i}=\frac{1}{T}\sum_{\tau=1}^{T}\A_{i,k_{\tau}}\leq\frac{\epsilon^{2}}{4\ln n}\cdot\frac{4\ln n}{\epsilon}=\epsilon;
\end{align}
since this is true for all $i\in\range{n}$, we have $\A\bar{\x}\leq\epsilon\cdot\vecc{1}_{n}$.

It remains to prove the complexity claim. The measurement in \lin{zero-sum-game-k} takes $O(\log n)$ gates, and the update in \lin{zero-sum-game-X} also takes $O(\log n)$ gates because it only adds 1 to one of the $n$ coordinates. The complexity of the algorithm thus mainly comes from \lin{zero-sum-game-p} for preparing $|p_{t+1}\>$. Notice that $\arg\max\exp[\epsilon\sum_{\tau=1}^{t}\A_{i,k_{\tau}}/4]=\arg\max\sum_{\tau=1}^{t}\A_{i,k_{\tau}}$, which can be computed in $O(t\sqrt{n})$ queries to the oracle $O_{\A}$. Similarly, the amplitude amplification in \algo{state-preparation} can also be done with cost $O(t\sqrt{n})$. In total, the time complexity of \algo{zero-sum-game} is
\begin{align}
\sum_{t=1}^{T}O(t\sqrt{n})=O(T^{2}\sqrt{n})=\tilde{O}\Big(\frac{\sqrt{n}}{\epsilon^{4}}\Big).
\end{align}
\end{proof}

\begin{remark}\label{rem:zero-sum}
The output of \algo{zero-sum-game} is a classical vector in $\Delta_{n}$; furthermore, it has a succinct representation of $O(\log^{2}n/\epsilon^{2})$ bits: \lin{zero-sum-game-X} in each iteration add 1 to one of the $n$ coordinates and hence can be stored in $\lceil\log_{2}n\rceil$ bits, and there are in total $O(\log n/\epsilon^{2})$ rounds. Therefore, such output can be directly useful for classical applications, which distinguishes from many quantum machine learning algorithms that output a quantum state (whose applications are more subtle).
\end{remark}

%%%%%%%%%%%%%%%%%%%%%%%%%%%%%%%%%%%%%%%%%%%%%%%%%%%%%%%%%%%%%%%%%%%%%%%%%%%%%%

\section{Quantum lower bounds}\label{sec:lower}
All quantum algorithms (upper bounds) above have matching lower bounds in $n$ and $d$. Assuming $\epsilon=\Theta(1)$ and given the oracle $O_{\A}$ in \eqn{oracle-defn}, we prove quantum lower bounds on linear classification, minimum enclosing ball, and matrix zero-sum games in \sec{linear-lower}, \sec{MEB-lower}, and \sec{zero-sum-lower}, respectively.

%=========================================
\subsection{Linear classification}\label{sec:linear-lower}
Recall that the input of the linear classification problem is a matrix $\A\in\R^{n\times d}$ such that $\A_{i}\in\B_{d}$ for all $i\in\range{n}$ ($\A_{i}$ being the $i^{\text{th}}$ row of $\A$), and the goal is to approximately solve
\begin{align}\label{eqn:linear-classification-rewrite}
\sigma:=\max_{\x\in\B_{d}}\min_{p\in\Delta_{n}}p\trans\A\x=\max_{\x\in\B_{d}}\min_{i\in\range{n}}\A_{i}\x.
\end{align}
Given the quantum oracle $O_{\A}$ such that $O_{\A}|i\>|j\>|0\>=|i\>|j\>|\A_{ij}\>\ \forall\,i\in\range{n}, j\in\range{d}$, \thm{perceptron-quantum-d} solves this task with high success probability with cost $\tilde{O}\big(\frac{\sqrt{n}}{\epsilon^{4}}+\frac{\sqrt{d}}{\epsilon^{8}}\big)$. We prove a quantum lower bound that matches this upper bound in $n$ and $d$ for constant $\epsilon$:
\begin{theorem}\label{thm:perceptron-quantum-lower}
Assume $0<\epsilon<0.04$. Then to return an $\bar{\x}\in\B_{d}$ satisfying
\begin{align}\label{eqn:perceptron-goal-rewrite}
\A_{j}\bar{\x}\geq \max_{\x\in\B_{d}}\min_{i\in\range{n}}\A_{i}\x-\epsilon\quad\forall\,j\in\range{n}
\end{align}
with probability at least $2/3$, we need $\Omega(\sqrt{n}+\sqrt{d})$ quantum queries to $O_{\A}$.
\end{theorem}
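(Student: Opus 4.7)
The plan is to establish the two terms $\sqrt{n}$ and $\sqrt{d}$ in the claimed lower bound by two independent reductions from the standard $\Omega(\sqrt{N})$ quantum query lower bound for deciding whether a Boolean function $f:\range{N}\to\{0,1\}$ is identically zero or contains exactly one marked index (the OR / unstructured-search decision problem). In each reduction I build an instance of linear classification whose oracle $O_{\A}$ can be simulated by one query to $f$, and whose feasible output $\bar{\x}$ is forced to take two quantitatively separated forms in the two cases of $f$. Inspecting $\bar{\x}$ then decides the OR instance with the same $2/3$ success probability, so any $T$-query algorithm returning a valid $\bar{\x}$ inherits the $\Omega(\sqrt{N})$ bound. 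Combining the two gives $T = \Omega(\max\{\sqrt{n},\sqrt{d}\}) = \Omega(\sqrt{n}+\sqrt{d})$.

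For the $\Omega(\sqrt{n})$ factor I fix $d=1$. Given $f:\range{n}\to\{0,1\}$ I set $\A_i := 1-2f(i) \in \{-1,+1\}$, so one query to $O_{\A}$ is implemented by a single query to $f$. When $f\equiv 0$, every $\A_i=+1$ gives $\sigma = \max_{\x\in\B_1}\min_i \x = 1$, and the feasibility constraint \eqn{perceptron-goal-rewrite} forces the scalar output to satisfy $\bar{\x}\geq 1-\epsilon > 0.96$. When $f$ has a unique marked index $i_0$, the row $\A_{i_0}=-1$ gives $\sigma = \max_{\x\in\B_1}\min(\x,-\x)=0$, forcing $|\bar{\x}|\leq \epsilon < 0.04$. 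Thresholding $\bar{\x}$ at $1/2$ therefore decides the OR instance, so the underlying algorithm must spend $\Omega(\sqrt{n})$ queries to $O_{\A}$.

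For the $\Omega(\sqrt{d})$ factor I fix $n=1$. Given $f:\range{d-1}\to\{0,1\}$, construct $\A_1\in\R^d$ by $\A_1(1):=1/\sqrt{2}$ and $\A_1(j+1):=f(j)/\sqrt{2}$ for $j\in\range{d-1}$; then $\|\A_1\|\leq 1$ in both cases, and $\A_1(1)$ is a known constant, so $O_{\A_1}$ is simulated by one query to $f$ \emph{uniformly} in the two cases. When $f\equiv 0$, we have $\A_1=(1/\sqrt{2})e_1$ and $\sigma=1/\sqrt{2}$; the feasibility constraint $\A_1\bar{\x}\geq \sigma-\epsilon$ forces $\bar{\x}_1\geq 1-\sqrt{2}\,\epsilon$, and combining with $\|\bar{\x}\|\leq 1$ yields $|\bar{\x}_j|\leq \sqrt{2\sqrt{2}\,\epsilon}$ for every $j>1$. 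When $f$ is marked at $j_0-1$, we have $\A_1=(e_1+e_{j_0})/\sqrt{2}$ and $\sigma=1$; the constraint $\A_1\bar{\x}\geq 1-\epsilon$ with $\bar{\x}_1\leq 1$ forces $\bar{\x}_{j_0}\geq \sqrt{2}(1-\epsilon)-1$. For $\epsilon<0.04$ these two regimes are strictly separated (numerically, $|\bar{\x}_j|<0.34$ in the first case versus $\bar{\x}_{j_0}>0.35$ in the second), so inspecting the coordinates of $\bar{\x}$ decides the OR instance of size $d-1$ and inherits the $\Omega(\sqrt{d-1})=\Omega(\sqrt{d})$ lower bound.

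The main delicate point is exactly this quantitative separation in the $\sqrt{d}$ reduction: the Type A bound $|\bar{\x}_j|\lesssim \epsilon^{1/2}$ on the off-axis coordinates and the Type B bound $\bar{\x}_{j_0}\gtrsim \sqrt{2}-1$ must remain strictly disjoint on the admissible range of $\epsilon$, which is what motivates the specific rescaling $\A_1(1)=1/\sqrt{2}$ (rather than the naive $\A_1=e_1$ used in the $\sqrt{n}$ reduction) together with the explicit hypothesis $\epsilon<0.04$ in the theorem statement. All the remaining steps --- implementing $O_{\A}$ from $f$ with $O(1)$ overhead, transferring the $2/3$ success probability through the reduction, and assembling the final $\Omega(\sqrt{n}+\sqrt{d})$ bound via $\sqrt{n}+\sqrt{d}\leq 2\max\{\sqrt{n},\sqrt{d}\}$ --- are routine instances of the standard ``compute-output-$\Rightarrow$-solve-search'' template for quantum query lower bounds.
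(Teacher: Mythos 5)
Your proposal is correct, and it follows the same high-level strategy as the paper --- a valid $\epsilon$-approximate output $\bar{\x}$ is shown to solve an unstructured-search (OR) instance hidden in $\A$, so the BBBV $\Omega(\sqrt{N})$ bound applies --- but with a genuinely different pair of hard instances. The paper uses a single family of $n\times d$ matrices in which a hidden column $l$ and a (possibly) planted row $k$ coexist, and shows that one valid $\bar{\x}$ simultaneously reveals both, which requires a somewhat delicate computation of the two maximin values $\sigma_1=\frac{1}{\sqrt{4+2\sqrt{2}}}$ and $\sigma_2=\frac{1}{\sqrt 2}$ and of the forced shape of $\bar{\x}$ in each case. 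You instead decouple the two terms: a $\pm1$ single-column family (hiding a flipped row, $\sigma=1$ versus $\sigma=0$) for the $\Omega(\sqrt n)$ part, and a single-row family $\A_1=\tfrac{1}{\sqrt2}e_1$ versus $\A_1=\tfrac{1}{\sqrt2}(e_1+e_{j_0})$ (hiding a column, $\sigma=\tfrac{1}{\sqrt2}$ versus $\sigma=1$) for the $\Omega(\sqrt d)$ part; your quantitative separations check out (in the second reduction, $|\bar{\x}_j|\le\sqrt{2\sqrt2\,\epsilon}<0.34$ off the marked coordinate versus $\bar{\x}_{j_0}\ge\sqrt2(1-\epsilon)-1>0.35$ for $\epsilon<0.04$, and the decision version of search suffices, so large spurious coordinates in the marked case are harmless). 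This buys a shorter and more transparent analysis; what it loses relative to the paper is that your instances, as written, have degenerate shape. Since the theorem is a lower bound for the problem at given $(n,d)$, you should add one line of padding: embed the first family into $\R^{n\times d}$ by appending zero columns, and the second by duplicating the single row $n$ times; in both cases $\sigma$, the feasibility constraints on $\bar{\x}$ in \eqn{perceptron-goal-rewrite}, and the $O(1)$-query simulation of $O_{\A}$ from $O_f$ (with the usual uncomputation) are unchanged, after which the combination $\sqrt n+\sqrt d\le 2\max\{\sqrt n,\sqrt d\}$ finishes the argument exactly as you say.
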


\begin{proof}
Assume we are given the promise that $\A$ is from one of the two cases below:
\begin{enumerate}
\item There exists an $l\in\{2,\ldots,d\}$ such that $\A_{11}=-\frac{1}{\sqrt{2}}$, $\A_{1l}=\frac{1}{\sqrt{2}}$; $\A_{21}=\A_{2l}=\frac{1}{\sqrt{2}}$; there exists a unique $k\in\{3,\ldots,n\}$ such that $\A_{k1}=1$, $\A_{kl}=0$; $\A_{ij}=\frac{1}{\sqrt{2}}$ for all $i\in\{3,\ldots,n\}/\{k\}$, $j\in\{1,l\}$, and $\A_{ij}=0$ for all $i\in\range{n}$, $j\notin\{1,l\}$.
\item There exists an $l\in\{2,\ldots,d\}$ such that $\A_{11}=-\frac{1}{\sqrt{2}}$, $\A_{1l}=\frac{1}{\sqrt{2}}$; $\A_{21}=\A_{2l}=\frac{1}{\sqrt{2}}$; $\A_{ij}=\frac{1}{\sqrt{2}}$ for all $i\in\{3,\ldots,n\}$, $j\in\{1,l\}$, and $\A_{ij}=0$ for all $i\in\range{n}$, $j\notin\{1,l\}$.
\end{enumerate}
Notice that the only difference between these two cases is a row where the first entry is 1 and the $l^{\text{th}}$ entry is 0; they have the following pictures, respectively:
\begin{align}
\text{Case 1: }\A&=\left( \begin{array}{cccccccc}
        -\frac{1}{\sqrt{2}} & 0 & \cdots & 0 & \frac{1}{\sqrt{2}} & 0 & \cdots & 0 \\
        \frac{1}{\sqrt{2}} & 0 & \cdots & 0 & \frac{1}{\sqrt{2}} & 0 & \cdots & 0 \\
        \vdots & \vdots & \ddots & \vdots & \vdots & \vdots & \ddots & \vdots \\
        \frac{1}{\sqrt{2}} & 0 & \cdots & 0 & \frac{1}{\sqrt{2}} & 0 & \cdots & 0 \\
        1 & 0 & \cdots & 0 & 0 & 0 & \cdots & 0 \\
        \frac{1}{\sqrt{2}} & 0 & \cdots & 0 & \frac{1}{\sqrt{2}} & 0 & \cdots & 0 \\
        \vdots & \vdots & \ddots & \vdots & \vdots & \vdots & \ddots & \vdots \\
        \frac{1}{\sqrt{2}} & 0 & \cdots & 0 & \frac{1}{\sqrt{2}} & 0 & \cdots & 0
      \end{array} \right); \label{eqn:perceptron-lower-construction-1}
\end{align}
and
\begin{align}
\text{Case 2: }\A&=\left( \begin{array}{cccccccc}
        -\frac{1}{\sqrt{2}} & 0 & \cdots & 0 & \frac{1}{\sqrt{2}} & 0 & \cdots & 0 \\
        \frac{1}{\sqrt{2}} & 0 & \cdots & 0 & \frac{1}{\sqrt{2}} & 0 & \cdots & 0 \\
        \vdots & \vdots & \ddots & \vdots & \vdots & \vdots & \ddots & \vdots \\
        \frac{1}{\sqrt{2}} & 0 & \cdots & 0 & \frac{1}{\sqrt{2}} & 0 & \cdots & 0 \\
        \vdots & \vdots & \ddots & \vdots & \vdots & \vdots & \ddots & \vdots \\
        \frac{1}{\sqrt{2}} & 0 & \cdots & 0 & \frac{1}{\sqrt{2}} & 0 & \cdots & 0
      \end{array} \right). \label{eqn:perceptron-lower-construction-2}
\end{align}

We denote the maximin value in \eqn{linear-classification-rewrite} of these cases as $\sigma_{1}$ and $\sigma_{2}$, respectively. We have:

\begin{itemize}[leftmargin=*]
\item $\sigma_{2}=\frac{1}{\sqrt{2}}$.
\end{itemize}

On the one hand, consider $\bar{\x}=\vec{e}_{l}\in\B_{d}$ (the vector in $\R^{d}$ with the $l^{\text{th}}$ coordinate being 1 and all other coordinates being 0). Then $\A_{i}\bar{\x}=\frac{1}{\sqrt{2}}$ for all $i\in\range{n}$, and hence $\sigma_{2}\geq\min_{i\in\range{n}}\A_{i}\bar{\x}=\frac{1}{\sqrt{2}}$. On the other hand, for any $\x=(\x_{1},\ldots,\x_{d})\in\B_{d}$, we have
\begin{align}
\min_{i\in\range{n}}\A_{i}\x=\min\Big\{-\frac{1}{\sqrt{2}}\x_{1}+\frac{1}{\sqrt{2}}\x_{l},\frac{1}{\sqrt{2}}\x_{1}+\frac{1}{\sqrt{2}}\x_{l}\Big\}\leq\frac{1}{\sqrt{2}}\x_{l}\leq\frac{1}{\sqrt{2}},
\end{align}
where the first inequality comes from the fact that $\min\{a,b\}\leq\frac{a+b}{2}$ for all $\A,b\in\R$ and the second inequality comes from the fact that $\x\in\B_{d}$ and $|\x_{l}|\leq 1$. As a result, $\sigma_{2}=\max_{\x\in\B_{d}}\min_{i\in\range{n}}\A_{i}\x\leq\frac{1}{\sqrt{2}}$. In conclusion, we have $\sigma_{2}=\frac{1}{\sqrt{2}}$.

\begin{itemize}[leftmargin=*]
\item $\sigma_{1}=\frac{1}{\sqrt{4+2\sqrt{2}}}$.
\end{itemize}

On the one hand, consider $\bar{\x}=\frac{1}{\sqrt{4+2\sqrt{2}}}\vec{e}_{1}+\frac{\sqrt{2}+1}{\sqrt{4+2\sqrt{2}}}\vec{e}_{l}\in\B_{d}$. Then
\begin{align}
\A_{1}\bar{\x}&=-\frac{1}{\sqrt{2}}\cdot\frac{1}{\sqrt{4+2\sqrt{2}}}+\frac{1}{\sqrt{2}}\cdot\frac{\sqrt{2}+1}{\sqrt{4+2\sqrt{2}}}=\frac{1}{\sqrt{4+2\sqrt{2}}}; \\
\A_{i}\bar{\x}&=\frac{1}{\sqrt{2}}\cdot\frac{1}{\sqrt{4+2\sqrt{2}}}+\frac{1}{\sqrt{2}}\cdot\frac{\sqrt{2}+1}{\sqrt{4+2\sqrt{2}}}=\frac{\sqrt{2}+1}{\sqrt{4+2\sqrt{2}}}>\frac{1}{\sqrt{4+2\sqrt{2}}}\quad\forall\,i\in\range{n}/\{1,k\}; \\
\A_{k}\bar{\x}&=1\cdot\frac{1}{\sqrt{4+2\sqrt{2}}}+0\cdot\frac{\sqrt{2}+1}{\sqrt{4+2\sqrt{2}}}=\frac{1}{\sqrt{4+2\sqrt{2}}}.
\end{align}
In all, $\sigma_{1}\geq\min_{i\in\range{n}}\A_{i}\bar{\x}=\frac{1}{\sqrt{4+2\sqrt{2}}}$.

On the other hand, for any $\x=(\x_{1},\ldots,\x_{d})\in\B_{d}$, we have
\begin{align}\label{eqn:perceptron-lower-n-1}
\min_{i\in\range{n}}\A_{i}\x=\min\Big\{-\frac{1}{\sqrt{2}}\x_{1}+\frac{1}{\sqrt{2}}\x_{l},\frac{1}{\sqrt{2}}\x_{1}+\frac{1}{\sqrt{2}}\x_{l},\x_{1}\Big\}.
\end{align}
If $\x_{1}\leq\frac{1}{\sqrt{4+2\sqrt{2}}}$, then \eqn{perceptron-lower-n-1} implies that $\min_{i\in\range{n}}\A_{i}\x\leq\frac{1}{\sqrt{4+2\sqrt{2}}}$; if $\x_{1}\geq\frac{1}{\sqrt{4+2\sqrt{2}}}$, then
\begin{align}
\x_{l}\leq\sqrt{1-\x_{1}^{2}}=\sqrt{1-\frac{1}{4+2\sqrt{2}}}=\frac{\sqrt{2}+1}{\sqrt{4+2\sqrt{2}}},
\end{align}
and hence by \eqn{perceptron-lower-n-1} we have
\begin{align}
\min_{i\in\range{n}}\A_{i}\x\leq-\frac{1}{\sqrt{2}}\x_{1}+\frac{1}{\sqrt{2}}\x_{l}\leq-\frac{1}{\sqrt{2}}\cdot\frac{1}{\sqrt{4+2\sqrt{2}}}+\frac{1}{\sqrt{2}}\cdot\frac{\sqrt{2}+1}{\sqrt{4+2\sqrt{2}}}=\frac{1}{\sqrt{4+2\sqrt{2}}}.
\end{align}
In all, we always have $\min_{i\in\range{n}}\A_{i}\x\leq\frac{1}{\sqrt{4+2\sqrt{2}}}$. As a result, $\sigma_{1}=\max_{\x\in\B_{d}}\min_{i\in\range{n}}\A_{i}\x\leq\frac{1}{\sqrt{4+2\sqrt{2}}}$. In conclusion, we have $\sigma_{1}=\frac{1}{\sqrt{4+2\sqrt{2}}}$.
\\\\\indent
Now, we prove that an $\bar{\x}\in\B_{d}$ satisfying \eqn{perceptron-goal-rewrite} would simultaneously reveal whether $\A$ is from Case 1 or Case 2 as well as the value of $l\in\{2,\ldots,d\}$, by the following algorithm:
\begin{enumerate}
\item Check if one of $\bar{\x}_{2},\ldots,\bar{\x}_{d}$ is larger than 0.94; if there exists an $l'\in\{2,\ldots,d\}$ such that $\bar{\x}_{l'}>0.94$, return `Case 2' and $l=l'$;
\item Otherwise, return `Case 1' and $l=\arg\max_{i\in\{2,\ldots,d\}}\bar{\x}_{i}$.
\end{enumerate}

We first prove that the classification of $\A$ (between Case 1 and Case 2) is correct. On the one hand, assume that $\A$ comes from Case 1. If we wrongly classified $\A$ as from Case 2, we would have $\bar{\x}_{l'}>0.94$ and $\bar{\x}_{1}<\sqrt{1-0.94^{2}}<0.342$; this would imply
\begin{align}
\min_{i\in\range{n}}\A_{i}\bar{\x}=\min\Big\{-\frac{1}{\sqrt{2}}\bar{\x}_{1}+\frac{1}{\sqrt{2}}\bar{\x}_{l},\frac{1}{\sqrt{2}}\bar{\x}_{1}+\frac{1}{\sqrt{2}}\bar{\x}_{l},\bar{\x}_{1}\Big\}\leq \bar{\x}_{1}<\frac{1}{\sqrt{4+2\sqrt{2}}}-0.04\leq\sigma_{1}-\epsilon
\end{align}
by $0.342<\frac{1}{\sqrt{4+2\sqrt{2}}}-0.04$, contradicts with \eqn{perceptron-goal-rewrite}. Therefore, for this case we must make correct classification that $\A$ comes from Case 1.

On the other hand, assume that $\A$ comes from Case 2. If we wrongly classified $\A$ as from Case 1, we would have $\bar{\x}_{l}\leq\max_{i\in\{2,\ldots,d\}}\bar{\x}_{i}\leq 0.94$; this would imply
\begin{align}
\min_{i\in\range{n}}\A_{i}\bar{\x}=\min\Big\{-\frac{1}{\sqrt{2}}\bar{\x}_{1}+\frac{1}{\sqrt{2}}\bar{\x}_{l},\frac{1}{\sqrt{2}}\bar{\x}_{1}+\frac{1}{\sqrt{2}}\bar{\x}_{l}\Big\}\leq\frac{1}{\sqrt{2}}\bar{\x}_{l}<\frac{1}{\sqrt{2}}-0.04\leq\sigma_{2}-\epsilon
\end{align}
by $\frac{0.94}{\sqrt{2}}<\frac{1}{\sqrt{2}}-0.04$, contradicts with \eqn{perceptron-goal-rewrite}. Therefore, for this case we must make correct classification that $\A$ comes from Case 2. In all, our classification is always correct.

It remains to prove that the value of $l$ is correct. If $\A$ is from Case 1, we have
\begin{align}
\sigma_{1}-\epsilon\leq\min_{i\in\range{n}}\A_{i}\bar{\x}=\min\Big\{-\frac{1}{\sqrt{2}}\bar{\x}_{1}+\frac{1}{\sqrt{2}}\bar{\x}_{l},\frac{1}{\sqrt{2}}\bar{\x}_{1}+\frac{1}{\sqrt{2}}\bar{\x}_{l},\bar{\x}_{1}\Big\};
\end{align}
as a result, $\bar{\x}_{1}\geq\sigma_{1}-\epsilon>0.38-0.04=0.34$, and
\begin{align}
-\frac{1}{\sqrt{2}}\bar{\x}_{1}+\frac{1}{\sqrt{2}}\bar{\x}_{l}>0.34\quad\Longrightarrow\quad \bar{\x}_{l}>0.34\sqrt{2}+\bar{\x}_{1}>0.34(\sqrt{2}+1)>0.82.
\end{align}
Because $2\cdot 0.82^{2}>1$, $\bar{\x}_{l}$ must be the largest among $\bar{\x}_{2},\ldots,\bar{\x}_{d}$ (otherwise $l'=\arg\max_{i\in\{2,\ldots,d\}}\bar{\x}_{i}$ and $l\neq l'$ would imply $\|\bar{\x}\|^{2}=\sum_{i\in\range{d}}|\bar{\x}_{i}|^{2}\geq \bar{\x}_{l}^{2}+\bar{\x}_{l'}^{2}\geq 2\bar{\x}_{l}^{2}>1$, contradiction). Therefore, Line 2 of our algorithm correctly returns the value of $l$.

If $\A$ is from Case 2, we have
\begin{align}
\sigma_{2}-\epsilon\leq\min_{i\in\range{n}}\A_{i}\bar{\x}=\min\Big\{-\frac{1}{\sqrt{2}}\bar{\x}_{1}+\frac{1}{\sqrt{2}}\bar{\x}_{l},\frac{1}{\sqrt{2}}\bar{\x}_{1}+\frac{1}{\sqrt{2}}\bar{\x}_{l}\Big\}\leq \frac{1}{\sqrt{2}}\bar{\x}_{l},
\end{align}
and hence $\bar{\x}_{l}\geq\sqrt{2}(\sigma_{2}-\epsilon)\geq\sqrt{2}(\frac{1}{\sqrt{2}}-0.04)>0.94$. Because $2\cdot 0.94^{2}>1$, only one coordinate of $\bar{\x}$ could be at least 0.94 and we must have $l=l'$. Therefore, Line 1 of our algorithm correctly returns the value of $l$.

In all, we have proved that an $\epsilon$-approximate solution $\bar{\x}\in\B_{d}$ for \eqn{perceptron-goal-rewrite} would simultaneously reveal whether $\A$ is from Case 1 or Case 2 as well as the value of $l\in\{2,\ldots,d\}$. On the one hand, notice that distinguishing these two cases requires $\Omega(\sqrt{n-2})=\Omega(\sqrt{n})$ quantum queries to $O_{\A}$ for searching the position of $k$ because of the quantum lower bound for search \cite{bennett1997strengths}; therefore, it gives an $\Omega(\sqrt{n})$ quantum lower bound on queries to $O_{\A}$ for returning an $\bar{\x}$ that satisfies \eqn{perceptron-goal-rewrite}. On the other hand, finding the value of $l$ is also a search problem on the entries of $\A$, which requires $\Omega(\sqrt{d-1})=\Omega(\sqrt{d})$ quantum queries to $O_{\A}$ also due to the quantum lower bound for search \cite{bennett1997strengths}. These observations complete the proof of \thm{perceptron-quantum-lower}.
\end{proof}

Because the kernel-based classifier in \sec{kernel} contains the linear classification in \sec{perceptron} as a special case, \thm{perceptron-quantum-lower} implies an $\Omega(\sqrt{n}+\sqrt{d})$ quantum lower bound on the kernel method.

%=========================================
\subsection{Minimum enclosing ball (MEB)}\label{sec:MEB-lower}
Similarly, the input of the MEB problem is a matrix $\A\in\R^{n\times d}$ such that $\A_{i}\in\B_{d}$ for all $i\in\range{n}$, and we are given the quantum oracle $O_{\A}$ such that $O_{\A}|i\>|j\>|0\>=|i\>|j\>|\A_{ij}\>\ \forall\,i\in\range{n}, j\in\range{d}$. The goal of MEB is to approximately solve
\begin{align}\label{eqn:MEB-rewrite}
\sigma_{\MEB}=\min_{\x\in\R^{d}}\max_{i\in\range{n}}\|\x-\A_{i}\|^{2}.
\end{align}
\thm{MEB-quantum} solves this task with high success probability with cost $\tilde{O}\big(\frac{\sqrt{n}}{\epsilon^{4}}+\frac{\sqrt{d}}{\epsilon^{7}}\big)$. In this subsection, we prove a quantum lower bound that matches this upper bound in $n$ and $d$ for constant $\epsilon$:
\begin{theorem}\label{thm:MEB-quantum-lower}
Assume $0<\epsilon<0.01$. Then to return an $\bar{\x}\in\R_{d}$ satisfying
\begin{align}\label{eqn:MEB-goal-rewrite}
\max_{i\in\range{n}}\|\bar{\x}-\A_{i}\|^{2}\leq\min_{\x\in\R^{d}}\max_{i\in\range{n}}\|\x-\A_{i}\|^{2}+\epsilon
\end{align}
with probability at least $2/3$, we need $\Omega(\sqrt{n}+\sqrt{d})$ quantum queries to $O_{\A}$.
\end{theorem}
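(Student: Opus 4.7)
The plan is to prove the $\Omega(\sqrt{n})$ and $\Omega(\sqrt{d})$ lower bounds separately, each via a reduction to the optimality of Grover search, following the same overall scheme as the proof of \thm{perceptron-quantum-lower}.

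For the $\Omega(\sqrt{n})$ bound, I set up a promise problem over instances $\A\in\R^{n\times d}$ with rows in $\B_{d}$. In the ``null'' family every row $\A_{i}$ is the zero vector, so $\sigma_{\MEB}=0$ and any $\epsilon$-approximate center $\bar{\x}$ must obey $\|\bar{\x}\|^{2}\leq\epsilon$. In the ``marked'' family there is a unique index $k\in\range{n}$ with $\A_{k}=\vec{e}_{1}$ and all other rows zero, so the optimal MEB has center $\frac{1}{2}\vec{e}_{1}$ and radius $\frac{1}{2}$, giving $\sigma_{\MEB}=\frac{1}{4}$. Adding the two binding constraints $\|\bar{\x}\|^{2}\leq\frac{1}{4}+\epsilon$ and $\|\bar{\x}-\vec{e}_{1}\|^{2}\leq\frac{1}{4}+\epsilon$ and using $\|\bar{\x}\|^{2}\geq\bar{\x}_{1}^{2}$ yields $(\bar{\x}_{1}-\frac{1}{2})^{2}\leq\epsilon$. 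Since $\epsilon<0.01$, in the null case $\bar{\x}_{1}\in[-0.1,0.1]$ whereas in the marked case $\bar{\x}_{1}\in[0.4,0.6]$, so $\bar{\x}$ perfectly distinguishes the two families. Distinguishing them is exactly quantum OR over $n$ candidate marked indices, and the optimality of Grover search~\cite{bennett1997strengths} forces $\Omega(\sqrt{n})$ queries to $O_{\A}$.

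For the $\Omega(\sqrt{d})$ bound, I fix an unknown coordinate $l\in\{2,\ldots,d\}$ and take $\A_{1}=\frac{1}{\sqrt{2}}(-\vec{e}_{1}+\vec{e}_{l})$ and $\A_{i}=\frac{1}{\sqrt{2}}(\vec{e}_{1}+\vec{e}_{l})$ for $i\geq 2$, with all entries outside columns $1$ and $l$ equal to zero. The optimal MEB has center $\frac{1}{\sqrt{2}}\vec{e}_{l}$ and radius $\frac{1}{\sqrt{2}}$, so $\sigma_{\MEB}=\frac{1}{2}$. Summing the two constraints $\|\bar{\x}-\A_{1}\|^{2},\|\bar{\x}-\A_{2}\|^{2}\leq\frac{1}{2}+\epsilon$ gives $\|\bar{\x}\|^{2}-\sqrt{2}\bar{\x}_{l}\leq-\frac{1}{2}+\epsilon$; combined with $\|\bar{\x}\|^{2}\geq\bar{\x}_{l}^{2}$ this yields $(\bar{\x}_{l}-\frac{1}{\sqrt{2}})^{2}\leq\epsilon$, and then $\sum_{j\notin\{1,l\}}\bar{\x}_{j}^{2}\leq\|\bar{\x}\|^{2}-\bar{\x}_{l}^{2}\leq 2\sqrt{2}\sqrt{\epsilon}$. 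For $\epsilon<0.01$, $\bar{\x}_{l}>\frac{1}{\sqrt{2}}-\sqrt{\epsilon}>0.6$ strictly exceeds any other $|\bar{\x}_{j}|\leq\sqrt{2\sqrt{2}\sqrt{\epsilon}}<0.54$ for $j\in\{2,\ldots,d\}\setminus\{l\}$, so $l=\arg\max_{j\in\{2,\ldots,d\}}\bar{\x}_{j}$. Extracting $l$ is quantum unstructured search over $d-1$ candidates, giving $\Omega(\sqrt{d})$ queries to $O_{\A}$.

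Combining the two bounds proves the claimed $\Omega(\sqrt{n}+\sqrt{d})$. The main obstacle is keeping the approximation tolerances honest enough that for every $\epsilon<0.01$ the windows on $\bar{\x}_{1}$ (resp.\ $\bar{\x}_{l}$) arising in the two scenarios are strictly separated and the $\arg\max$ identification is unambiguous; these verifications are direct calculations in the style of the proof of \thm{perceptron-quantum-lower}. The remaining steps---encoding each scenario as a promise version of unstructured search and invoking the lower bound of~\cite{bennett1997strengths}---are standard.
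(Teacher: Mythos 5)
Your proposal is correct, and it follows the same high-level strategy as the paper (hard instances whose approximate MEB center reveals a hidden marked index, followed by the unstructured-search lower bound of \cite{bennett1997strengths}), but with a different decomposition. The paper derives both bounds from a \emph{single} pair of instance families --- the same Cases 1 and 2 used in the proof of \thm{perceptron-quantum-lower}, with optimal values $\sigma_{\MEB,1}=\frac{2+\sqrt{2}}{4}$ and $\sigma_{\MEB,2}=\frac{1}{2}$ --- and shows that one $\epsilon$-approximate center simultaneously decides which case holds (detecting the special row $k$, hence $\Omega(\sqrt{n})$) and pins down the column $l$ (hence $\Omega(\sqrt{d})$). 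You decouple the two: your $\Omega(\sqrt{d})$ family is the paper's Case 2 verbatim, while your $\Omega(\sqrt{n})$ family (all-zero matrix versus a single row equal to $\vec{e}_{1}$) is new and simpler, with optimal values $0$ versus $\frac14$ and a clean window argument $(\bar{\x}_{1}-\frac12)^{2}\leq\epsilon$. I checked your algebra: the averaging trick $\max\{a,b\}\geq\frac12(a+b)$ combined with $\|\bar{\x}\|^{2}\geq\bar{\x}_{1}^{2}$ (resp.\ $\bar{\x}_{l}^{2}$) gives exactly the stated windows, and the separation constants ($0.1$ vs.\ $0.4$, and $0.6$ vs.\ $0.54$) hold for all $\epsilon<0.01$. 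What your route buys is modularity and much lighter arithmetic in the $n$-part; what the paper's route buys is a single promise problem reused across both the classification and MEB lower bounds. Both are complete proofs.
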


By \sec{L2-SVM}, \thm{MEB-quantum-lower} also implies an $\Omega(\sqrt{n}+\sqrt{d})$ quantum lower bound on $\ell^{2}$-margin SVMs.

\begin{proof}
We also assume that $\A$ is from one of the two cases in \thm{perceptron-quantum-lower}; see also \eqn{perceptron-lower-construction-1} and \eqn{perceptron-lower-construction-2}. We denote the maximin value in \eqn{MEB-rewrite} of these cases as $\sigma_{\MEB,1}$ and $\sigma_{\MEB,2}$, respectively. We have:

\begin{itemize}[leftmargin=*]
\item $\sigma_{\MEB,2}=\frac{1}{2}$.
\end{itemize}

On the one hand, consider $\bar{\x}=\frac{1}{\sqrt{2}}\vec{e}_{l}$. Then
\begin{align}
\|\bar{\x}-\A_{1}\|^{2}&=\Big(\x_{1}+\frac{1}{\sqrt{2}}\Big)^{2}+\Big(\x_{l}-\frac{1}{\sqrt{2}}\Big)^{2}+\sum_{i\neq 1,l}\x_{i}^{2}=\Big(\frac{1}{\sqrt{2}}\Big)^{2}=\frac{1}{2}; \\
\|\bar{\x}-\A_{i}\|^{2}&=\Big(\x_{1}-\frac{1}{\sqrt{2}}\Big)^{2}+\Big(\x_{l}-\frac{1}{\sqrt{2}}\Big)^{2}+\sum_{i\neq 1,l}\x_{i}^{2}=\Big(\frac{1}{\sqrt{2}}\Big)^{2}=\frac{1}{2}\quad\forall\,i\in\{2,\ldots,n\}.
\end{align}
Therefore, $\|\bar{\x}-\A_{i}\|^{2}=\frac{1}{2}$ for all $i\in\range{n}$, and hence $\sigma_{\MEB,2}\leq\max_{i\in\range{n}}\|\bar{\x}-\A_{i}\|^{2}=\frac{1}{2}$.

On the other hand, for any $\x=(\x_{1},\ldots,\x_{d})\in\R_{d}$, we have
\begin{align}
&\max_{i\in\range{n}}\|\x-\A_{i}\|^{2} \nonumber \\
=\ &\max\Big\{\Big(\x_{1}-\frac{1}{\sqrt{2}}\Big)^{2}+\Big(\x_{l}-\frac{1}{\sqrt{2}}\Big)^{2}+\sum_{i\neq 1,l}\x_{i}^{2},\Big(\x_{1}+\frac{1}{\sqrt{2}}\Big)^{2}+\Big(\x_{l}-\frac{1}{\sqrt{2}}\Big)^{2}+\sum_{i\neq 1,l}\x_{i}^{2}\Big\} \\
\geq\ &\frac{1}{2}\Big[\Big(\x_{1}-\frac{1}{\sqrt{2}}\Big)^{2}+\Big(\x_{l}-\frac{1}{\sqrt{2}}\Big)^{2}\Big]+\frac{1}{2}\Big[\Big(\x_{1}+\frac{1}{\sqrt{2}}\Big)^{2}+\Big(\x_{l}-\frac{1}{\sqrt{2}}\Big)^{2}\Big]+\sum_{i\neq 1,l}\x_{i}^{2} \label{eqn:MEB-lower-1} \\
=\ &\x_{1}^{2}+\Big(\x_{l}-\frac{1}{\sqrt{2}}\Big)^{2}+\sum_{i\neq 1,l}\x_{i}^{2}+\frac{1}{2} \label{eqn:MEB-lower-1.5} \\
\geq\ &\frac{1}{2},
\end{align}
where \eqn{MEB-lower-1} comes from the fact that $\max\{a,b\}\geq\frac{1}{2}(a+b)$ and $\sum_{i\neq 1,l}\x_{i}^{2}\geq 0$. Therefore, $\sigma_{\MEB,2}\geq\frac{1}{2}$. In all, we must have $\sigma_{\MEB,2}=\frac{1}{2}$.

\begin{itemize}[leftmargin=*]
\item $\sigma_{\MEB,1}=\frac{2+\sqrt{2}}{4}$.
\end{itemize}

On the one hand, consider $\bar{\x}=\big(\frac{1}{2}-\frac{\sqrt{2}}{4}\big)\vec{e}_{1}+\frac{\sqrt{2}}{4}\vec{e}_{l}$. Then
\begin{align}
\|\bar{\x}-\A_{1}\|^{2}&=\Big(\x_{1}+\frac{1}{\sqrt{2}}\Big)^{2}+\Big(\x_{l}-\frac{1}{\sqrt{2}}\Big)^{2}+\sum_{i\neq 1,l}\x_{i}^{2}=\Big(\frac{1}{2}+\frac{\sqrt{2}}{4}\Big)^{2}+\Big(\frac{\sqrt{2}}{4}\Big)^{2}=\frac{2+\sqrt{2}}{4}; \\
\|\bar{\x}-\A_{k}\|^{2}&=(\x_{1}-1)^{2}+\x_{l}^{2}+\sum_{i\neq 1,l}\x_{i}^{2}=\Big(\frac{1}{2}+\frac{\sqrt{2}}{4}\Big)^{2}+\Big(\frac{\sqrt{2}}{4}\Big)^{2}=\frac{2+\sqrt{2}}{4}; \\
\|\bar{\x}-\A_{i}\|^{2}&=\Big(\x_{1}-\frac{1}{\sqrt{2}}\Big)^{2}+\Big(\x_{l}-\frac{1}{\sqrt{2}}\Big)^{2}+\sum_{i\neq 1,l}\x_{i}^{2}=\frac{6-3\sqrt{2}}{4}<\frac{2+\sqrt{2}}{4}\ \ \,\forall\,i\in\range{n}/\{1,k\}.
\end{align}
In all, $\sigma_{\MEB,1}\leq\max_{i\in\range{n}}\|\bar{\x}-\A_{i}\|^{2}=\frac{2+\sqrt{2}}{4}$.

On the other hand, for any $\x=(\x_{1},\ldots,\x_{d})\in\R_{d}$, we have
\begin{align}
\max_{i\in\range{n}}\|\x-\A_{i}\|^{2}\geq&\max\Big\{\Big(\x_{1}+\frac{1}{\sqrt{2}}\Big)^{2}+\Big(\x_{l}-\frac{1}{\sqrt{2}}\Big)^{2}+\sum_{i\neq 1,l}\x_{i}^{2},(\x_{1}-1)^{2}+\x_{l}^{2}+\sum_{i\neq 1,l}\x_{i}^{2}\Big\} \\
\geq\ &\frac{1}{2}\Big[\Big(\x_{1}+\frac{1}{\sqrt{2}}\Big)^{2}+\Big(\x_{l}-\frac{1}{\sqrt{2}}\Big)^{2}\Big]+\frac{1}{2}\Big[(\x_{1}-1)^{2}+\x_{l}^{2}\Big]+\sum_{i\neq 1,l}\x_{i}^{2} \\
=\ &\Big[\x_{1}-\Big(\frac{1}{2}-\frac{\sqrt{2}}{4}\Big)\Big]^{2}+\Big(\x_{l}-\frac{\sqrt{2}}{4}\Big)^{2}+\sum_{i\neq 1,l}\x_{i}^{2}+\frac{2+\sqrt{2}}{4} \label{eqn:MEB-lower-2} \\
\geq\ &\frac{2+\sqrt{2}}{4}.
\end{align}
Therefore, $\sigma_{\MEB,2}\geq\frac{2+\sqrt{2}}{4}$. In all, we must have $\sigma_{\MEB,2}=\frac{2+\sqrt{2}}{4}$.
\\\\\indent
Now, we prove that an $\bar{\x}\in\R_{d}$ satisfying \eqn{MEB-goal-rewrite} would simultaneously reveal whether $\A$ is from Case 1 or Case 2 as well as the value of $l\in\{2,\ldots,d\}$, by the following algorithm:
\begin{enumerate}
\item Check if one of $\bar{\x}_{2},\ldots,\bar{\x}_{d}$ is larger than $\frac{3\sqrt{2}}{8}$; if there exists an $l'\in\{2,\ldots,d\}$ such that $\bar{\x}_{l'}>\frac{3\sqrt{2}}{8}$, return `Case 1' and $l=l'$;
\item Otherwise, return `Case 2' and $l=\arg\max_{i\in\{2,\ldots,d\}}\bar{\x}_{i}$.
\end{enumerate}

We first prove that the classification of $\A$ (between Case 1 and Case 2) is correct. On the one hand, assume that $\A$ comes from Case 1. If we wrongly classified $\A$ as from Case 2, we would have $\bar{\x}_{l}\leq\max_{i\in\{2,\ldots,d\}}\bar{\x}_{i}\leq\frac{3\sqrt{2}}{8}$. By \eqn{MEB-lower-1.5}, this would imply
\begin{align}
\max_{i\in\range{n}}\|\bar{\x}-\A_{i}\|^{2}\geq \Big(\bar{\x}_{l}-\frac{1}{\sqrt{2}}\Big)^{2}+\frac{1}{2}\geq\frac{1}{32}+\frac{1}{2}>\sigma_{\MEB,1}+\epsilon,
\end{align}
contradicts with \eqn{MEB-goal-rewrite}. Therefore, for this case we must make correct classification that $\A$ comes from Case 2.

On the other hand, assume that $\A$ comes from Case 2. If we wrongly classified $\A$ as from Case 1, we would have $\bar{\x}_{l'}>\frac{3\sqrt{2}}{8}$. If $l=l'$, then \eqn{MEB-lower-2} implies that
\begin{align}
\max_{i\in\range{n}}\|\bar{\x}-\A_{i}\|^{2}\geq\Big(\bar{\x}_{l}-\frac{\sqrt{2}}{4}\Big)^{2}+\frac{2+\sqrt{2}}{4}\geq\frac{1}{32}+\frac{2+\sqrt{2}}{4}>\sigma_{\MEB,2}+\epsilon,
\end{align}
contradicts with \eqn{MEB-goal-rewrite}. If $l\neq l'$, then \eqn{MEB-lower-2} implies that
\begin{align}
\max_{i\in\range{n}}\|\bar{\x}-\A_{i}\|^{2}\geq \bar{\x}_{l'}^{2}+\frac{2+\sqrt{2}}{4}\geq\frac{9}{32}+\frac{2+\sqrt{2}}{4}>\sigma_{\MEB,2}+\epsilon,
\end{align}
also contradicts with \eqn{MEB-goal-rewrite}. Therefore, for this case we must make correct classification that $\A$ comes from Case 1.

In all, our classification is always correct. It remains to prove that the value of $l$ is correct. If $\A$ is from Case 1, by \eqn{MEB-lower-1.5} we have
\begin{align}
\frac{1}{2}+0.01\geq\max_{i\in\range{n}}\|\bar{\x}-\A_{i}\|^{2}\geq \bar{\x}_{1}^{2}+\Big(\bar{\x}_{l}-\frac{1}{\sqrt{2}}\Big)^{2}+\sum_{i\neq 1,l}\bar{\x}_{i}^{2}+\frac{1}{2}.
\end{align}
As a result, $\bar{\x}_{i}\leq 0.1<\frac{3\sqrt{2}}{8}$ for all $i\in\range{n}/\{l\}$ and $\bar{\x}_{l}\geq\frac{1}{\sqrt{2}}-0.1>\frac{3\sqrt{2}}{8}$. Therefore, we must have $l=l'$, i.e., Line 1 of our algorithm correctly returns the value of $l$.

If $\A$ is from Case 2, by \eqn{MEB-lower-2} we have
\begin{align}
\frac{2+\sqrt{2}}{4}+0.01&\geq\max_{i\in\range{n}}\|\bar{\x}-\A_{i}\|^{2} \\
&\geq\Big[\bar{\x}_{1}-\Big(\frac{1}{2}-\frac{\sqrt{2}}{4}\Big)\Big]^{2}+\Big(\bar{\x}_{l}-\frac{\sqrt{2}}{4}\Big)^{2}+\sum_{i\neq 1,l}\bar{\x}_{i}^{2}+\frac{2+\sqrt{2}}{4}.
\end{align}
As a result, $\bar{\x}_{i}\leq 0.1<0.25$ for all $i\in\range{n}/\{1,l\}$, $\bar{\x}_{1}\leq\frac{1}{2}-\frac{\sqrt{2}}{4}+0.1<0.25$, and $\bar{\x}_{l}\geq\frac{\sqrt{2}}{4}-0.1>0.25$. Therefore, we must have $\bar{\x}_l=\max_{i\in\{2,\ldots,d\}}\bar{\x}_{i}$, i.e., Line 1 of our algorithm correctly returns the value of $l$.

In all, we have proved that an $\epsilon$-approximate solution $\bar{\x}\in\R_{d}$ for \eqn{MEB-goal-rewrite} would simultaneously reveal whether $\A$ is from Case 1 or Case 2 as well as the value of $l\in\{2,\ldots,d\}$. On the one hand, notice that distinguishing these two cases requires $\Omega(\sqrt{n-2})=\Omega(\sqrt{n})$ quantum queries to $O_{\A}$ for searching the position of $k$ because of the quantum lower bound for search \cite{bennett1997strengths}; therefore, it gives an $\Omega(\sqrt{n})$ quantum lower bound on queries to $O_{\A}$ for returning an $\bar{\x}$ that satisfies \eqn{MEB-goal-rewrite}. On the other hand, finding the value of $l$ is also a search problem on the entries of $\A$, which requires $\Omega(\sqrt{d-1})=\Omega(\sqrt{d})$ quantum queries to $O_{\A}$ also due to the quantum lower bound for search \cite{bennett1997strengths}. These observations complete the proof of \thm{MEB-quantum-lower}.
\end{proof}

%=========================================
\subsection{Zero-sum games}\label{sec:zero-sum-lower}
Recall that for zero-sum games, we are given an $n$-dimensional anti-symmetric matrix $\A$ normalized by $\max_{i,j\in\range{n}}|\A_{i,j}|\leq 1$, and the goal is to return an $\x\in\Delta_{n}$ such that
\begin{align}\label{eqn:zero-sum-game-x-rewrite}
\A\x\leq\epsilon\cdot\vecc{1}_{n}.
\end{align}
Given the quantum oracle $O_{\A}$ in \eqn{oracle-defn}, i.e., $O_{\A}|i\>|j\>|0\>=|i\>|j\>|\A_{ij}\>\ \forall\,i,j\in\range{n}$, \thm{perceptron-quantum} solves this task with high success probability with cost $\tilde{O}\big(\frac{\sqrt{n}}{\epsilon^{4}}\big)$. We prove a matching quantum lower bound in $n$:

\begin{theorem}\label{thm:zero-sum-lower}
Assume $0<\epsilon<1/3$. Then to return an $\x\in\Delta_{n}$ satisfying \eqn{zero-sum-game-x-rewrite} with probability at least $2/3$, we need $\Omega(\sqrt{n})$ quantum queries to $O_{\A}$.
\end{theorem}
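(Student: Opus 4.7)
The plan is to embed an $n$-way unstructured search problem into the task, so that producing an $\epsilon$-optimal strategy forces the algorithm to locate a hidden index; the $\Omega(\sqrt{n})$ bound then transfers from the BBBV search lower bound, paralleling the arguments used in \thm{perceptron-quantum-lower} and \thm{MEB-quantum-lower}.

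First, I would define a family of hard instances $\{\A^{(k)}\}_{k\in\range{n}}$ by setting $\A^{(k)}_{kj}=1$ for $j\neq k$, $\A^{(k)}_{jk}=-1$ for $j\neq k$, and $\A^{(k)}_{ij}=0$ otherwise. Each $\A^{(k)}$ is antisymmetric and satisfies $\max_{i,j}|\A^{(k)}_{ij}|\leq 1$, so it is a valid input to the zero-sum game problem.

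Next, for any $\x\in\Delta_{n}$ a direct computation gives $(\A^{(k)}\x)_{k}=1-\x_{k}$ (from summing the $1$'s across row $k$) and $(\A^{(k)}\x)_{i}=-\x_{k}$ for $i\neq k$ (the only nonzero contribution is $\A^{(k)}_{ik}\x_{k}=-\x_{k}$). Thus the constraint $\A^{(k)}\x\leq\epsilon\cdot\vecc{1}_{n}$ collapses to the single inequality $\x_{k}\geq 1-\epsilon>2/3$ under the assumption $\epsilon<1/3$. In particular the hidden index is recovered as $k=\arg\max_{i\in\range{n}}\x_{i}$, since at most one coordinate of a probability distribution can exceed $1/2$.

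Finally, I would invoke the quantum lower bound for unstructured search \cite{bennett1997strengths}. The oracle $O_{\A^{(k)}}$ acts by $|i\>|j\>|z\>\mapsto|i\>|j\>|z\oplus \A^{(k)}_{ij}\>$, where $\A^{(k)}_{ij}$ depends on $(i,j)$ only through the two predicates $i=k$ and $j=k$; hence a query to $O_{\A^{(k)}}$ can be simulated with $O(1)$ overhead by a standard marked-element oracle on $\range{n}$, and conversely. Distinguishing the $n$ hypotheses $\{\A^{(k)}\}_{k\in\range{n}}$ therefore requires $\Omega(\sqrt{n})$ quantum queries. Since any valid output $\x$ reveals $k$ via the arg-max step above, producing such an $\x$ with success probability at least $2/3$ requires $\Omega(\sqrt{n})$ queries to $O_{\A}$, proving the theorem. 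The only mildly delicate point is confirming the $O(1)$-overhead simulation between $O_{\A^{(k)}}$ and the marked-element oracle so that the BBBV bound applies in the input model of \eqn{oracle-defn}; the concentration step $\x_{k}\geq 1-\epsilon$ and the subsequent identification of $k$ are routine.
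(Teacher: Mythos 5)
Your proposal is correct and follows essentially the same route as the paper: the same family of antisymmetric hard instances indexed by a hidden $k$, the same computation showing any feasible $\x\in\Delta_{n}$ must have $\x_{k}\geq 1-\epsilon>2/3$ (hence reveals $k$ via the arg-max), and the same reduction to the BBBV unstructured-search lower bound. Your extra remark about simulating $O_{\A^{(k)}}$ with $O(1)$ calls to a marked-element oracle is a fine (slightly more explicit) way of justifying the step the paper states tersely.
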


\begin{proof}
Assume that there exists an $k\in\range{n}$ such that
\begin{align}\label{eqn:zero-sum-lower-construction}
\A_{ki}=\left\{
    \begin{array}{ll}
      1 & \text{if } i\neq k \\
      0 & \text{if } i=k
    \end{array} \right.\qquad
\A_{ik}=\left\{
    \begin{array}{ll}
      -1 & \text{if } i\neq k \\
      0 & \text{if } i=k
    \end{array} \right.\qquad
\A_{ij}=0\quad\text{if } i,j\neq k.
\end{align}
Denote $\x=(\x_{1},\ldots,\x_{n})^{\dagger}$. Then \eqn{zero-sum-lower-construction} implies that
\begin{align}
\forall\,i\neq k,\ (\A\x)_{i}&=\sum_{j=1}^{n}\A_{ij}\x_{j}=\A_{ik}\x_{k}=-\x_{k}; \label{eqn:zero-sum-lower-construction-1} \\
(\A\x)_{k}&=\sum_{j=1}^{n}\A_{kj}\x_{j}=\sum_{j\neq k}\x_{j}. \label{eqn:zero-sum-lower-construction-2}
\end{align}
In order to have \eqn{zero-sum-game-x-rewrite}, we need to have $-\x_{k}\leq\epsilon$ and $\sum_{j\neq k}\x_{j}\leq\epsilon$ by \eqn{zero-sum-lower-construction-1} and \eqn{zero-sum-lower-construction-2}, respectively. Because $\x_{j}\geq 0$ for all $j\in\range{n}$ and $\sum_{j=1}^{n}\x_{j}=1$, they imply that $1-\epsilon\leq \x_{k}\leq 1$ and $0\leq \x_{j}\leq\epsilon$ for all $j\in\range{n}/\{k\}$. Therefore, if we could return an $\x\in\Delta_{n}$ satisfying \eqn{zero-sum-game-x-rewrite} with probability at least $2/3$, then we become aware of the value of $k$. On the other hand, this is a search problem on the entries of $\A$, which requires $\Omega(\sqrt{n})$ quantum queries to $O_{\A}$ by the quantum lower bound for search \cite{bennett1997strengths}. In all, this implies that the cost of solving the zero-sum game takes at least $\Omega(\sqrt{n})$ quantum queries or gates.
\end{proof}

%%%%%%%%%%%%%%%%%%%%%%%%%%%%%%%%%%%%%%%%%%%%%%%%%%%%%%%%%%%%%%%%%%%%%%%%%%%%%%

\section{Conclusion}\label{sec:conclusion}
We give quantum algorithms for training linear and kernel-based classifiers with complexity $\tilde{O}(\sqrt{n}+\sqrt{d})$, where $n$ and $d$ are the number and dimension of data points, respectively; furthermore, our quantum algorithms are optimal as we prove matching $\Omega(\sqrt{n}+\sqrt{d})$ quantum lower bounds. Our quantum algorithms take standard entry-wise inputs and give classical outputs with succinct representations. Technically, our result is a demonstration of quantum speed-ups for sampling-based classical algorithms using the technique of amplitude amplification and estimation.

Our paper raises a couple of natural open questions for future work. For instance:
\begin{itemize}
\item Can we improve the dependence in $\epsilon$? Recall our quantum algorithms have worst-case complexity $\tilde{O}\big(\frac{\sqrt{n}}{\epsilon^{4}}+\frac{\sqrt{d}}{\epsilon^{8}}\big)$ whereas the classical complexities in \cite{clarkson2012sublinear} are $\tilde{O}\big(\frac{n}{\epsilon^{2}}+\frac{d}{\epsilon^{2}}\big)$; as a result, the quantum algorithms perform better only when we tolerate a significant error. Technically, this is due to the errors coming from amplitude amplification and the fact that state preparation has to be implemented in all iterations. It would be interesting to check if some clever tricks could be applied to circumventing some dependences in $\epsilon$.

\item Can we solve equilibrium point problems other than classification? Recall that our results in \thm{main} are all formulated as maximin problems where the minimum is taken over $\range{n}$ and the maximum is taken over $\B_{d}$ or $\R_{d}$. It would be interesting to study other type of equilibrium point problems in game theory, learning theory, etc.

\item What happens if we work with more sophisticated data structures such as QRAM or its augmented variants? The preprocessing time of these data structures will likely be at least linear. However, it might be still advantageous to do so, for example,  to reduce the amortized complexity when one needs to perform multiple classification tasks on the same data set. 
\end{itemize}

\section*{Acknowledgements}
This work was supported in part by the U.S. Department of Energy, Office of Science, Office of Advanced Scientific Computing Research, Quantum Algorithms Teams program. TL also received support from IBM PhD Fellowship and QISE-NET Triplet Award (NSF DMR-1747426). XW also received support from the National Science Foundation (grants CCF-1755800 and CCF-1816695).

%%%%%%%%%%%%%%%%%%%%%%%%%%%%%%%%%%%%%%%%%%%%%%%%%%%%%%%%%%%%%%%%%%%%%%%%%%%%%%

\providecommand{\bysame}{\leavevmode\hbox to3em{\hrulefill}\thinspace}

%%%%%%%%%%%%%%%%%%%%%%%%%%%%%%%%%%%%%%%%%%%%%%%%%%%%%%%%%%%%%%%%%%%%%%%%%%%%%%

\end{document}